\renewcommand\nomgroup[1]{%
  \item[\bfseries
  \ifstrequal{#1}{M}{Matrices and Vectors}{%
  \ifstrequal{#1}{T}{Tensors}{%
  \ifstrequal{#1}{R}{Real Numbers}{}}}%
]}
\newcommand{\ten}[1]{\mathcal{#1}}
\newcommand{\mat}[1]{\boldsymbol{#1}}
\DeclareMathOperator{\vect}{vec}
\DeclareMathOperator{\ic}{IC}
\DeclareMathOperator{\spn}{span}
\DeclareMathOperator{\rank}{rank}
\DeclareMathOperator{\tr}{tr}
\newcommand{\as}{\ensuremath{\,\text{a.s.}}}
\newcolumntype{P}[1]{>{\centering\arraybackslash}p{#1}}
\DeclareMathOperator*{\argmin}{argmin}
\newtheorem{definition}{Definition}
\newtheorem{assumption}{Assumption}
\newtheorem{proposition}{Proposition}
\newtheorem{lemma}{Lemma}
\newtheorem{corollary}{Corollary}
\newtheorem{theorem}{Theorem}
\newtheorem{remark}{Remark}
\newcommand{\h}[1]{\boldsymbol{#1}}
\newcommand{\ha}{\h{a}}
\newcommand{\hu}{\h{u}}
\newcommand{\hv}{\h{v}}
\newcommand{\hA}{\h{A}}
\newcommand{\hX}{\h{X}}
\newcommand{\cX}{{\mathcal X}}
\newcommand{\cY}{{\mathcal Y}}
\newcommand{\cE}{{\mathcal E}}
\title{\textbf{Multi-linear Tensor Autoregressive Models}}
\author{Zebang Li and Han Xiao}
\affil{Rutgers University}
\date{}
\begin{document}

\maketitle

\begin{abstract}
Contemporary time series analysis has seen more and more tensor type data, from many fields. For example, stocks can be grouped according to Size, Book-to-Market ratio, and Operating Profitability, leading to a 3-way tensor observation at each month. We propose an autoregressive model for the tensor-valued time series, with autoregressive terms depending on multi-linear coefficient matrices. Comparing with the traditional approach of vectoring the tensor observations and then applying the vector autoregressive model, the tensor autoregressive model preserves the tensor structure and admits corresponding interpretations. We introduce three estimators based on projection, least squares, and maximum likelihood. Our analysis considers both fixed dimensional and high dimensional settings. For the former we establish the central limit theorems of the estimators, and for the latter we focus on the convergence rates and the model selection. The performance of the model is demonstrated by simulated and real examples.
\\KEYWORDS: Autoregressive; Multi-linear; Multivariate Time Series; Tensor-valued Time Series; Model Selection; Prediction.

\end{abstract}
\newpage

\section{Introduction}\label{introduction}

In many fields and applications, multiple observations are generated and recorded with respect to the time. Traditional approach for dealing with multivariate time series data typically stacks all the observations at one time into a vector, and models the temporal dynamics of the vector data. On the other hand, very often these observations have a finer structure, and can be represented as a matrix or tensor. For example, stocks can be grouped according to the Fama-French factors Size, Book-to-Market Ratio, and Operating Profitability, leading to a 3-way tensor observation at each month. The import/export among different countries at each quarter can also be conveniently represented as a matrix. The columns/rows/tubes of the matrix/tensor actually correspond to different ways of grouping the observations. By treating these observations as a vector, one loses the grouping information and can miss a better understanding of the dynamics of the data. It is therefore interesting and important to explore the possibility and advantage of preserving the matrix/tensor form of the data. In response to such an imperative, we propose a tensor autoregressive model, which maintains the tensor form through multi-linear autoregressive terms, and admits corresponding interpretations. We also consider the model selection and estimation when the tensors are themselves of large dimensions.


There has been a surge of interest on high dimensional time series analysis. Most of them are based on sparse VAR models \citep{basu2015, davis2016, han15, kock2015, lin2017, loh2012, melnyk2016, nicholson2017}. \cite{Guo2016} studied a class of VAR with banded coefficient matrices. \cite{Basu2019} and \cite{lin2020} considered a VAR whose coefficient matrix is the sum of a sparse and a low rank matrices. \cite{hall2019} introduced the generalized VAR model.  and \cite{Ghosh2019} studied a VAR model from the Bayesian perspective.  

\cite{hoff2015multilinear} first introduced the multilinear form of the regression model for longitudinal data, \cite{ding2016matrix} studied matrix-variate regression models. For matrix and tensor time series, \cite{wang2019factor} proposed the matrix factor models, \cite{Chen2019FactorMF} and \cite{han2020tensor} considered the tensor factor models. \cite{chen2020autoregressive} introduced the autoregressive model for matrix time series, and \cite{xiao2021} studied the matrix autoregressive model (MAR) with low rank coefficient matrices. \cite{wang2021high} proposed an autoregressive model for tensor time series, with a coefficient tensor whose order is twice of that of the observations. Although this is equivalent to the VAR, they further imposed low rank conditions on the coefficient tensor.

In this paper, we generalize the MAR model proposed by \cite{chen2020autoregressive} to a tensor autoregressive model (TenAR), for the applications where the observed time series at each time point is a tensor. We also consider the TenAR model with multiple terms and multiple lagged terms, which offers a more comprehensive modeling framework for various applications. Besides, while \cite{chen2020autoregressive} and \cite{xiao2021} were only concerned with the fixed-dimensional setting, our analysis is also carried out under the high dimensional paradigm, allowing the tensor dimensions to grow with the sample size.

For the estimation of the coefficient matrices, we introduce three estimators, based on projection, least squares and likelihood respectively. Both the least squares estimator (LSE) and the maximum likelihood estimator (MLE) are obtained by iterative algorithms, alternating over the involved parameter matrices. Our empirical analysis reveals that the LSE and MLE require a good initial value for the alternating algorithm. Although the projection estimator is usually less efficient, it serves a good initializer for the other two estimation procedures. We establish the asymptotic normality for all the estimators when the tensor dimension is fixed, and show the convergence rates under the high dimensional setting.


The general TenAR model involves multiple lagged terms (referred to as the {\it order} of the model), and for each lagged term, it can have multiple multi-linear terms (referred to as the {\it K-rank}). To select the order and the K-rank, we propose an extended Bayesian information criterion, and establish the model selection consistency under both fixed and high dimensional settings.

The rest of the paper is organized as follows. 
We introduce the tensor autoregressive model in Section~\ref{Autoregressive}, discuss its basic properties, and provide some model interpretations. The estimation procedures are presented in Section~\ref{estimation}. Asymptotic properties of the estimators will be considered in Section~\ref{asymptotics}. Section~\ref{sec:selection} considers the model selection using the extended BIC. In Section~\ref{numerical}, we carry out extensive numerical studies to demonstrate the performance of the model and compare different estimators. We also apply the TenAR model to a tensor time series of Fama-French portfolios. All the proofs and some additional figures are collected in Appendix. 


\section{Autoregressive Models for Tensor-Valued Time Series}\label{Autoregressive}

\subsection{Basics about Tensor}\label{basic}

To fix notations, we briefly introduce some basic concepts and operations about tensors. For a more thorough account on various aspects of tensors, see \cite{kolda2009tensor} and \cite{ND2017}.

In this paper we use script capital letter for tensors, capital letters in boldface for matrices, and lower-case letters in boldface for vectors. A tensor $\mathcal{X}=:\{\mathcal X_{i_1 \ldots i_K}\}$ is a multidimensional array, where $1\leq i_k\leq d_K$. The number of dimensions $K$ is called the {\it order} of $\mathcal X$, also known as the number of modes. In particular, a matrix ${\hX}$ is a tensor of order 2. 
The vector $(\mathcal X_{i_1 \ldots i_K})_{1\leq i_k\leq d_k}$ with all indices except $i_k$ fixed is called a {\it mode-$k$ tube} of $\mathcal X$. In particular, for a matrix $\hX$, a column is a mode-1 fiber and a row is a mode-2 fiber. {\it Slices} are two-dimensional sections of a tensor, defined by fixing all but two indices. For example, the frontal slices of a third order tensor $\mathcal{X}$, fixing the first index, are denoted by $\mathcal{X}_{::i_3}$, $i_3=1,\ldots,d_3$.

The {\it tensor mode product} is the product of a tensor and a matrix along a mode. Specifically, suppose $\hA \in \mathbb{R}^{\tilde{d}_k \times d_k}$, the {\it mode-$k$ product} of $\mathcal X$ and $\hA$, denoted by $\mathcal{X} \times_k \hA$, is an order-$K$ tensor of dimensions $d_1 \times \cdots \times d_{k-1} \times \tilde{d}_k \times d_{k+1} \times \cdots \times d_K$, defined by $$(\mathcal{X} \times_k \hA)_{i_1 \ldots i_{k-1} j i_{k+1} \ldots i_K} = \sum_{i_k = 1}^{d_k} \cX_{i_1 \ldots i_k \ldots i_K} \hA_{j i_k}.$$

The {\it tensor generalized inner product} of two tensors $\cX \in \mathbb{R}^{d_1 \times d_2 \times \cdots \times d_K}$ and  $\cY \in \mathbb{R}^{d_1 \times d_2 \times \cdots \times d_L}$ with $K \ge L$, denoted by $\langle \cX, \cY \rangle$, is an order-$(K-L)$ tensor in $\mathbb{R}^{d_{L+1} \times \cdots \times d_K}$ defined by $$\langle \cX, \cY \rangle_{i_{L+1}\ldots i_{K}} = \sum_{i_1=1}^{d_1}\sum_{i_2=1}^{d_2}\cdots\sum_{i_L=1}^{d_L} \cX_{i_1 i_2 \ldots i_L i_{L+1} \ldots i_K} \cY_{i_1 i_2 \ldots i_L},$$
where $1 \le i_{L+1} \le d_{L+1}$, $\cdots$, $1 \le i_{K} \le d_{K}$. In particular, when $m=K$, $\langle \cX, \cY \rangle$ is called the tensor inner product. The Frobenius norm of any tensor $\cX$ is defined as $\|\cX\|_F=\sqrt{\langle \cX, \cX \rangle}$. 

Let $\ha_k=(a_{k,i_k})_{1\leq i_k\leq d_k}$ be $d_k$ dimensional vectors for $1\leq k\leq K$, the outer product of $\ha_1,\ldots,\ha_K$, denoted by $\ha_1 \circ \ha_2 \cdots \circ \ha_K$, is a $d_1\times\cdots\times d_K$ tensor whose $(i_1\ldots i_K)$-th element equals $a_{1,i_1}\cdot\cdots\cdot a_{K,i_K}$. Such a tensor, if nonzero, is called a rank-one tensor. 
The CP rank of $\mathcal{X}$ is the minimum number of rank-one tensors needed to produce $\mathcal{X}$ as their sum \citep{carroll1970analysis,harshman1970foundations}. 

\textbf{Notations}. The discussion of tensors and tensor models involves many notations. For easy references, we provide a list of notations in Appendix. Here we highlight some of them that are frequently used. Throughout the paper, $\circ$ denotes the outer product, and $\otimes$ denotes the Kronecker product.
The Frobenius norm, denoted by $\|\cdot\|_F$, can be extended from matrices to tensors, as the square root of the sum of squared entries. We use $\|\cdot\|_s$ to denote the matrix spectral norm, $\lambda_i(\cdot)$ the eigenvalues, $s_i(\cdot)$ the singular values and $\rho(\cdot)$ the spectral radius. We use $\hA_{i_1 i_2}$ and $\cX_{i_1 \ldots i_K}$ to denote the entries of the matrix/tensor. But when other indicies also appear in the subscript, we will use $\hA[i_1,i_2]$ and $\cX[i_1,\ldots,i_K]$ instead.
The notations $\propto$ denotes that two vectors/matrices/tensors are proportional to each other. For any integer $m>0$, $[m]:=\{1,2,\ldots,m\}$.

\subsection{Tensor Autogressive Models}\label{sec:tensor}

Consider a tensor time series $\{\cX_t\}$, where at each time $t$, an order-$K$ tensor $\mathcal{X}_t \in  \mathbb{R}^{d_1 \times d_2 \times \cdots \times d_K}$ is observed. We first introduce the tensor autoregressive model of the form
\begin{equation}\label{TAR1}
\mathcal{X}_t = \sum_{r=1}^R \mathcal{X}_{t-1} \times_{1}  \hA_1^{(r)} \times_{2}  \cdots \times_{K} \hA_K^{(r)} + \mathcal{E}_t,
\end{equation}
where $\hA_k^{(r)} \in \mathbb{R}^{d_k \times d_k}$ are coefficient matrices, and $\mathcal{E}_t \in  \mathbb{R}^{d_1 \times d_2 \times \cdots \times d_K}$ is a tensor white noise satisfying $\mathrm{Cov}(\cE_t,\cE_s)=\boldsymbol{0}$ whenever $s\neq t$. Note that only the lag-1 term $\cX_{t-1}$ appears on the right hand side, so we refer to \eqref{TAR1} as an order-1 model, abbreviated as TenAR(1), following the terminology of time series analysis. On the other hand, allowing multiple terms (all involving $\cX_{t-1}$) can provide more flexibility for capturing the interactions among fibers of the tensor. We refer to $R$ as the rank of the model for the reason to be discussed (see \eqref{eq:Phi_CP}). For the rank-one model with $R=1$, we will drop the superscript and denote the coefficient matrices by $\hA_k$ for simplicity. When $K=2$ and $R=1$, the TenAR(1) reduces to the matrix autoregressive model (MAR) introduced by \cite{chen2020autoregressive}.



The TenAR(1) offers a parsimonious representation of the vector autoregressive models (VAR). 
\begin{equation}
    \label{eq:var1}
      \vect(\mathcal{X}_t) = \Phi\vect(\mathcal{X}_{t-1}) +   \vect(\mathcal{E}_t).
\end{equation}
After vectorization, the model \eqref{TAR1} becomes
\begin{equation}
  \label{eq:multi_tenAR_vec}
  \vect(\mathcal{X}_t) = \left[\sum_{r=1}^{R} \hA^{(r)}_K \otimes \hA^{(r)}_{K-1} \otimes \cdots \otimes \hA^{(r)}_1 \right] \vect({\mathcal{X}_{t-1}}) + \vect(\mathcal{E}_t).
\end{equation}
In other words, the TenAR(1) model corresponds to a VAR(1) whose coefficient matrix $\Phi$ takes the form $\Phi=\sum_{r=1}^{R} \hA^{(r)}_K \otimes \hA^{(r)}_{K-1} \otimes \cdots \otimes \hA^{(r)}_1$. Note that the set of all entries in
$\hA_K \otimes \hA_{K-1} \otimes \cdots \otimes
\hA_1$ is the same as those in
$\vect(\hA_1) \circ \vect(\hA_2) \circ
\cdots \circ \vect(\hA_K)$, thus we can define a re-arrangement
operator
$\mathcal{R}: \mathbb{R}^{d\times d} \to \mathbb{R}^{d_1^2 \times
  d_2^2 \cdots \times d_K^2}$ such that
$$\mathcal{R}(\hA_{K} \otimes \hA_{K-1} \otimes \cdots \otimes \hA_1)= \vect(\hA_1) \circ \vect(\hA_2) \circ \cdots \circ \vect(\hA_K).$$
Therefore, the representation \eqref{eq:multi_tenAR_vec} indicates that the TenAR(1)
model \eqref{TAR1} can be viewed as a VAR(1) whose
coefficient matrix $\Phi$, after the rearrangement, is an order-$K$
tensor of rank $R$, i.e.
\begin{equation}
  \label{eq:Phi_CP}
  \mathcal R(\Phi) = \sum_{r=1}^R \vect\left(\hA_1^{(r)}\right) \circ
  \vect\left(\hA_{2}^{(r)}\right) \circ \cdots \circ \vect\left(\hA_K^{(r)}\right).
\end{equation}

The VAR(1) model \eqref{eq:var1} for $\vect(\cX_t)$ can be written equivalently in the tensor form
\begin{equation}
\label{eq:tenAR}
    \cX_t = \langle \mathcal{A}, \cX_{t-1} \rangle + \cE_t.
\end{equation}
where $\mathcal{A} \in \mathbb{R}^{d_1 \times \cdots \times d_K \times d_1 \times \cdots \times d_K}$ is an order-$2K$ tensor. The TenAR(1) model \eqref{TAR1} can also be represented in the form \eqref{eq:tenAR} with 
\begin{equation}
\label{eq:tenAR_A}
    \mathcal{A} = \sum_{r=1}^R \hA_1^{(r)}\circ\cdots\circ\hA_K^{(r)},
\end{equation}
where $\hA_1^{(r)}\circ\cdots\circ\hA_K^{(r)}$ is a $(d_1 \times \cdots \times d_K \times d_1 \times \cdots \times d_K)$ tensor whose $(i_1,\ldots,i_K,j_1,\ldots,j_K)$-th element is $\prod_{k=1}^K\hA_k^{(r)}[i_k,j_k]$.
Recently \cite{wang2021high} considered the tensor autoregressive model based on \eqref{eq:tenAR}, and their method hinges upon the low multi-linear rank assumption on the transition tensor $\mathcal A$. Our approach is quite different. For the TenAR(1) model \eqref{TAR1}, we do not impose any low rank conditions on the matrices $\hA_k^{(r)}$. As a result, the tensor $\mathcal A$ in \eqref{eq:tenAR_A} is not of low multi-linear ranks. Instead, the low dimensional structure of the TenAR(1) model is manifested through \eqref{eq:Phi_CP}. Furthermore, as will be illustrated in Section~\ref{interpretations}, $\hA_k^{(r)}$ capture the interactions along different mode of $\cX_t$ and admit corresponding interpretations. Therefore, our focus is on the estimation of $\hA_k^{(r)}$. \cite{wang2021high} considered the estimation of $\mathcal A$ instead.

The innovation process $\{\cE_t\}$ is assumed to be a tensor white noise, i.e. $\mathrm{Cov}(\cE_t,\cE_s)=\boldsymbol{0}$ whenever $s\neq t$. On the other hand, we allow the elements of $\cE_t$ to have concurrent dependence. Let $\Sigma_e: = \textup{Cov}(\vect(\mathcal{E}_t))$. For the least squares estimator introduced in Section~\ref{sec:als}, the only condition we require on $\Sigma_e$ is that it is nonsingular. We also consider a special form of $\Sigma_e$, 
\begin{equation}\label{kronecker cov}
    \textup{Cov}(\vect(\mathcal{E}_t)) = {\Sigma}_K \otimes {\Sigma}_{K-1} \otimes \cdots \otimes {\Sigma}_1,
\end{equation}
which will allow us to introduce the MLE in Section~\ref{sec:mle} under normality. In \eqref{kronecker cov}, each $\Sigma_i$ is a $d_i \times d_i$ symmetric positive definite matrix, $i=1,\cdots, K$. It is equivalent to assuming $\mathcal{E}_t = \mathcal{Z}_t \times_1 {\Sigma}_1^{1/2} \cdots \times_K {\Sigma}_K^{1/2}$, where all elements of $\mathcal{Z}_t$ are uncorrelated with unit variances. Intuitively, $\Sigma_i$ corresponds to mode $i$ interactions, $i=1,\cdots, K$. We will provide more background and discussion on \eqref{kronecker cov} in Section~\ref{sec:mle}.

The TenAR(1) model can be extended directly to include $p$ previous observations such as
\begin{equation}\label{multiarp}
\mathcal{X}_t = \sum_{i=1}^{p} \sum_{r=1}^{R_i} \mathcal{X}_{t-i} \times_{1} \hA_{1}^{(ir)} \times_{2}  \cdots \times_{K} \hA_{K}^{(ir)} + \mathcal{E}_t.
\end{equation}
Note that for different $\cX_{t-i}$, the number of terms $R_i$ can be different, and we use $\hA_{k}^{(ir)}$ to denote the coefficient matrix corresponding to lag $i$, term $r$ and mode $k$, $1 \le i \le p$, $1 \le r \le R_{i}$, $1 \le k \le K$. We refer to \eqref{multiarp} as the TenAR($p$) model, and $p$ the autoregressive order of the model.

From the VAR(1) representation \eqref{eq:multi_tenAR_vec}, it is immediately seen that the TenAR(1) model is causal if  $\rho\left[\sum_{r=1}^{R} \hA^{(r)}_K \otimes \hA^{(r)}_{K-1} \otimes \cdots \otimes \hA^{(r)}_1 \right]<1$. The TenAR($p$) model \eqref{multiarp} also becomes a VAR($p$) after the vectorization, through which the causality condition can be similarly given. Specifically, let $$\Phi(z)=\h{I}-\sum_{i=1}^p \left[\sum_{r=1}^{R_i} \hA^{(ir)}_K \otimes \hA^{(ir)}_{K-1} \otimes \cdots \otimes \hA^{(ir)}_1 \right]z^i.$$
Then the TenAR($p$) model is causal if $\det \Phi(z) \neq 0$ for all $|z|\leq 1$.

\subsection{Identifiability}

The multi-linear form of \eqref{TAR1} suggests that model has indeterminacy due to rescaling of coefficient matrices and permutation of terms, as seen from \eqref{eq:multi_tenAR_vec}. For this reason, we define the identifiability of the model as follows. 

\begin{definition}
    We say the model \eqref{TAR1} is {\it identified}, if 
    \begin{equation*}
    \sum_{r=1}^{R} \hA^{(r)}_K \otimes \hA^{(r)}_{K-1} \otimes \cdots \otimes \hA^{(r)}_1 = \sum_{r=1}^{\tilde R} \tilde{\hA}^{(r)}_K \otimes \tilde{\hA}^{(r)}_{K-1} \otimes \cdots \otimes \tilde\hA^{(r)}_1, \quad \tilde R\leq R 
    \end{equation*}
    implies that $\tilde R=R$ and there is permutation $\pi:[R]\rightarrow[R]$ such that
    \begin{align*}
        &\hA_k^{(r)} \propto \tilde\hA_k^{(\pi_r)}, && 1\leq k\leq K,\;1\leq r\leq R \\
        &\hA^{(r)}_K \otimes \hA^{(r)}_{K-1} \otimes \cdots \otimes \hA^{(r)}_1 = \tilde{\hA}^{(\pi_r)}_{K} \otimes \tilde{\hA}^{(\pi_r)}_{K-1} \otimes \cdots \otimes \tilde\hA^{(\pi_r)}_1, && 1\leq r\leq R.
    \end{align*}
\end{definition}

It is clear that if the model \eqref{TAR1} is identified according to this definition, and if we also require that $\|\hA_k^{(r)}\|_F=1$ for $1\leq k\leq K-1,\;1\leq r\leq R$, then each coefficient matrix $\hA_k^{(r)}$ is further identified up to a sign change.

The representations \eqref{eq:multi_tenAR_vec} and \eqref{eq:Phi_CP} help to introduce the identifiability conditions for model
\eqref{TAR1}. When $K=2$, \eqref{TAR1} becomes a multi-term MAR model,
and \eqref{eq:Phi_CP} corresponds to the singular value decomposition
of $\mathcal R(\Phi)$. To guarantee the identifiability of the
matrices $\hA_k^{(r)}$, we require that
$\tr\left[\hA_k^{(r)}\left(\hA_k^{(l)}\right)^{\prime}\right]=0$ whenever
$r\neq l$, and $\|\hA_1^{(r)}\|_F=1$ for $1\leq r\leq R$. As a result,
all the matrices $\hA_k^{(r)}$ are unique up to sign changes if the nonzero singular values of $\mathcal R(\Phi)$ are distinct. On the
other hand, for the higher order TenAR(1) model with $K\geq 3$, the classical results on the uniqueness of the tensor CP decomposition
suggest that the identifiability of $\hA_k^{(r)}$ 
is granted under the Kruskal's condition
\citep{kruskal1977three, kruskal1989rank}. 
We summarize the identifiability condition of TenAR(1) model in Propositon~\ref{kruskal}, which relies on the generalized Kruskal condition for order-$K$ tensors (\cite{sidiropoulos2000uniqueness}). 

The Kruskal condition is given through the Kruskal rank $\kappa(\hA)$ of a matrix $\hA$, which is defined as the maximum value $\kappa$ such that any $\kappa$ columns of $\hA$ are linearly independent. Let $\mathbb{A}_k:=\left[\vect\left(\hA_k^{(1)}\right),
  \vect\left(\hA_k^{(2)}\right),\cdots,\vect\left(\hA_k^{(R)}\right)
\right]$. 

\begin{proposition}\label{kruskal}
The TenAR(1) model \eqref{TAR1} is identified if any of the following holds
\begin{itemize}
    \item [(i)] $K=2$, $\tr\left[\hA_k^{(r)}\left(\hA_k^{(l)}\right)^{\prime}\right]=0$ for all $r\neq l$, $1\leq k\leq K$, and $\mathcal R(\Phi)$ has $R$ distinct nonzero singular values.
    \item [(ii)] $K\geq 3$, $\sum_{k=1}^{K} \kappa(\mathbb{A}_k) \ge 2R + K -1$.
\end{itemize}
\end{proposition}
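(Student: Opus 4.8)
The plan is to reduce both parts to known uniqueness results for tensor decompositions, via the rearrangement operator $\mathcal R$ and the representation \eqref{eq:Phi_CP}. The key observation is that the identifiability condition in the Definition is exactly a statement about the uniqueness (up to scaling of factors and permutation of rank-one terms) of the decomposition
$\mathcal R(\Phi)=\sum_{r=1}^R \vect(\hA_1^{(r)})\circ\vect(\hA_2^{(r)})\circ\cdots\circ\vect(\hA_K^{(r)})$,
together with the extra bookkeeping that each individual Kronecker factor $\hA_k^{(r)}$ is recovered (not merely the whole Kronecker product) — but this latter point is automatic, since $\vect(\hA_k^{(r)})$ is literally the $k$-th factor vector in the rank-one term, so once the rank-one CP components of $\mathcal R(\Phi)$ are pinned down up to scaling and permutation, so are the $\hA_k^{(r)}$ up to scaling and permutation, which is what the Definition demands.

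For part (ii), $K\ge 3$: I would first note that $\mathcal R(\Phi)$ is an order-$K$ tensor in $\mathbb R^{d_1^2\times\cdots\times d_K^2}$ whose CP factor matrices are precisely $\mathbb A_1,\ldots,\mathbb A_K$ (the $k$-th factor matrix has columns $\vect(\hA_k^{(r)})$, $r=1,\ldots,R$). The generalized Kruskal condition of \cite{sidiropoulos2000uniqueness} states that if $\sum_{k=1}^K\kappa(\mathbb A_k)\ge 2R+K-1$, then the rank-$R$ CP decomposition of $\mathcal R(\Phi)$ is essentially unique: any other decomposition into at most $R$ rank-one terms has exactly $R$ terms, agrees after some permutation $\pi$, and the factor vectors match up to scalars whose product over $k$ is $1$. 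Translating back through $\mathcal R$: the equality of the two Kronecker sums in the Definition is equivalent to equality of the two CP decompositions of $\mathcal R(\Phi)$ (since $\mathcal R$ is a bijection on entries and linear), so Kruskal–Sidiropoulos gives $\tilde R=R$, the permutation $\pi$, and $\vect(\hA_k^{(r)})\propto\vect(\tilde\hA_k^{(\pi_r)})$, i.e. $\hA_k^{(r)}\propto\tilde\hA_k^{(\pi_r)}$; the product-of-scalars-equals-one statement is precisely the second displayed line ($\hA_K^{(r)}\otimes\cdots\otimes\hA_1^{(r)}=\tilde\hA_K^{(\pi_r)}\otimes\cdots\otimes\tilde\hA_1^{(\pi_r)}$).

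For part (i), $K=2$: here $\mathcal R(\Phi)$ is a $d_1^2\times d_2^2$ matrix, and \eqref{eq:Phi_CP} reads $\mathcal R(\Phi)=\sum_r \vect(\hA_1^{(r)})\vect(\hA_2^{(r)})^{\prime}$. The orthogonality hypotheses $\tr[\hA_k^{(r)}(\hA_k^{(l)})^{\prime}]=0$ for $r\ne l$, $k=1,2$, say exactly that the columns of $\mathbb A_1$ are mutually orthogonal and so are the columns of $\mathbb A_2$; after normalizing (using $\|\hA_1^{(r)}\|_F=1$, and absorbing the remaining scalar into, say, $\hA_2^{(r)}$ so that $\|\hA_2^{(r)}\|_F=\sigma_r$), this is the singular value decomposition of $\mathcal R(\Phi)$ with singular values $\sigma_r$ and left/right singular vectors $\vect(\hA_1^{(r)})$, $\vect(\hA_2^{(r)})/\sigma_r$. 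When the $R$ nonzero singular values are distinct, the SVD is unique up to a joint sign flip of each left-right singular vector pair and reordering; any competing decomposition $\sum_r\vect(\tilde\hA_1^{(r)})\vect(\tilde\hA_2^{(r)})^{\prime}$ with $\tilde R\le R$ satisfying the same normalization is again such an SVD, hence forces $\tilde R=R$ and the required proportionalities and the Kronecker-product equality after a permutation. I would spell out the normalization step carefully and note that rank$(\mathcal R(\Phi))=R$ under the orthogonality plus distinctness assumptions so that there is no ambiguity from zero singular values.

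The main obstacle is not any deep argument but the careful two-way translation between the Kronecker/outer-product bookkeeping in the Definition and the standard statements of CP/SVD uniqueness — in particular verifying that ``factor vectors equal up to scalars with product $1$'' in the tensor literature matches the pair of conditions ($\hA_k^{(r)}\propto\tilde\hA_k^{(\pi_r)}$ for each $k$, and the full Kronecker product is exactly equal) in our Definition, and handling the normalization conventions ($\|\hA_k^{(r)}\|_F=1$ for $k\le K-1$) so that the ``up to sign'' refinement stated after the Definition comes out cleanly. I would also make explicit that $\kappa(\mathbb A_k)\ge 1$ requires each $\hA_k^{(r)}\ne 0$, which is implicit in the model, so that the CP rank of $\mathcal R(\Phi)$ is genuinely $R$.
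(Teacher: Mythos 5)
Your proposal is correct and matches the paper's own argument: both parts are handled by rearranging $\Phi$ via $\mathcal R$ and invoking uniqueness of the SVD (for $K=2$, under the orthogonality and distinct-singular-value hypotheses) and the generalized Kruskal condition of \cite{sidiropoulos2000uniqueness} (for $K\geq 3$). Your write-up simply spells out the translation between CP/SVD essential uniqueness and the Definition's conditions in more detail than the paper does.
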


In particular, when $K\geq 3$, if we assume that for each $k$, the matrix $\mathbb{A}_k$ is of rank $R$, then the Kruskal's condition is fulfilled. 

We have given the identifiability conditions for the TenAR(1) model in Proposition~\ref{kruskal}. For the TenAR($p$) model \eqref{multiarp}, the identifiability conditions should be imposed for each lag $i$, $1\leq i\leq p$. These identifiability conditions will be assumed for the rest of this paper.

\subsection{Model interpretations}\label{interpretations}

The autoregressive term in the TenAR model involves mode products of $\cX_{t-i}$ with the coefficient matrices. It is helpful to picture what type of temporal dependence the mode product is introducing. We use the Fama-French portfolio as an example. The stocks are allocated to two Size (according to market equity) groups (Small and Big), four B/M (Book-to-Market ratio) groups (low B/M, mid1 B/M, mid2 B/M, high B/M), and four OP (Operating Profitability) groups (low OP, mid1 OP, mid2 OP, high OP). This cross allocation leads to $2\times 4\times 4=32$ groups. A portfolio is constructed for each group, and at each month, the returns of these 32 portfolios are recorded in a $2\times 4\times 4$ tensor $\cX_t$. Consider the TenAR(1) model
and assume $\hA_1=\hA_2=\boldsymbol{I}$, leading to the simplified model
\begin{equation*}
\mathcal{X}_t = \mathcal{X}_{t-1} \times_{1} \hA_3 + \mathcal{E}_t.
\end{equation*}
We use the 4-th frontal slices $\cX_{::4}$ as an example, its conditional expectation is given by a linear combination of 4 frontal slices of $\cX_{t-1}$, as illustrated in Figure~\ref{Horizontal}.
\begin{figure}[!ht]
    \centering
  
\tikzset {_pl31428i3/.code = {\pgfsetadditionalshadetransform{ \pgftransformshift{\pgfpoint{0 bp } { 0 bp }  }  \pgftransformrotate{0 }  \pgftransformscale{2 }  }}}
\pgfdeclarehorizontalshading{_em3jn3dzr}{150bp}{rgb(0bp)=(1,1,1);
rgb(37.5bp)=(1,1,1);
rgb(37.5bp)=(0.9,0.9,0.9);
rgb(100bp)=(0.9,0.9,0.9)}

  
\tikzset {_2za1f41r5/.code = {\pgfsetadditionalshadetransform{ \pgftransformshift{\pgfpoint{0 bp } { 0 bp }  }  \pgftransformrotate{0 }  \pgftransformscale{2 }  }}}
\pgfdeclarehorizontalshading{_djc0tfuow}{150bp}{rgb(0bp)=(1,1,1);
rgb(37.5bp)=(1,1,1);
rgb(37.5bp)=(0.9,0.9,0.9);
rgb(100bp)=(0.9,0.9,0.9)}

  
\tikzset {_6sipefcep/.code = {\pgfsetadditionalshadetransform{ \pgftransformshift{\pgfpoint{0 bp } { 0 bp }  }  \pgftransformrotate{0 }  \pgftransformscale{2 }  }}}
\pgfdeclarehorizontalshading{_g5pk77cnr}{150bp}{rgb(0bp)=(1,1,1);
rgb(37.5bp)=(1,1,1);
rgb(37.5bp)=(0.9,0.9,0.9);
rgb(100bp)=(0.9,0.9,0.9)}

  
\tikzset {_9q8acnjiu/.code = {\pgfsetadditionalshadetransform{ \pgftransformshift{\pgfpoint{0 bp } { 0 bp }  }  \pgftransformrotate{0 }  \pgftransformscale{2 }  }}}
\pgfdeclarehorizontalshading{_b2v10zbtf}{150bp}{rgb(0bp)=(1,1,1);
rgb(37.5bp)=(1,1,1);
rgb(37.5bp)=(0.9,0.9,0.9);
rgb(100bp)=(0.9,0.9,0.9)}

\tikzset{every picture/.style={line width=0.75pt}} 

\begin{tikzpicture}[x=0.75pt,y=0.75pt,yscale=-0.8,xscale=0.8]

\begin{scope}
\path  [shading=_g5pk77cnr,_6sipefcep] (470,70) -- (620,70) -- (620,220) -- (470,220) -- cycle ; 
\draw   (470,70) -- (620,70) -- (620,220) -- (470,220) -- cycle ;
\end{scope}

\begin{scope}
\path  [shading=_djc0tfuow,_2za1f41r5] (410,100) -- (560,100) -- (560,250) -- (410,250) -- cycle ; 
\draw   (410,100) -- (560,100) -- (560,250) -- (410,250) -- cycle ;
\end{scope}

\begin{scope}
\path  [shading=_em3jn3dzr,_pl31428i3] (350,130) -- (500,130) -- (500,280) -- (350,280) -- cycle ; 
\draw   (350,130) -- (500,130) -- (500,280) -- (350,280) -- cycle ;
\end{scope}

\begin{scope}
\path   [shading=_b2v10zbtf,_9q8acnjiu] (290,160) -- (440,160) -- (440,310) -- (290,310) -- cycle ; 
\draw   (290,160) -- (440,160) -- (440,310) -- (290,310) -- cycle ;
\end{scope}

\begin{scope}
\path  [shading=_g5pk77cnr,_6sipefcep] (50,120) -- (200,120) -- (200,270) -- (50,270) -- cycle ;
\draw   (50,120) -- (200,120) -- (200,270) -- (50,270) -- cycle ;
\end{scope}

\draw (445,290) node [anchor=north west][inner sep=0.75pt]   [align=left] {$\times \hA_3[1,1] +$};
\draw (500,260) node [anchor=north west][inner sep=0.75pt]   [align=left] {$\times \hA_3[1,2] +$};
\draw (560,230) node [anchor=north west][inner sep=0.75pt]   [align=left] {$\times \hA_3[1,3] +$};
\draw (620,200) node [anchor=north west][inner sep=0.75pt]   [align=left] {$\times \hA_3[1,4]$};
\draw (240,140) node [anchor=north west][inner sep=0.75pt]   [align=left] {low OP};
\draw (280,115) node [anchor=north west][inner sep=0.75pt]   [align=left] {mid1 OP};
\draw (330,90) node [anchor=north west][inner sep=0.75pt]   [align=left] {mid2 OP};
\draw (380,70) node [anchor=north west][inner sep=0.75pt]   [align=left] {high OP};
\draw (50,100) node [anchor=north west][inner sep=0.75pt]   [align=left] {low OP};
\draw (120,280) node [anchor=north west][inner sep=0.75pt]   [align=left] {$t$};
\draw (350,320) node [anchor=north west][inner sep=0.75pt]   [align=left] {$t-1$};
\draw (225,200) node [anchor=north west][inner sep=0.75pt]   [align=left] {$=$};

\end{tikzpicture}
    \caption{Linear combinations of frontal slices}\label{Horizontal}
    \label{fig4}
\end{figure}
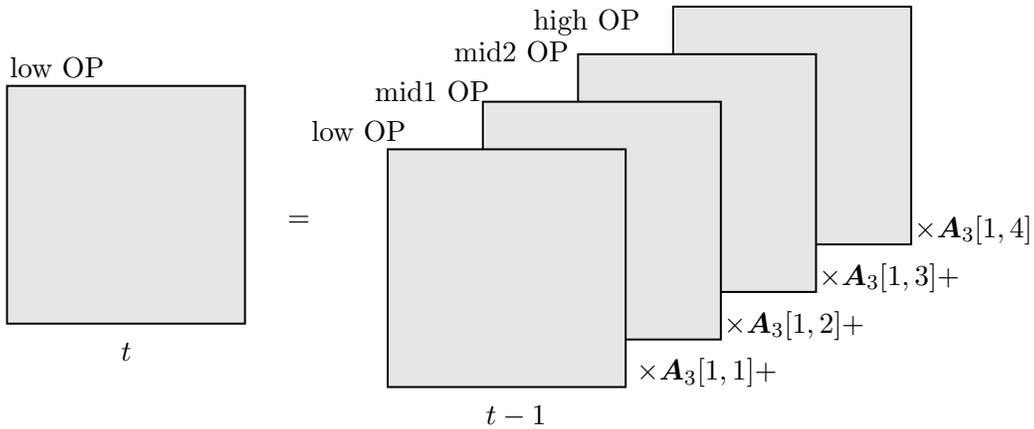

If we take $\hA_1=\hA_3=\boldsymbol{I}$ or $\hA_2=\hA_3=\boldsymbol{I}$, there are similar interpretations. In general, the mode-$k$ product gives linear combinations of mode-$k$ slices of $\cX_{t-1}$. In the full TenAR(1) model, the interactions along all modes are mixed up, and the multiple terms allow interactions along different directions, giving a more comprehensive modeling capacity.

\section{Estimation}\label{estimation}

\subsection{Alternating least squares}
\label{sec:als}

For the estimation, we first consider the least squares method. To fix ideas, we begin with the one-term ($R=1$) TenAR(1) model. The extension to multi-term TenAR(1) and to TenAR($p$) is relatively straightforward, and will be given at the end of this section. The least squares estimator (LSE), denoted by $\hat{\hA}_1, \cdots, \hat{\hA}_K$,  is the solution of the minimization problem
\begin{equation}\label{lse}
\min_{\hA_1,\cdots,\hA_K}\sum_{t=2}^T\|\mathcal{X}_t - \mathcal{X}_{t-1} \times_{1} \hA_1 \cdots \times_{K} \hA_K\|^2_F.
\end{equation}

The optimization problem (\ref{lse}) is not convex, due to its multi-linear form. We propose to use the alternating least squares to solve it: updating one $\hA_k$, while holding others fixed. To give details of the algorithm, we need to perform the tensor matricization operation, which, also known as {\it unfolding} or {\it flattening}, transforms a tensor into a matrix along a mode. Suppose $\cX\in\mathbb{R}^{d_1\times\cdots\times d_K}$. Denote $d=\prod_{k=1}^K d_k$ and $d_{-k}=d/d_k$. The {\it mode-$k$ matricization}, denoted by $\boldsymbol{X}_{(k)} \in \mathbb{R}^{d_k \times d_{-k}}$, is obtained by assembling all mode-$k$ fibers of $\mathcal{X}$ as columns of the matrix $\boldsymbol{X}_{(k)}$. Specifically, the tensor element $(i_1, i_2, \cdots, i_K)$ of $\mathcal{X}$ maps to the matrix element $(i_k,j)$ of $\boldsymbol{X}_{(k)}$ where
$$j = 1 + \sum_{\substack{s = 1 \\ s\neq k}}^{K} (i_s - 1)J_s \ \ \text{with} \ \ J_s = \prod_{\substack{l=1 \\ l\neq k}}^{s-1} d_l,\;\hbox{ and } J_1=1.$$
After the mode-$k$ matricization, the one term TenAR(1) model becomes
$$
\boldsymbol{X}_{t(k)} = \hA_k \underbrace{\boldsymbol{X}_{t-1,(k)} (\hA_K \otimes \cdots \otimes \hA_{k+1} \otimes \hA_{k-1} \cdots \otimes \hA_1 )^{\prime}}_{\h{W}_{t-1,(k)}}+\h{E}_{t(k)}.$$
If the matrices $\hA_1,\ldots,\hA_{k-1},\hA_{k+1},\ldots,\hA_K$ are given, the optimization over $\hA_k$ corresponds to a multivariate linear regression, in view of the preceding equation. Therefore, $\hA_k$ can be updated as
\begin{equation*}
    \hA_k \gets \left(\sum_{t} \boldsymbol{X}_{t(k)} \boldsymbol{W}^{\prime}_{t-1,(k)} \right) \left(\sum_{t} \boldsymbol{W}_{t-1,(k)} \boldsymbol{W}^{\prime}_{t-1,(k)} \right)^{-1}.
\end{equation*}

The alternating least squares algorithm update $\hA_k$ one by one iteratively until convergence. Since for each iteration, the sum of squared errors is reduced, so the algorithm is guaranteed to converge. However, the iterative algorithm often converges to a local minima. In practice, we suggest to use the projection estimator to be discussed in \ref{sec:proj} as the initial values of the alternating least squares. Our simulation experiment in Section~\ref{local} conforms that this initialization often leads to satisfactory performance.

Next we introduce the LSE estimator of the multi-term TenAR($p$) model \eqref{multiarp}, given by
\begin{equation}\label{lsearp}
    \left(\hat{\hA}^{(11)}_1, \cdots,  \hat{\hA}^{(pR_p)}_K\right) = \argmin_{\hA^{(11)}_1,\cdots,\hA^{(pR_p)}_K} \sum_t
    \left\| \mathcal{X}_t -\sum_{i=1}^{p} \sum_{r=1}^{R_i} \mathcal{X}_{t-i} \times_{1} \hA_{1}^{(ir)} \times_{2} \cdots \times_{K} \hA_{K}^{(ir)} \right\|_F^{2}.
\end{equation}
Let $\Phi_{k}^{(ir)} := \hA^{(ir)}_{K} \otimes \cdots \otimes \hA^{(ir)}_{k+1} \otimes \hA^{(ir)}_{k-1} \otimes \cdots \otimes \hA^{(ir)}_{1}$, and $\h{W}_{t(k)}^{(ir)}:=\hX_{t(k)}\left( \Phi_{k}^{(ir)}\right)^\prime$. Given all other coefficient matrices, $\hA_k^{(ir)}$ is updated by
\begin{equation*}
    \hA^{(ir)}_{k} \gets \sum_{t}\left[ \left( \boldsymbol{X}_{t(k)}  - \sum_{(j,l) \neq (i,r)} \hA_{k}^{(jl)} \h{W}_{t-1,(k)}^{(jl)} \right) \left(  \h{W}_{t-1,(k)}^{(ir)}\right)^\prime\right] \left[\sum_{t}  \h{W}_{t-1,(k)}^{(ir)}\left(\h{W}_{t-1,(k)}^{(ir)}\right)^\prime \right]^{-1}
\end{equation*}
Then \eqref{lsearp} is solved by updating $\hA_{k}^{(ir)}$ iteratively until convergence.

\subsection{MLE when \texorpdfstring{$\mathrm{Cov}(\vect(\cE_t))$}{cov} is separable}
\label{sec:mle}

For VAR models, the least squares estimator is also the conditional MLE under normality. Although the TenAR model can be represented in the VAR form (e.g. \eqref{eq:multi_tenAR_vec}), the coefficient matrix $\Phi$ is not a free parameter, but a sum of a few Kronecker products. As a result, the LSE is no longer the MLE for TenAR models, and the MLE is very difficult to compute. However, if we assume in addition that $\mathrm{Cov}(\vect(\cE_t))$ has the form \eqref{kronecker cov}, then the MLE can be obtained through an alternating algorithm. The covariance structure \eqref{kronecker cov} has been proposed and studied from various aspects in the literature \citep{allen2010transposable,hoff2011separable,tsiligkaridis2013covariance,zhou2014gemini,hafner2020estimation,linton2019estimation}. Following the terminology in spatial statistics \citep{cressie2015statistics}, we say the covariance matrix of the form \eqref{kronecker cov} is {\it separable}.  \cite{chen2020autoregressive} also considered the MAR model under this covariance structure.


We denote the MLE of the TenAR($p$) model under normality by $\tilde \hA_{k}^{(ir)}$ and $\tilde\Sigma_k$. To simplify many long equations involved in the discussion, we introduce some notations 
\begin{align*}
\mathcal{R}_t &= \mathcal{X}_{t} - \sum_{i=1}^{p} \sum_{r=1}^{R_i} \mathcal{X}_{t-1} \times_{1} \hA_{1}^{(ir)} \times_{2} \cdots \times_{K} \hA_{K}^{(ir)}, \\
     \Phi_{k}^{(ir)} &= \hA_K^{(ir)} \otimes \cdots \otimes \hA_{k+1}^{(ir)} \otimes \hA_{k-1}^{(ir)} \otimes \cdots \otimes \hA_{1}^{(ir)}, \\
     \boldsymbol{S}_k &= {\Sigma}_K \otimes \cdots  \otimes {\Sigma}_{k+1} \otimes {\Sigma}_{k-1} \otimes  \cdots \otimes {\Sigma}_1. 
\end{align*}
The log likelihood under normality can be written as, for any $1 \le k \le K$,
\begin{equation}
    -\frac{1}{2} (T-1)\log 2 \pi - \frac{1}{2}\sum_{k=1}^K (T-1) (\prod_{i \neq k}d_i) \log|\Sigma_k| - \frac{1}{2} \sum_{t} \text{tr} [{\Sigma}_k^{-1} \mathcal{R}_{t(k)} \boldsymbol{S}_k^{-1} \mathcal{R}_{t(k)}^{\prime}].
\end{equation}
The gradient conditions for ${\Sigma}_k$ and $\hA^{(ir)}_{k}$ are given by
\begin{align*}
    \sum_{t} \boldsymbol{R}_{t(k)} \left( \boldsymbol{S}_{k}^{-1} {\Phi^{(ir)}_k} {\boldsymbol{X}_{t-i,(k)}}^{\prime}  \right) &= 0, \\
    (T-1)\prod_{l \neq k} d_l{\Sigma}_k - \sum_t \boldsymbol{R}_{t(k)} \boldsymbol{S}_k^{-1} \boldsymbol{R}_{t(k)}^{\prime} &= 0.
\end{align*}
Therefore, when all other parameters are given, $\hA^{(ir)}_{k}$ and ${\Sigma}_k$ are updated by,
\begin{align*}
        \hA^{(ir)}_{k} \gets &
         \sum_{t}  \left[ \left( \boldsymbol{X}_{t(k)}  - \sum_{l=1}^{p} \sum_{(l,j) \neq (i,r)} \hA_{k}^{(lj)} \boldsymbol{X}_{t-l,(k)} {\Phi^{(lj)}_k}^{\prime}  \right) \boldsymbol{S}_{k}^{-1}{\Phi^{(ir)}_k} {\boldsymbol{X}_{t-i,(k)}}^{\prime} \right] \\ &\left[\sum_{t} \boldsymbol{X}_{t-i,(k)} {\Phi^{(ir)}_k}^{\prime}\boldsymbol{S}_{k}^{-1} {\Phi^{(ir)}_k} {\boldsymbol{X}_{t-i,(k)}}^{\prime}  \right]^{-1}, \\ 
    {\Sigma}_k \gets & \frac{\sum_t  \mathcal{R}_{t(k)} \boldsymbol{S}_k^{-1} \mathcal{R}_{t(k)}^{\prime}}{(T-1)\prod_{l \neq k} d_l}.
\end{align*}
The MLE is then obtained by updating $\hA^{(ir)}_{k}$ and ${\Sigma}_k$ iteratively until convergence. Similar to the alternating least squares algorithm, the algorithm for MLE also requires a good initialization, which we discuss next.

\subsection{Initialization of the algorithm}
\label{sec:proj}

Alternating algorithms for LSE and MLE are convenient and easy to implement, but they are not guaranteed to converge to a global minimum, only to a solution where the objective function ceases to decrease. As a result, either the algorithm can take many iterations to converge, or the final solution can be heavily dependent on the initial values. For example, our simulations show that some randomly chosen initial values can lead to very poor estimates. Therefore, it is crucial to choose initial values properly for the alternating algorithms.

Our approach is to use the projection estimators as initial values. We shall discuss the initialization of the TenAR(1) model, while the extension to TenAR($p$) is relatively straightforward. We first fit the VAR(1) model \eqref{eq:var1} to $\vect(\cX_t)$ and obtain the LSE $\check\Phi$ of the coefficnet matrix $\Phi$. Since the TenAR(1) model in (\ref{TAR1}) can be viewed as a structured VAR(1) model in (\ref{eq:multi_tenAR_vec}), the projection estimators are obtained by projecting $\hat{\Phi}$ onto the space of Kronecker products under the Frobenius norm: 
\begin{equation}\label{multipro}
    (\bar{\hA}^{(1)}_1, \cdots,\bar{\hA}^{(R)}_K) = \argmin_{\hA^{(1)}_1,\cdots,\hA^{(R)}_K} \left\| \hat{\Phi} - \sum_{r=1}^{R} \hA^{(r)}_K \otimes  \cdots \otimes \hA^{(r)}_1 \right\|_F^2.
\end{equation}
When $R=1$, $K=2$ this minimization problem is called \textit{the nearest Kronecker product} (NKP) problem in matrix computation \citep{van2000ubiquitous, van1993approximation}, which can be solved by rearrangement and SVD decomposition. More generally, when $K\geq 2$, 
there exist a rearrangement operator $\mathcal{R} : \mathbb{R}^{d_1 d_2\cdots d_K \times d_1 d_2 \cdots d_K} \to \mathbb{R}^{d_1^2 \times d_2^2 \cdots \times d_K^2}$ such that
\begin{equation*}
    \mathcal{R}\left(\sum_{r=1}^{R} \h{A}^{(r)}_K \otimes  \cdots \otimes \h{A}^{(r)}_1\right)=\sum_{r=1}^{R} \h{a}^{(r)}_1 \circ  \cdots \circ \h{a}^{(r)}_K,
\end{equation*}
where $\h{a}^{(r)}_k=\vect(\h{A}^{(r)}_k)$, $1 \le r \le R$, $1 \le k \le K$. The explicit formula of the rearrangement operator is given in Appendix. After the rearrangement, the optimization \eqref{multipro} becomes a problem of finding the best rank-$R$ approximation of the $K$-way tensor $\mathcal{R}(\hat{\Phi})$:
\begin{equation}\label{multiprovec}
\left\|\check{\Phi} - \sum_{r=1}^{R} \h{A}^{(r)}_K \otimes  \cdots \otimes \h{A}^{(r)}_1 \right\|_F^2=\left\|\mathcal{R}(\check{\Phi}) - \sum_{r=1}^{R} \h{a}^{(r)}_1 \circ  \cdots \circ \h{a}^{(r)}_K\right\|_F^2.
\end{equation}
It is well known that the best low rank approximation may not exist for tensors of orders higher than or equal to 3 \citep{de2008tensor, krijnen2008non, Stegeman}. 
Furthermore, the alternating least squares algorithm is not guaranteed to converge to the global minimum \citep{kolda2009tensor}, even when it does exist. 
On the other hand, \cite{anandkumar2014guaranteed} and \cite{sun2017provable} provided local and global convergence guarantees for recovering CP tensor decomposition when the tensor components are incoherent, which can be viewed as a soft-orthogonality constraint. We suggest to use these methods to find the best low rank approximation. 
In Theorem~\ref{proj_arp_multi} in Appendix, we establish the central limit theorem for the estimators $\bar\hA_k$ based on the one-term TenAR(1) model. For the general multi-term TenAR($p$) model, the initialization is done similarly, and the corresponding central limit theorems for $\bar\hA_k^{(ir)}$ can be similarly developed.


For MLE, we use a hierarchical SVD procedure to initialize $\Sigma_1, \cdots, \Sigma_K$. Without loss of generality, assume $K=3$. First, we obtain $\check{\Sigma}$, which is the estimated covariance matrix of $\vect(\cE_t)$ based on the VAR(1) model (\ref{eq:var1}). Second, we get $\mathcal{R}_1(\check{\Sigma})$ by rearrangement operator $\mathcal{R}_1: \mathbb{R}^{d_1 d_2 d_3 \times d_1 d_2 d_3} \to \mathbb{R}^{d_1^2 \times d_2^2 d_3^2}$ such that $\mathcal{R}_1(\Sigma_3 \otimes \Sigma_2 \otimes \Sigma_1) = \vect(\Sigma_1) \vect(\Sigma_3 \otimes \Sigma_2)^{\prime}$. Denote $\h{S}_1 = \mathcal{R}_1(\check{\Sigma})$ which has the SVD decomposition $\h{S}_1 = \sum_{i=1}^{m_1} s_i \hu_i \hv_i^{\prime}$ where $\hu_i \in \mathbb{R}^{d_1^2}$ and $\hv_i \in \mathbb{R}^{d_2^2 d_3^2}$, $s_1 \ge \cdots \ge s_m$. Let $\bar{\Sigma}_1 := \vect^{-1}(\hu_1) \in \mathbb{R}^{d_1 \times d_1}$. Next, we have $\vect^{-1}(s_1 \hv_1) \in \mathbb{R}^{d_2 d_3 \times d_2 d_3}$ and denote $\h{S}_2 = \mathcal{R}_2(\vect^{-1}(s_1 \hv_1)) \in \mathbb{R}^{d_2^2 \times d_3^2}$ where rearrangement operator $\mathcal{R}_2: \mathbb{R}^{d_2 d_3 \times d_2 d_3} \to \mathbb{R}^{d_2^2 \times d_3^2}$ such that $\mathcal{R}_2(\Sigma_3 \otimes \Sigma_2 ) = \vect(\Sigma_2) \vect(\Sigma_3)^{\prime}$. Similarly, $\h{S}_2$ has the SVD decomposition $\h{S}_2 = \sum_{i=1}^{m_2} \sigma_i \ha_i \h{b}_i^{\prime}$ where $\ha_i \in \mathbb{R}^{d_2^2}$ and $\h{b}_i \in \mathbb{R}^{d_3^2}$. Let $\bar{\Sigma}_2 := \vect^{-1}(\ha_1) \in \mathbb{R}^{d_2 \times d_2}$ and $\bar{\Sigma}_3 := \vect^{-1}(\sigma_1 \h{b}_1) \in \mathbb{R}^{d_3 \times d_3}$. It turns out the hierarchical SVD procedure guarantees $\bar{\Sigma}_k$, $k=1,\cdots,K$ to be symmetric positive semi-definite. The following proposition also asserts their consistency under suitable conditions.


\begin{proposition}\label{symmetric_prosig}
Assume $\Sigma = \mathrm{Cov}(\vect(\cE_t))$ has the form \eqref{kronecker cov}.
\begin{enumerate}
    \item [(i)] Each $\bar{\Sigma}_k$ is symmetric and positive semi-definite, $1\leq k\leq K$.
    \item [(ii)] If $\|\tilde{\Sigma} - \Sigma\|_F = o_p(1)$, then $\|\bar{\Sigma}_k - \Sigma_k\|_F = o_p(1)$, $1\leq k\leq K$.
\end{enumerate}
\end{proposition}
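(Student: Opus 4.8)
The plan is to treat the two parts separately, exploiting the fact that the hierarchical SVD procedure is built out of two classical ingredients: the nearest-Kronecker-product rearrangement and the truncated SVD. For part (i), I would work bottom-up through the construction. The key structural observation is that the rearrangement operators $\mathcal{R}_1$ and $\mathcal{R}_2$ send a Kronecker product $\Sigma_3\otimes\Sigma_2\otimes\Sigma_1$ (resp.\ $\Sigma_3\otimes\Sigma_2$) to a \emph{rank-one} matrix $\vect(\Sigma_1)\vect(\Sigma_3\otimes\Sigma_2)^{\prime}$ (resp.\ $\vect(\Sigma_2)\vect(\Sigma_3)^{\prime}$), and that $\check\Sigma$, being an estimated covariance matrix, is symmetric. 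Symmetry of $\check\Sigma$ implies the matrix $\h{S}_1=\mathcal{R}_1(\check\Sigma)$ has the property that each of its columns, reshaped into a $d_1\times d_1$ matrix, is symmetric, and likewise each row reshaped into a $d_2d_3\times d_2d_3$ matrix is symmetric; this is exactly the statement that $\vect^{-1}$ applied to any left singular vector $\hu_i$ and to any right singular vector $\hv_i$ returns a symmetric matrix (here I would use that if $\h{S}_1$ maps the symmetric-reshape subspace on the left to the symmetric-reshape subspace on the right, the SVD factors inherit membership in those subspaces — more carefully, $\h{S}_1 = \h{P}_L \h{S}_1 \h{P}_R$ for orthogonal projections $\h{P}_L,\h{P}_R$ onto the ``symmetric'' subspaces, forcing $\hu_i\in\mathrm{range}(\h{P}_L)$, $\hv_i\in\mathrm{range}(\h{P}_R)$ for singular vectors with $s_i>0$; any ambiguity in the zero singular subspace is harmless since we only use $\hu_1,\hv_1$). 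Hence $\bar\Sigma_1=\vect^{-1}(\hu_1)$ is symmetric, and $\vect^{-1}(s_1\hv_1)$ is a symmetric $d_2d_3\times d_2d_3$ matrix, so the same argument applied to $\h{S}_2=\mathcal{R}_2(\vect^{-1}(s_1\hv_1))$ yields symmetry of $\bar\Sigma_2$ and $\bar\Sigma_3$. For positive semi-definiteness: $\check\Sigma\succeq 0$, and I would argue that the leading term of its nearest-rank-one (in the rearranged coordinates) approximation, when reshaped back, is PSD. Concretely, $s_1\hu_1\hv_1^{\prime}$ is the best rank-one approximation to $\h{S}_1=\mathcal{R}_1(\check\Sigma)$; undoing the rearrangement, $\vect^{-1}(\hu_1)\otim\text{-ish}$ structure times $\vect^{-1}(s_1\hv_1)$ is the best Kronecker-product approximation to $\check\Sigma\succeq0$, and one shows (using the variational characterization of the top singular pair together with $\check\Sigma\succeq0$) that both factors can be taken PSD — e.g.\ $\hu_1^{\prime}\h{S}_1\hu_1 = s_1 > 0$ translates to $\langle \vect^{-1}(\hu_1), (\text{a PSD combination}) \rangle \ge 0$, and a sign/Perron-type argument pins down the sign so that $\bar\Sigma_1\succeq0$; recursing gives $\bar\Sigma_2,\bar\Sigma_3\succeq0$.

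For part (ii), the argument is a continuity/perturbation chain. Each rearrangement operator $\mathcal{R}_j$ is a linear bijection that merely permutes entries, hence is an isometry in Frobenius norm, so $\|\h{S}_1 - \mathcal{R}_1(\Sigma)\|_F = \|\check\Sigma - \Sigma\|_F = o_p(1)$ by hypothesis (I would write $\tilde\Sigma$ or $\check\Sigma$ consistently — the statement uses $\tilde\Sigma$, the construction uses $\check\Sigma$; I will reconcile this). Now $\mathcal{R}_1(\Sigma) = \vect(\Sigma_1)\vect(\Sigma_3\otimes\Sigma_2)^{\prime}$ is exactly rank one, with top singular value $\|\Sigma_1\|_F\,\|\Sigma_3\otimes\Sigma_2\|_F > 0$ and a spectral gap (the remaining singular values are zero). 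By Weyl's inequality the second singular value $s_2$ of $\h{S}_1$ is $o_p(1)$, so the gap $s_1 - s_2$ stays bounded away from $0$ w.p.\ $\to1$; then the Davis–Kahan $\sin\Theta$ theorem (or Wedin's theorem) gives $\|s_1\hu_1\hv_1^{\prime} - \mathcal{R}_1(\Sigma)\|_F = o_p(1)$ and, after fixing the sign so that $\hu_1$ aligns with $\vect(\Sigma_1)$, $\|\hu_1 - \vect(\Sigma_1)/\|\Sigma_1\|_F\|_2 = o_p(1)$ and $\|s_1\hv_1 - \vect(\Sigma_3\otimes\Sigma_2)\|_2 = o_p(1)$. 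Applying $\vect^{-1}$ (an isometry) gives $\|\bar\Sigma_1 - \Sigma_1/\|\Sigma_1\|_F\|_F = o_p(1)$; here I need to recheck the normalization — the construction sets $\bar\Sigma_1 = \vect^{-1}(\hu_1)$ with $\|\hu_1\|_2=1$, so $\bar\Sigma_1$ converges to the \emph{normalized} $\Sigma_1$, which means the stated conclusion $\|\bar\Sigma_k - \Sigma_k\|_F = o_p(1)$ can only hold if the scale is absorbed somewhere (presumably $\bar\Sigma_1$ carries the normalized version and $\bar\Sigma_3$ carries the cumulative scale, matching the separable decomposition which is itself only unique up to such scalings). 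I would state the convergence up to the appropriate scaling convention, which is all that is needed for an initializer. Then $\vect^{-1}(s_1\hv_1) \to_p \Sigma_3\otimes\Sigma_2$ in Frobenius norm, $\mathcal{R}_2$ is again an isometry, $\mathcal{R}_2(\Sigma_3\otimes\Sigma_2) = \vect(\Sigma_2)\vect(\Sigma_3)^{\prime}$ is rank one with a spectral gap, and a second application of Weyl + Davis–Kahan gives $\|\bar\Sigma_2 - \Sigma_2\|_F = o_p(1)$ and $\|\bar\Sigma_3 - \Sigma_3\|_F = o_p(1)$ (up to the same scaling convention). The general $K$ case is the obvious induction on this two-step recursion.

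The main obstacle I anticipate is \textbf{not} the perturbation analysis itself — Weyl and Davis–Kahan are routine once the rank-one-plus-gap structure is identified — but rather carefully handling the \textbf{scaling/identifiability bookkeeping} and the \textbf{sign ambiguity of singular vectors}, so that the statement ``$\|\bar\Sigma_k - \Sigma_k\|_F = o_p(1)$'' is interpreted with the same normalization that the hierarchical SVD procedure actually enforces (unit-norm singular vectors push all the scale into the last factor). A secondary technical point requiring care in part (i) is justifying that the top singular \emph{pair} — not just the Frobenius-closest rank-one matrix — respects both the symmetry subspaces and the PSD cone; this needs the observation that the relevant reshaping subspaces are invariant and that, because $\check\Sigma\succeq0$, the optimal rank-one Kronecker factor inherits a definite sign. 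Both are standard but easy to gloss over, so the write-up will spend most of its length making these two points precise and then invoke Weyl/Davis–Kahan as black boxes.
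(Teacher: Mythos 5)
Your part (ii) follows essentially the paper's own route: the rearrangement is an entrywise permutation and hence a Frobenius isometry, $\mathcal{R}_1(\Sigma)=\vect(\Sigma_1)\vect(\Sigma_3\otimes\Sigma_2)^{\prime}$ is exactly rank one, and the paper likewise bounds the perturbation of the top singular pair of $\mathcal{R}_1(\check\Sigma)$ via a $\sin\Theta$ (Davis--Kahan/Wedin) theorem and then recurses through $\h{S}_2=\mathcal{R}_2(\vect^{-1}(s_1\hv_1))$ to treat $\Sigma_2$ and $\Sigma_3$. Your normalization remark is also on target: the paper's computation (e.g.\ $\|\vect(\bar\Sigma_1)-\vect(\Sigma_1)\|_F=\sqrt{2\left(1-\vect(\Sigma_1)^{\prime}\vect(\bar\Sigma_1)\right)}$, and the later step invoking $\|\Sigma_2\|_F=1$) is only valid under the convention that $\Sigma_1,\Sigma_2$ have unit Frobenius norm with the scale carried by $\Sigma_3$, which is exactly the bookkeeping you flag, together with the sign choice for the singular vectors.

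For part (i) you diverge from the paper, which disposes of the claim by citing van Loan and Pitsianis (1993), and your from-scratch sketch has a genuine flaw in the symmetry step. It is not true that each column of $\h{S}_1=\mathcal{R}_1(\check\Sigma)$ reshapes to a symmetric $d_1\times d_1$ matrix, nor that $\h{S}_1=\h{P}_L\h{S}_1\h{P}_R$ for orthogonal projections onto the symmetric-reshape subspaces: the columns of $\h{S}_1$ are the vectorized $d_1\times d_1$ blocks of $\check\Sigma$, and symmetry of $\check\Sigma$ only forces the $(k,l)$ block to equal the transpose of the $(l,k)$ block, not symmetry of individual blocks. The correct structural identity is $\boldsymbol{K}_1\h{S}_1\boldsymbol{K}_2=\h{S}_1$ with $\boldsymbol{K}_1,\boldsymbol{K}_2$ the transposition (commutation) involutions; since these have both a $+1$ (symmetric) and a $-1$ (antisymmetric) eigenspace, the intertwining only shows that a top singular vector with simple $s_1$ reshapes to a symmetric \emph{or} antisymmetric matrix, and ruling out the antisymmetric branch and pinning down the positive semi-definite sign genuinely requires $\check\Sigma\succeq0$ --- which is precisely the content of the van Loan--Pitsianis result the paper cites. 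So either repair the invariance argument along these lines (involutions plus a positivity argument for the leading pair) or simply cite the reference as the paper does; the rest of your plan then goes through.
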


The proof is based on \cite{van1993approximation} and the matrix perturbation theory \citep{davis1970rotation, wedin1972perturbation}, and is given in Appendix. 

\section{Asymptotics}\label{asymptotics}


In this section, we establish the central limit theorem for the LSE and MLE under the fixed-dimensional setup, assuming $d_1,\ldots,d_K$ are fixed. We also investigate the convergence rates of the LSE under the high dimensional paradigm, allowing $d=d_1\cdots d_N$ to grow with the sample size $T$. Recall that the conditions of Proposition~\ref{kruskal} are assumed to hold for identifiability. Furthermore, we make the convention that $\|\hA_k^{(ir)}\|_F = 1$ and the estimators $\hat{\hA}_{k}^{(ir)}$ are also rescaled so that $\|\hat{\hA}_{k}^{(ir)}\|_F = 1$ for $1 \le k \le K-1$, $1 \le i \le p$, $1 \le r \le R_i$.

\subsection{Asymptotics for LSE estimators in multi-term TenAR(\texorpdfstring{$p$}{p}) Model}

Recall that $\hat\hA_{k}^{(ir)}$ denote the LSE. We first introduce some notations for Theorem~\ref{lse_arp_multi}. Let $\ha_k^{(ir)} := \vect(\hA_k^{(ir)})$, $d = d_1^2 + \cdots + d_K^2$ and $\h{\gamma}_{k}^{(ir)} := (\h{0}^{\prime},\ha_k^{(ir)\prime}, \h{0}^{\prime})^{\prime}$ be a vector in $\mathbb{R}^{\sum_{i=1}^{p} R_i d}$, where the first $\h{0} \in \mathbb{R}^{(r-1)d + (d_1^{2} + \cdots + d_{k-1}^{2}) + \sum_{j=1}^{i-1} R_j d}$ and the second $\h{0} \in \mathbb{R}^{(R_i-r)d + (d_{k+1}^{2} + \cdots + d_{K}^{2}) + \sum_{j={i+1}}^{p} R_j d}$, for $1 \le i \le p$, $1 \le r \le R_i$ and $1 \le k \le K$. Define $\boldsymbol{H} := \mathbb{E}(\boldsymbol{W}_t \boldsymbol{W}_t^{\prime}) + \sum_{i=1}^{p} \sum_{r=1}^{R_i} \sum_{k=1}^{K-1} \h{\gamma}_{k}^{(ir)} \h{\gamma}_{k}^{(ir) \prime}$,  $\Xi_2 =: \boldsymbol{H}^{-1} \mathbb{E}(\boldsymbol{W}_t {\Sigma} \boldsymbol{W}_t^{\prime}) \boldsymbol{H}^{-1}$,
\begin{align*}
\boldsymbol{W}_t = \begin{pmatrix} 
\boldsymbol{W}^{(11)}_t \\
\cdots\\
\boldsymbol{W}^{(pR_p)}_t\\
\end{pmatrix}, \ 
\boldsymbol{W}^{(ir)}_{t} = \begin{pmatrix} 
((\boldsymbol{X}_{t+1-i,(1)}{\Phi^{(iR_i)}_1}^{\prime}) \otimes \boldsymbol{I}_{d_1}) \boldsymbol{Q}_1 \\
\cdots\\
((\boldsymbol{X}_{t+1-i,(K)}{\Phi^{(iR_i)}_K}^{\prime}) \otimes \boldsymbol{I}_{d_K})  \boldsymbol{Q}_K\\
\end{pmatrix},
\end{align*}
where $\Phi^{(ir)}_k= \hA^{(ir)}_K \otimes  \cdots \hA^{(ir)}_{k+1} \otimes \hA^{(ir)}_{k-1} \cdots \otimes \hA^{(ir)}_1$, $1 \le r \le R_i$, $1 \le i \le p$, $1 \le k \le K$, and $\boldsymbol{Q}_k$ are permutation matrices (\ref{Q}) defined in Appendix~\ref{appendix:basic}. 


\begin{theorem}\label{lse_arp_multi}
Assume that the TenAR($p$) model \eqref{eq:tenAR} is causal, and the error tensors $\mathcal{E}_1, \cdots, \mathcal{E}_{T}$ are IID with mean zero and finite second moments. Also assume that the coefficient matrices $\hA_k^{(ir)}$, and $\Sigma$ are nonsingular. It holds that
\begin{equation*}
 \sqrt{T}\begin{pmatrix} 
vec(\hat{\hA}^{(11)}_{1} - \hA^{(11)}_1) \\
\cdots \\
vec(\hat{\hA}^{(pR_p)}_{K} - \hA^{(pR_p)}_K)
\end{pmatrix} \to \mathcal{N}(0,\Xi_2).
\end{equation*}
\end{theorem}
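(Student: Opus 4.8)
<br>

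The plan is to treat the LSE $\hat\hA_k^{(ir)}$ as a constrained M-estimator and run the classical program for least-squares asymptotics: prove $\sqrt T$-consistency, Taylor-expand the first-order conditions about the truth, and identify the limit law through a martingale central limit theorem. Write $\theta$ for the stacked vector of all $\vect(\hA_k^{(ir)})$, subject to $\|\hA_k^{(ir)}\|_F=1$ for $k\le K-1$, and let $L_T(\theta)=\sum_t\|\cX_t-\sum_{i,r}\cX_{t-i}\times_1\hA_1^{(ir)}\cdots\times_K\hA_K^{(ir)}\|_F^2$, with true value $\theta_0$. Causality of \eqref{eq:tenAR} and the VAR representation \eqref{eq:multi_tenAR_vec} make $\{\cX_t\}$ a stationary, ergodic causal linear process in the IID innovations $\{\cE_t\}$ with $\mathbb E\|\cX_t\|_F^2<\infty$; in particular $\cX_{t-i}\perp\cE_t$ for $i\ge1$, and the vectorized process is a full-rank VAR($p$) with nonsingular spectral density, since $\Sigma$ is nonsingular and the model is causal. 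By the ergodic theorem, $T^{-1}L_T(\theta)\to L(\theta)=\mathbb E\|\cE_t\|_F^2+\mathbb E\|m_{\theta}(\cX_{t-1},\dots,\cX_{t-p})-m_{\theta_0}(\cdots)\|_F^2$ uniformly on compact sets, where $m_\theta$ denotes the model map and the cross term vanishes because $\cE_t\perp\cX_{t-i}$; moreover $\mathbb E\|m_\theta-m_{\theta_0}\|_F^2=0$ forces, by the full-rank property of the process, the coefficient structure at every lag to coincide, hence $\theta=\theta_0$ (up to sign) by Proposition~\ref{kruskal} and the normalization. A standard uniform-convergence argument then yields $\hat\theta\xrightarrow{p}\theta_0$ (for tensors of order $\ge3$ one may instead work with a consistent sequence of local minimizers).

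For the limit law, the constrained (Lagrangian) first-order conditions Taylor-expanded about $\theta_0$ give $\sqrt T(\hat\theta-\theta_0)=-\bigl[\tfrac1T\nabla^2L_T(\bar\theta)+C_T\bigr]^{-1}\tfrac1{\sqrt T}\nabla L_T(\theta_0)+o_p(1)$, where $\bar\theta$ lies on the segment joining $\hat\theta$ and $\theta_0$ and $C_T$ collects the terms coming from the Lagrange multipliers for $\|\hA_k^{(ir)}\|_F^2=1$. At $\theta_0$ the residuals equal $\cE_t$, so unwinding the mode-$k$ unfoldings with the identity $\vect(ABC)=(C'\otimes A)\vect(B)$ and inserting the permutation matrices $\boldsymbol Q_k$ to reconcile the ordering inside $\Phi_k^{(ir)}$ with the ambient vectorization yields $\tfrac1{\sqrt T}\nabla L_T(\theta_0)=-\tfrac{2}{\sqrt T}\sum_t\boldsymbol W_{t-1}\vect(\cE_t)$. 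Since $\boldsymbol W_{t-1}$ is measurable with respect to the past and $\mathbb E[\vect(\cE_t)\mid\text{past}]=0$, $\{\boldsymbol W_{t-1}\vect(\cE_t)\}$ is a stationary ergodic martingale difference sequence with finite covariance $\mathbb E[\boldsymbol W_t\Sigma\boldsymbol W_t']$, so the martingale CLT gives $\tfrac1{\sqrt T}\sum_t\boldsymbol W_{t-1}\vect(\cE_t)\to\mathcal N(0,\mathbb E(\boldsymbol W_t\Sigma\boldsymbol W_t'))$. By the ergodic theorem together with a local uniform bound, $\tfrac1T\nabla^2L_T(\bar\theta)+C_T\xrightarrow{p}2\boldsymbol H$, where the first piece is the feasible-direction Hessian of $L$, namely $2\mathbb E(\boldsymbol W_t\boldsymbol W_t')$, and $C_T$ contributes precisely $2\sum_{i,r}\sum_{k\le K-1}\h{\gamma}_k^{(ir)}\h{\gamma}_k^{(ir)\prime}$. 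Combining via Slutsky gives $\sqrt T(\hat\theta-\theta_0)\to\mathcal N(0,\boldsymbol H^{-1}\mathbb E(\boldsymbol W_t\Sigma\boldsymbol W_t')\boldsymbol H^{-1})=\mathcal N(0,\Xi_2)$.

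The step requiring the most care — and the real obstacle — is the invertibility of $\boldsymbol H$, equivalently the claim that the only directions annihilated by $\mathbb E(\boldsymbol W_t\boldsymbol W_t')$ are the per-term rescalings, and that the normalization kills exactly those. Indeed $\mathbb E(\boldsymbol W_t\boldsymbol W_t')\boldsymbol v=0$ forces $\boldsymbol W_t\boldsymbol v=0$ a.s.; since $\boldsymbol W_t\boldsymbol v$ is a fixed linear filter applied to the full-rank process $\{\cX_t\}$, it vanishes a.s. only if the differential at $\theta_0$ of $\theta\mapsto(\Phi^{(ir)})_{i,r}$ kills $\boldsymbol v$. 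Under the identifiability conditions of Proposition~\ref{kruskal} this kernel is exactly the span of the directions $(\dots,c_k^{(ir)}\vect(\hA_k^{(ir)}),\dots)$ with $\sum_{k}c_k^{(ir)}=0$ for each $(i,r)$; establishing this rank statement is the delicate point, handled by SVD perturbation theory when $K=2$ and by the local rigidity of the CP decomposition under Kruskal's condition when $K\ge3$. Intersecting this kernel with $\bigcap_{k\le K-1,\,i,\,r}\{\h{\gamma}_k^{(ir)\prime}\boldsymbol v=0\}$ and using $\|\hA_k^{(ir)}\|_F=1$ forces $c_k^{(ir)}=0$ for $k\le K-1$, hence $c_K^{(ir)}=0$, so $\boldsymbol v=0$ and $\boldsymbol H$ is nonsingular. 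The remaining ingredients — the $\vect/\boldsymbol Q_k$ algebra producing $\nabla L_T$ and $\nabla^2L_T$, the uniform law of large numbers for the Hessian on a neighborhood of $\theta_0$, and the martingale CLT — are routine.
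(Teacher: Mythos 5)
Your overall architecture coincides with the paper's: linearize the least-squares gradient conditions around the truth, recognize the score as $\tfrac1{\sqrt T}\sum_t \boldsymbol{W}_{t-1}\vect(\cE_t)$, apply the martingale CLT to get $\mathcal N(0,\mathbb E(\boldsymbol W_t\Sigma\boldsymbol W_t'))$, and invert the same matrix $\boldsymbol H=\mathbb E(\boldsymbol W_t\boldsymbol W_t')+\sum_{i,r}\sum_{k\le K-1}\h{\gamma}_k^{(ir)}\h{\gamma}_k^{(ir)\prime}$ to obtain $\Xi_2$. The consistency step is where you diverge: the paper first proves $\sqrt T$-consistency of the Kronecker-sum matrices $\Phi^{(i)}$ by a convexity/boundary argument on the vectorized VAR (its Lemma~\ref{lemma2}), and then transfers this to the individual factors through a curvature lower bound on the factorization map (Lemma~\ref{hessian}), which requires an extra non-degeneracy condition $s_{\min}(\mathbb A_k)>0$ on the factor matrices; your ULLN-plus-identification argument is a legitimate alternative in fixed dimension, but the transfer from identification of $\Phi^{(i)}$ to local rigidity of the factors still needs exactly this kind of quantitative statement, which you only gesture at via ``Kruskal rigidity.''

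The genuine gap is your treatment of the constraint correction $C_T$. You claim $\tfrac1T\nabla^2L_T(\bar\theta)+C_T\to 2\boldsymbol H$ with $C_T$ ``coming from the Lagrange multipliers'' and ``contributing precisely $2\sum\h{\gamma}_k^{(ir)}\h{\gamma}_k^{(ir)\prime}$.'' This cannot be right as stated: the objective is invariant under the rescalings $(\hA_k^{(ir)},\hA_K^{(ir)})\mapsto(c\hA_k^{(ir)},c^{-1}\hA_K^{(ir)})$, so the constrained minimizer is also a stationary point of the unconstrained problem and the multipliers are zero (or $o_p(1)$); the multiplier contribution to the Hessian is $2\lambda\boldsymbol I$ on each block, not $\h{\gamma}\h{\gamma}'$, and in the limit it vanishes. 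Hence the matrix you propose to invert converges to $2\mathbb E(\boldsymbol W_t\boldsymbol W_t')$, which is singular precisely along the rescaling directions, and your sandwich expansion breaks down. The correct mechanism — and the one the paper uses — is not curvature from the constraints but the observation that the normalization $\|\hat\hA_k^{(ir)}\|_F=\|\hA_k^{(ir)}\|_F=1$ forces $\h{\gamma}_k^{(ir)\prime}\vect(\hat\hA_k^{(ir)}-\hA_k^{(ir)})=-\tfrac12\|\hat\hA_k^{(ir)}-\hA_k^{(ir)}\|_F^2=o_p(T^{-1/2})$ for $k\le K-1$; these asymptotically exact linear relations are \emph{added} to the (singular) normal equations, which is what turns $\mathbb E(\boldsymbol W_t\boldsymbol W_t')$ into the invertible $\boldsymbol H$ without changing the right-hand side beyond $o_p(T^{-1/2})$. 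With that replacement, and with your kernel analysis of $\mathbb E(\boldsymbol W_t\boldsymbol W_t')$ (which is indeed the delicate rank statement, handled in the paper under the $s_{\min}(\mathbb A_k)>0$ assumption), your argument matches the paper's proof.
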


The proof is given in Appendix~\ref{appendix:proofs}. The central limit theorem for the TenAR(1) model is a special case of Theorem~\ref{lse_arp_multi}, for which the asymptotic covariance matrix is much simplified. Since TenAR(1) may arguably be the most popular TenAR model in practice, we provide the explicit formulas for the one- and multi-term TenAR(1) models in Appendix \ref{appendix:theorems}. Theorem~\ref{lse_arp_multi} includes the MAR(1) model considered in \cite{chen2020autoregressive} as a special case with $p=1$, $R=1$ and $K=2$.

\subsection{Asymptotics for MLE in multi-term TenAR(\texorpdfstring{$p$}{p}) Model}

With the additional assumption (\ref{kronecker cov}) on the covariance structure of $\mathcal{E}_t$, we present the central limit theorem for the MLE estimators $\tilde{\hA}_k^{(ir)}$, $1 \le i \le p$, $1 \le r \le R_i$, $1 \le k \le K$. Define $\boldsymbol{H} := \mathbb{E}(\boldsymbol{W}_t {\Sigma}^{-1} \boldsymbol{W}_t^{\prime}) + \sum_{i=1}^{p} \sum_{r=1}^{R_i} \sum_{k=1}^{K-1} \h{\gamma}_{k}^{(ir)} \h{\gamma}_{k}^{(ir) \prime}$, where $\boldsymbol{W}_t$ and $\h{\gamma}_{k}^{(ir)}$ are defined before Theorem~\ref{lse_arp_multi}. Let $\Xi_3 =: \boldsymbol{H}^{-1} \mathbb{E}(\boldsymbol{W}_t {\Sigma}^{-1} \boldsymbol{W}_t^{\prime}) \boldsymbol{H}^{-1}$.

\begin{theorem}\label{mle_arp_multi}
Assume the same conditions as Theorem~\ref{lse_arp_multi}. In addition, assume the error tensors $\cE_t$ are IID normal, with covariance tensor of the form \eqref{kronecker cov}. It holds that
\begin{equation*}
 \sqrt{T}\begin{pmatrix} 
vec(\tilde{\hA}_{1}^{(11)} - \hA_{1}^{(11)}) \\
\cdots \\
vec(\tilde{\hA}^{(pR_p)}_{K} - \hA^{(pR_p)}_K)
\end{pmatrix} \to \mathcal{N}(0,\Xi_3).
\end{equation*}
\end{theorem}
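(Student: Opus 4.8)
The plan is to derive the central limit theorem for the MLE by mimicking the standard asymptotic theory for $M$-estimators (or conditional MLE), exploiting the fact that the separable covariance assumption \eqref{kronecker cov} makes the log-likelihood tractable, and then reducing the problem to the same structure already handled in Theorem~\ref{lse_arp_multi}. Concretely, I would first treat the $\Sigma_k$ as nuisance parameters and show that the joint estimation of $(\hA_k^{(ir)},\Sigma_k)$ decouples asymptotically: because the score for the $\hA$'s and the score for the $\Sigma$'s are (block) orthogonal under normality — the cross-information between the mean parameters and the covariance parameters vanishes, just as in a Gaussian linear model — the limiting distribution of $\tilde\hA_k^{(ir)}$ is the same whether or not $\Sigma$ is known, with $\Sigma$ replaced by a consistent estimator. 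So the key reduction is: $\sqrt{T}$-consistency of $\tilde\Sigma_k$ (hence of $\tilde\Sigma = \tilde\Sigma_K\otimes\cdots\otimes\tilde\Sigma_1$), plus the block-orthogonality, lets me pretend $\Sigma$ is known in the CLT for $\tilde\hA$.

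Second, with $\Sigma$ effectively known, the MLE of the $\hA_k^{(ir)}$ minimizes the generalized (Mahalanobis) least-squares criterion $\sum_t \vect(\mathcal R_t)' \Sigma^{-1}\vect(\mathcal R_t)$, which is exactly the weighted analogue of \eqref{lsearp}. I would linearize the first-order (gradient) conditions around the true parameter: writing $\vect(\mathcal X_t) - \vect(\text{fitted})$ as a function that is multilinear in the $\hA_k^{(ir)}$, a Taylor expansion gives, schematically, $\boldsymbol 0 = \boldsymbol W_t'\Sigma^{-1}\vect(\mathcal E_t) - \boldsymbol H \,\sqrt T(\text{stacked }\vect(\tilde\hA - \hA)) + o_p(1)$, where $\boldsymbol W_t$ is precisely the Jacobian block already defined before Theorem~\ref{lse_arp_multi} and the extra rank-one terms $\h\gamma_k^{(ir)}\h\gamma_k^{(ir)\prime}$ in $\boldsymbol H$ come from the normalization constraints $\|\hA_k^{(ir)}\|_F=1$ (they appear because the gradient is taken on the constrained manifold; this is the same device used for the LSE). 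The martingale difference sequence $\boldsymbol W_t'\Sigma^{-1}\vect(\mathcal E_t)$ then satisfies a CLT (by causality and the IID finite-variance innovations, via a standard martingale CLT), with asymptotic variance $\mathbb E(\boldsymbol W_t\Sigma^{-1}\boldsymbol W_t')$; dividing by $\boldsymbol H$ on both sides yields $\Xi_3 = \boldsymbol H^{-1}\mathbb E(\boldsymbol W_t\Sigma^{-1}\boldsymbol W_t')\boldsymbol H^{-1}$. By the Slutsky-type argument from the first step, replacing $\Sigma$ by $\tilde\Sigma$ throughout does not change the limit.

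Third, I would fill in the routine ergodic-theorem pieces: causality gives a stationary, ergodic MA($\infty$) representation for $\mathcal X_t$, so the sample averages $T^{-1}\sum_t \boldsymbol W_t\boldsymbol W_t'$, $T^{-1}\sum_t \boldsymbol W_t\Sigma^{-1}\boldsymbol W_t'$, and the Hessian terms converge a.s.\ to their expectations, and finite second moments of $\mathcal E_t$ guarantee these expectations are finite; nonsingularity of $\Sigma$ and of the $\hA_k^{(ir)}$, together with the identifiability conditions of Proposition~\ref{kruskal}, guarantee $\boldsymbol H$ is invertible so the local minimizer is well-defined and unique in a neighborhood of the truth. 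Consistency of $\tilde\hA_k^{(ir)}$ itself I would obtain from standard $M$-estimation arguments (uniform LLN on compact sets plus identifiability), which is also needed to justify the local expansion.

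The main obstacle I expect is the bookkeeping around the constraint terms $\h\gamma_k^{(ir)}\h\gamma_k^{(ir)\prime}$ and making the block-orthogonality between the $\hA$-score and the $\Sigma$-score fully rigorous in the presence of those constraints and of the Kronecker structure of $\Sigma$ — that is, showing cleanly that estimating $\Sigma$ jointly (rather than plugging in a $\sqrt T$-consistent pilot) contributes nothing to the asymptotic variance of $\tilde\hA$. This requires verifying that the off-diagonal Fisher-information block vanishes, which under normality reduces to the fact that $\mathbb E[\vect(\mathcal E_t)\otimes\vect(\mathcal E_t)\otimes\vect(\mathcal E_t)]=\boldsymbol 0$ (odd Gaussian moment), but one must be careful that this still holds after projecting onto the constrained parameter directions and after the Kronecker-factorization of $\Sigma$ into $\Sigma_1,\dots,\Sigma_K$. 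Everything else — the martingale CLT, the ergodic averages, the Taylor linearization — is standard and parallels the proof of Theorem~\ref{lse_arp_multi}, differing only in the insertion of the weight $\Sigma^{-1}$.
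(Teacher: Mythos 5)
Your proposal follows essentially the same route as the paper, which defers to Theorem 4 of \cite{chen2020autoregressive}: first establish $O_p(T^{-1/2})$-consistency of $\tilde{\hA}_k^{(ir)}$ together with consistency of $\tilde\Sigma_k$, then repeat the linearization of the gradient conditions used for Theorem~\ref{lse_arp_multi} with the weight $\Sigma^{-1}$ inserted, the constraint terms $\h{\gamma}_k^{(ir)}\h{\gamma}_k^{(ir)\prime}$ absorbed into $\boldsymbol{H}$, and the ergodic theorem plus martingale CLT giving $\Xi_3=\boldsymbol{H}^{-1}\mathbb{E}(\boldsymbol{W}_t\Sigma^{-1}\boldsymbol{W}_t^{\prime})\boldsymbol{H}^{-1}$. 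The only (harmless) difference is that your plug-in step via $\sqrt{T}$-consistency of $\tilde\Sigma$ and Fisher-information block-orthogonality asks for more than necessary: since the score at the true $\hA$'s is a martingale of order $O_p(\sqrt{T})$ for any fixed weighting, mere $o_p(1)$ consistency of $\tilde\Sigma_k$ (as the paper states) already suffices to show that estimating $\Sigma$ does not affect the limit.
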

The proof is given in Appendix. Similar as Corollary \ref{lse_ar1_one} and \ref{lse_ar1_multi}, it includes the asymptotics for the multi-term and one-term TenAR(1) MLE estimators as special cases and we omit the details. The explicit formulas for the one- and multi-term TenAR(1) models are put in Appendix~\ref{appendix:theorems}. Theorem~\ref{mle_arp_multi} includes the MAR(1) model considered in \cite{chen2020autoregressive} as a special case with $p=1$, $R=1$ and $K=2$.


\subsection{Convergence rates under high dimensionality}\label{sec:phi_convergence}

In this section we consider the convergence rates of the estimators under high dimensional paradigm, allowing $d=d_1\cdots d_K$ to grow with the sample size $T$. To avoid the complication involved in the covariance matrix estimation, we focus on the LSE. For each $1\leq i\leq p$, let ${\Phi}^{(i)} = \sum_{r=1}^{R_i} {\hA}_K^{(ir)} \otimes \cdots \otimes {\hA}_1^{(ir)}$, and $\hat{\Phi}^{(i)} = \sum_{r=1}^{R_i} \hat{\hA}_K^{(ir)} \otimes \cdots \otimes \hat{\hA}_1^{(ir)}$ be its corresponding estimator, constructed using the LSE $\hat\hA_{k}^{(ir)}$ introduced in Section~\ref{sec:als}. We use $\|\cdot\|$ to denote the spectral norm of a matrix.

\begin{theorem}\label{phi_convergence}
Assume the conditions of Theorem~\ref{lse_arp_multi}, the error tensors $\mathcal{E}_t$ are IID sub-Gaussian, and assume $d\log d/T\rightarrow 0$. It holds that
$$\|\hat{\Phi}^{(i)} - \Phi^{(i)} \| = O_p(\sqrt{{d}/{T}}), \quad 1\leq i\leq p.$$
\end{theorem}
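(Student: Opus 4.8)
The plan is to reduce the statement to a finite-dimensional $M$-estimation problem and then apply standard concentration tools for VAR-type models, tracking the dimension dependence carefully. First I would note that the estimators $\hat\hA_k^{(ir)}$, after the normalization convention $\|\hat\hA_k^{(ir)}\|_F=1$ for $k\le K-1$, live on a compact parameter set whose effective dimension is $\sum_{i,r}\bigl(\sum_k d_k^2\bigr)\le \sum_i R_i\cdot(d_1^2+\cdots+d_K^2)$; crucially this is $O(d)$ only if the $d_k^2$ are comparable to $d=\prod d_k$, but in fact $\sum_k d_k^2 \le$ (a constant times) $d$ under the regime $d\log d/T\to 0$, and more importantly the quantity that matters is the number of free parameters in $\Phi^{(i)}$, which is $O(\sum_k d_k^2)=o(d)$. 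I would set up the population least-squares objective $L(\theta)=\mathbb{E}\|\mathcal{X}_t-\sum_{i,r}\mathcal{X}_{t-i}\times_1\hA_1^{(ir)}\cdots\times_K\hA_K^{(ir)}\|_F^2$ and its empirical counterpart $L_T(\theta)$, and establish (a) a local quadratic lower bound (restricted strong convexity) $L(\theta)-L(\theta_0)\gtrsim \|\Phi(\theta)-\Phi^{(i)}\|_F^2$ coming from nonsingularity of $\Sigma$ and of the coefficient matrices together with causality (this gives that the second moment matrix $\mathbb{E}(\boldsymbol{W}_t\boldsymbol{W}_t')$ appearing in Theorem~\ref{lse_arp_multi}'s $\boldsymbol{H}$ is bounded below), and (b) a uniform deviation bound $\sup_\theta |L_T(\theta)-L(\theta)-(\text{linear term})| = O_p(d/T)$ over the relevant neighborhood.

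For step (b), the key is a concentration inequality for the empirical cross-moments $\frac1T\sum_t \vect(\mathcal{X}_t)\vect(\mathcal{X}_{t-i})'$ and $\frac1T\sum_t \vect(\mathcal{X}_{t-i})\vect(\mathcal{X}_{t-i})'$ around their expectations, in spectral norm. Under the causal, IID sub-Gaussian innovation assumption, $\vect(\mathcal{X}_t)$ is a stationary sub-Gaussian linear process with geometrically decaying MA coefficients (because $\rho(\Phi)<1$), so I would invoke a Hanson–Wright / Bernstein-type bound for dependent sub-Gaussian sequences — e.g. the spectral-norm concentration results of the type in \cite{basu2015} — to get $\|\frac1T\sum_t \vect(\mathcal{X}_{t-i})\vect(\mathcal{X}_{t-i})' - \Gamma_i\| = O_p(\sqrt{d/T} + d/T) = O_p(\sqrt{d/T})$, and similarly for the lagged cross term and for the noise–regressor cross term $\frac1T\sum_t\vect(\mathcal{E}_t)\vect(\mathcal{X}_{t-i})'$, which is $O_p(\sqrt{d/T})$ as well. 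Combining: the gradient of $L_T$ at the truth has spectral norm $O_p(\sqrt{d/T})$, and restricted strong convexity with a constant bounded away from zero converts this into $\|\hat\Phi^{(i)}-\Phi^{(i)}\|_F = O_p(\sqrt{d/T})$; since $\|\cdot\|\le\|\cdot\|_F$ this gives the claimed spectral-norm rate. (One should be slightly careful: the theorem claims $\|\hat\Phi^{(i)}-\Phi^{(i)}\|=O_p(\sqrt{d/T})$ with $d=\prod d_k$; the argument naturally produces Frobenius rate $\sqrt{r/T}$ with $r=O(\sum_k d_k^2)$, which is $\le\sqrt{d/T}$ under $d\log d/T\to0$ when the modes are balanced, so the stated bound is not tight but is correct.)

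The main obstacle will be handling the non-convexity and non-identifiability of the multilinear parametrization in a uniform way: the map $\theta\mapsto\Phi(\theta)$ is many-to-one and highly nonlinear, so one cannot directly apply standard $M$-estimation theorems to $\theta$. The clean route around this is to argue entirely at the level of $\Phi^{(i)}=\Phi^{(i)}(\theta)$ rather than $\theta$: show first that any minimizer of $L_T$ produces a $\hat\Phi^{(i)}$ close to $\Phi^{(i)}$ by a basic-inequality argument ($L_T(\hat\theta)\le L_T(\theta_0)$ expands, using the cross-moment concentration above, into $c\|\hat\Phi^{(i)}-\Phi^{(i)}\|_F^2 \le O_p(\sqrt{d/T})\cdot\|\hat\Phi^{(i)}-\Phi^{(i)}\|_F$), which never requires consistency of $\hat\theta$ itself. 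The restricted-eigenvalue / strong-convexity constant $c$ must be shown to be bounded away from zero uniformly as $d$ grows; this follows from the assumed nonsingularity (with eigenvalues bounded away from $0$ and $\infty$, which is the implicit reading of "nonsingular" in this high-dimensional context) of $\Sigma$ and of the $\hA_k^{(ir)}$, propagated through the Kronecker/outer-product structure. The remaining technical care is the truncation argument needed to pass from sub-Gaussian tails of a single $\mathcal{E}_t$ to tail bounds for quadratic forms in the $\mathcal{O}(T)$-length dependent process, for which the geometric decay of the MA representation is exactly what makes the effective sample size $\asymp T$.
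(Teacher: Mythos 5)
Your overall architecture is the same as the paper's: work at the level of $\Phi^{(i)}$ rather than the multilinear parameters, use a basic-inequality/strong-convexity argument driven by $\lambda_{\min}$ of the autocovariance, and feed in spectral-norm concentration for the sample Gram matrix and the regressor--noise cross moment of the causal sub-Gaussian process (the paper proves these itself by truncating the MA representation and blocking, then applying Vershynin's independent-case bound -- this is where the extra $\log d$ in the condition $d\log d/T\to 0$ comes from; citing dependent-sub-Gaussian results instead is a legitimate substitute). However, the decisive step of your argument is invalid as written. You bound the cross term in the basic inequality by
\begin{equation*}
\Bigl|\tr\bigl[(\hat\Phi^{(i)}-\Phi^{(i)})'S\bigr]\Bigr|\le \|S\|_s\,\|\hat\Phi^{(i)}-\Phi^{(i)}\|_F,\qquad S=\tfrac1T\sum_t \vect(\cX_{t-i})\vect(\cE_t)',
\end{equation*}
but trace duality only gives $\|S\|_s\|\hat\Phi^{(i)}-\Phi^{(i)}\|_*$ or $\|S\|_F\|\hat\Phi^{(i)}-\Phi^{(i)}\|_F$. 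The error $\hat\Phi^{(i)}-\Phi^{(i)}$ is a difference of sums of Kronecker products of nonsingular matrices and is generically of full rank $d$ (only its rearrangement $\mathcal R(\cdot)$ has low CP rank), so $\|\cdot\|_*$ can be of order $\sqrt d\,\|\cdot\|_F$; and $\|S\|_F\asymp d/\sqrt T$ even though $\|S\|_s=O_p(\sqrt{d/T})$. Hence your displayed chain $c\|\hat\Phi^{(i)}-\Phi^{(i)}\|_F^2\le O_p(\sqrt{d/T})\|\hat\Phi^{(i)}-\Phi^{(i)}\|_F$ does not follow; what it actually yields is $\|\hat\Phi^{(i)}-\Phi^{(i)}\|_F=O_p(d/\sqrt T)$, and then "spectral $\le$ Frobenius" gives only $O_p(d/\sqrt T)$, which is larger than the claimed $\sqrt{d/T}$ by a factor $\sqrt d$ and need not even vanish under $d\log d/T\to 0$ (only $d\log d/T\to0$ is assumed, not $d^2/T\to0$).

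To close this you must exploit something beyond the raw gradient norm: either (i) a uniform empirical-process bound for $\sup\{\langle\Delta,S\rangle:\Delta \text{ in the Kronecker-structured difference set},\ \|\Delta\|_F\le 1\}$, obtained by covering/chaining over the $O(\sum_k d_k^2)$-dimensional parametrization -- this is the route your parenthetical "$\sqrt{r/T}$ with $r=O(\sum_k d_k^2)$" presupposes but never supplies; or (ii) argue directly in spectral norm, which is what the paper does: it fixes a sphere $\{\sqrt{T/d}\sum_i\|\bar\Phi^{(i)}-\Phi^{(i)}\|_s=C_{dT}\}$, shows the empirical least-squares objective there exceeds $\sum_t\|\vect(\cE_t)\|_F^2$ with probability tending to one (comparing the quadratic term, lower bounded via $\lambda_{\min}(\Gamma_h)$ and the sample-autocovariance concentration, against the cross term bounded via its Lemma on $\|\sum_t\vect(\cX_{t-i})\vect(\cE_t)'\|_s$), and then uses convexity of the objective in the $\bar\Phi^{(i)}$'s to extend the exclusion from the sphere to its exterior. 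Your proposal replaces this sphere-plus-convexity device with a one-line basic inequality, and that is precisely where the norm bookkeeping breaks down.
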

The proof is given in Appendix~\ref{appendix:proofs}. 

\section{Determining Autoregressive Orders and Terms}
\label{sec:selection}

The general TenAR($p$) model (\ref{multiarp}) involves $p$ previous tensor observations (referred to as the order of the model), and for each lag $i$, it can have $R_i$ multi-linear terms (referred to as the K-rank). We collect the K-ranks in the vector $\h{R}_p=(R_1,\ldots,R_p)'$.
While the general TenAR($p$) model provides more flexibility and capability to capture different interactions among fibers of the tensor, it also poses the challenge of finding the order and the suitable number of terms for each lag. We propose an information criterion based procedure, which achieves selection consistency under both fixed and high dimensional setup.

For any given order $\tilde p$ and K-ranks $\tilde{\h{R}}_{\tilde p}=(\tilde R_1,\ldots,\tilde R_{\tilde p})'$, define the information criterion as
\begin{equation}
\label{ic}
\ic (\tilde{\h{R}}_{\tilde p}) := \frac{1}{2}\log\left(\frac{1}{dT}\sum_{t}\left\|\mathcal{X}_t - \sum_{i=1}^{\tilde p} \sum_{r=1}^{\tilde R_i} \mathcal{X}_{t-i} \times_{1} \hat\hA_{1}^{(ir)} \times_{2}  \cdots \times_{K} \hat\hA_{K}^{(ir)}\right\|_F^2\right) + g(d,T) \sum_{i=1}^{\tilde p} \tilde R_i,
\end{equation}
where $\hat{\hA}_{k}^{(ir)}$ are the estimates obtained under given order and K-ranks $\tilde{\h{R}}_{\tilde p}$. The function $g(d,T)$ controls the penalty on the complexity of the model. We assume it satisfies the following condition.
\begin{assumption}\label{assump:penalty}
$g(d,T) \to 0$ and $\frac{T}{d}  g(d,T) \to \infty$ as $T \to \infty$.
\end{assumption}
We propose two specific choices of $g(d,T)$ which satisfy the preceding assumption.
\begin{align}
    \ic_1: & \;\hbox{ with }\; g_1(d,T)=\log T/T, \label{ic1}\\
    \ic_2: & \;\hbox{ with }\; g_2(d,T)=\frac{(d_1^2 + \cdots + d_K^2 - K + 1) \log T}{dT}. \label{ic2}
\end{align}



\begin{remark}
The choice of $g_1(d,T)$ uses the total number of terms $\sum_{i=1}^{\tilde p} \tilde R_i$ as the complexity of the model, with the weight $\log T/T$. For $g_2(d,T)$, we use the total number of parameters as the complexity of the model for tensor mode $K\ge3$. If they are matrices, strictly speaking we should adjust the total number of parameters since we have additional orthogonality conditions in Proposition \ref{kruskal}. Nevertheless, for simplicity we continue to use the form since $p$ is fixed and $R_i$ is not so large in our case.
\end{remark}

 These criteria can be viewed as an extended Bayesian information criterion. Such criteria were first introduced by \cite{chen2008extended} and \cite{foygel2010extended} under different context. Similar forms were also used for selecting the number of factors \citep{bai2002determining}, the configuration of the Kronecker product \citep{cai2019hybrid,cai2019kopa}. 

In practice we typically cap the maximum order and K-rank at some given $P_{\max}$ and $R_{\max}$, so the estimated $\hat p$ and $\hat{\h{R}}$ is given by
\begin{equation}
\label{eq:joint}
    \hat{\h{R}}_{\hat p}:=(\hat R_1,\ldots,\hat R_{\hat p})' = \arg\min_{\tilde p\leq P_{\max},\tilde R_1\leq R_{\max},\ldots,\tilde R_{\tilde p}\leq R_{\max}} \ic(\tilde{\h{R}}).
\end{equation}
The joint selection of $R_i$ can be costly when $d$ and $R_{\max}$ are large. We also consider the separate selection procedure. Specifically, to select $R_i$, we fix $p$ at $P_{\max}$, and all $R_{i'}$ except $R_i$ at $R_{\max}$,
\begin{equation}
\label{eq:sep}
    \hat R_i = \arg\min_{\tilde R_i\leq R_{\max}} \ic(R_{\max},\ldots,\tilde R_i,\ldots,R_{\max}), \quad1\leq i\leq P_{\max}.
\end{equation}
The largest $i$, for which the selected $\tilde R_i>0$, is the estimated autoregressive order $\hat p$.

For the selection consistency, we need an additional assumption on the ``size'' of $\hA_k^{(ir)}$.  
\begin{assumption}\label{fullrank}
Assume there exists some constant $\eta > 0$ such that $\| \hA_{K}^{(ir)} \otimes \hA_{K-1}^{(ir)} \otimes \cdots \otimes \hA_{1}^{(ir)} \|_{F}^2 \geq \eta d$ for all $1\leq r\leq R_i,\;1 \le i \le p$.
\end{assumption}

\begin{theorem}\label{select}
Assume the conditions of Theorem~\ref{phi_convergence}, and Assumptions \ref{assump:penalty} and \ref{fullrank}.
Then for both the joint selection procedure \eqref{eq:joint} and the separate one \eqref{eq:sep},
$$\lim_{T \to \infty} P(\hat{\h{R}}_{\hat p} = \h{R}_{p}) = 1.$$
\end{theorem}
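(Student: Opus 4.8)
The plan is to reduce the model-selection consistency to two separate events: the probability of \emph{underfitting} (selecting some $\tilde{\h R}_{\tilde p}$ that does not contain the true model) goes to zero, and the probability of \emph{overfitting} (selecting a strictly larger model) goes to zero. Throughout, write $\hat L(\tilde{\h R}_{\tilde p}) := \frac{1}{dT}\sum_t \|\cX_t - \sum_{i=1}^{\tilde p}\sum_{r=1}^{\tilde R_i}\cX_{t-i}\times_1\hat\hA_1^{(ir)}\times_2\cdots\times_K\hat\hA_K^{(ir)}\|_F^2$ for the normalized residual sum of squares at a candidate $(\tilde p,\tilde{\h R}_{\tilde p})$, so that $\ic(\tilde{\h R}_{\tilde p}) = \frac12\log\hat L(\tilde{\h R}_{\tilde p}) + g(d,T)\sum_i\tilde R_i$. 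The first ingredient I would establish is a uniform (over the finitely many candidates bounded by $P_{\max},R_{\max}$) handle on $\hat L$. For the true order and K-ranks, $\hat L(\h R_p) = \frac{1}{dT}\sum_t\|\cE_t + (\text{estimation error terms})\|_F^2$; using Theorem~\ref{phi_convergence}, $\|\hat\Phi^{(i)}-\Phi^{(i)}\| = O_p(\sqrt{d/T})$, together with $\frac{1}{dT}\sum_t\|\cE_t\|_F^2 \to \sigma^2 := \tr(\Sigma)/d$ in probability (law of large numbers for the sub-Gaussian white noise, with the normalization by $d$ keeping this bounded), and a cross-term bound via Cauchy--Schwarz and stationarity of $\{\cX_t\}$, one gets $\hat L(\h R_p) = \sigma^2 + O_p(d/T)$, hence $\log\hat L(\h R_p) = \log\sigma^2 + O_p(d/T)$.

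For \textbf{overfitting}, suppose $\tilde{\h R}_{\tilde p}$ strictly contains $\h R_p$ (i.e. $\tilde p\ge p$, $\tilde R_i\ge R_i$, with at least one strict inequality or $\tilde p>p$). The fitted model is then correctly specified but over-parametrized, so $\hat L(\tilde{\h R}_{\tilde p}) \le \hat L(\h R_p)$ and in fact $\hat L(\h R_p) - \hat L(\tilde{\h R}_{\tilde p}) = O_p(d/T)$ by the same Theorem~\ref{phi_convergence}-type bound applied to the larger model (every $\hat\Phi^{(i)}$ in either fit converges to $\Phi^{(i)}$ at rate $\sqrt{d/T}$, the extra terms contributing nothing in the limit). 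Therefore $\ic(\tilde{\h R}_{\tilde p}) - \ic(\h R_p) = \frac12[\log\hat L(\tilde{\h R}_{\tilde p}) - \log\hat L(\h R_p)] + g(d,T)(\sum_i\tilde R_i - \sum_i R_i) \ge -C\,d/T + g(d,T)$, where I used $\sum_i\tilde R_i - \sum_i R_i \ge 1$ and a Taylor expansion of $\log$ around $\sigma^2>0$. By Assumption~\ref{assump:penalty}, $\frac{T}{d}g(d,T)\to\infty$ forces $g(d,T) \gg d/T$, so this difference is positive with probability tending to one. A union bound over the finitely many over-models finishes this case.

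For \textbf{underfitting}, suppose $\tilde{\h R}_{\tilde p}$ fails to contain $\h R_p$. Here the key is a lower bound on the excess residual: the best possible TenAR fit with the deficient set of terms cannot reduce the residual to $\sigma^2$, and Assumption~\ref{fullrank} (each omitted term has $\|\hA_K^{(ir)}\otimes\cdots\otimes\hA_1^{(ir)}\|_F^2 \ge \eta d$) together with the identifiability/Kruskal conditions of Proposition~\ref{kruskal} ensures the missing term contributes a non-negligible, order-$d$ amount to $\sum_t\|\cdot\|_F^2$, hence an order-one amount after dividing by $dT$ and using stationarity of $\{\cX_t\}$ and nonsingularity of $\Sigma$ and of $\mathbb E(\cX_t\otimes\cX_t)$-type matrices. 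Concretely, $\hat L(\tilde{\h R}_{\tilde p}) \ge \sigma^2 + c$ for some constant $c>0$ with probability tending to one, so $\ic(\tilde{\h R}_{\tilde p}) - \ic(\h R_p) \ge \frac12\log(1 + c/\sigma^2) - o_p(1) - g(d,T)\cdot(\text{bounded})$, which is bounded below by a positive constant since $g(d,T)\to 0$ by Assumption~\ref{assump:penalty}. Again a union bound over the finitely many deficient models suffices. Combining the two cases, $P(\hat{\h R}_{\hat p}=\h R_p)\to 1$; the separate-selection procedure \eqref{eq:sep} is handled by the identical two-case argument applied one lag at a time, since fixing the other $R_{i'}$ at $R_{\max}$ only over-parametrizes those lags and does not affect the over/under dichotomy for the lag under consideration.

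The main obstacle I anticipate is the underfitting lower bound: one must show that \emph{no} choice of coefficient matrices in the deficient model class can drive the population residual below $\sigma^2 + c$, which requires translating Assumption~\ref{fullrank} and the Kruskal identifiability condition into a quantitative statement about the minimal approximation error of $\Phi = \sum_i\sum_r\hA_K^{(ir)}\otimes\cdots\otimes\hA_1^{(ir)}$ by a sum with fewer Kronecker terms — essentially a strictly-positive gap in the best low-CP-rank approximation of $\mathcal R(\Phi)$ — and then transferring this population gap to the empirical $\hat L$ uniformly over the (compact, after the $\|\hA_k^{(ir)}\|_F=1$ normalization) parameter set via a standard stochastic-equicontinuity / uniform law of large numbers argument, using the sub-Gaussian assumption and causality to control the relevant empirical processes.
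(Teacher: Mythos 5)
Your proposal is correct and follows essentially the same route as the paper: the same overfit/underfit dichotomy on the residual sum of squares, with Theorem~\ref{phi_convergence} giving an $O_p(d/T)$ residual difference that is dominated by the penalty (via $\tfrac{T}{d}g(d,T)\to\infty$) in the overfitting case, and Assumption~\ref{fullrank} plus a minimum-eigenvalue bound on the lagged covariance and a cross-term bound yielding an order-one residual gap that dominates the vanishing penalty in the underfitting case. The obstacle you flag for underfitting is dispatched in the paper simply by bounding $\|\Phi^{(i)}-\hat\Phi^{(i)}\|_F^2\ge (R_i-\hat R_i)\eta^2 d$ directly from Assumption~\ref{fullrank}, combined with $\lambda_{\min}$ of the covariance and the bound $\frac{1}{T}\bigl\|\sum_t\vect(\cX_{t-1})\vect(\cE_t)'\bigr\|_s=O_p(\sqrt{d/T})$, rather than via the uniform low-CP-rank approximation-gap argument you anticipate.
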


The proof of Theorem~\ref{select} is in Appendix~\ref{appendix:proofs}.

\section{Numerical Results}\label{numerical}

\subsection{Simulations}\label{sec:simulations}

In this section, we study the empirical performances of the proposed estimators and the order and K-rank selection procedures. The simulation studies are grouped into three part: first on the estimation errors, second on the empirical coverage probabilities of the confidence intervals, and the last on the selection of the autoregressive order $p$ and K-ranks $R_i$. Throughout this section, we focus on order-3 tensors ($K=3$). Various combinations of tensor dimensions $d_1$, $d_2$, $d_3$, K-ranks $R_i$ and autoregressive order $p$ are considered. When $p>1$, we set $R_1=\cdots=R_p=R$ in the true model for simplicity.

For all experiments, the data $\cX_t$ are generated from the model \eqref{multiarp}, where the coefficient matrices $\hA_{k}^{(ir)}$ are generated randomly and rescaled so that $\rho(\Phi)=0.8$. For $p=2$, since any VAR($p$) can be rewritten as a VAR($1$), which is also known as the companion form of the VAR($p$) \citep{brockwell2009time}, we set the spectral radius of $2d \times 2d$ coefficient matrix in the companion form VAR($1$) to be $0.8$. Since we focus on order-3 tensors in the simulations, no further identifiability constraints on $\hA_{k}^{(ir)}$ are required except that $\|\hA_{k}^{(ir)}\|_F=1$ for $k=1,2$. The error tensors $\cE_t$ are IID normal with covariance matrix $\Sigma_e:= \mathrm{Cov}[\vect(\mathcal{E}_t)]$, for which we consider three choices, following the discussion at the beginning of Section~\ref{sec:mle}:



\begin{itemize}
    \item Setting I: $\Sigma_e = \boldsymbol{I}$.
    \item Setting II: $\Sigma_e = \boldsymbol{Q} \Lambda \boldsymbol{Q^{\prime}}$, where the elements of the diagonal matrix $\Lambda$ are IID absolute standard normal random variables, and $\boldsymbol{Q}$ is a random orthogonal matrix generated from the Haar measure.
    \item Setting III: $\Sigma_e$ takes the Kronecker product form (\ref{kronecker cov}), where ${\Sigma}_k$, $1 \le k \le K$ are generated similarly as the $\Sigma_e$ in Setting II.
\end{itemize}

\noindent {\bf Simulation I: Estimation Error.} We first consider the model \eqref{multiarp} with $p=1$ and $R=2$, and plot the estimation errors of the LSE and MLE in the log scale. 


\begin{equation*}
    \log \left\| \sum_{r=1}^{2} \left( \hat{\hA}^{(r)}_3 \otimes \hat{\hA}^{(r)}_2 \otimes \hat{\hA}^{(r)}_1 - \hA^{(r)}_3 \otimes \hA^{(r)}_2 \otimes \hA^{(r)}_1 \right)\right\|_F^2.
\end{equation*}
The projection estimator \eqref{multiprovec} (abbreviated as PROJ) and the VAR estimator (obtained by fitting VAR(1) to the vectorized tensors) are also included for comparison. Figures \ref{R2_TenAR(1)_IID} to \ref{R2_TenAR(1)_MLE} are for the three aforementioned settings of $\Sigma_e$ respectively. It is clear from the plots that the VAR estimator, which does not take advantage of the structure of \eqref{multiarp}, is out performed by all other estimators based on the TenAR(1) model, in all cases. Although the PROJ estimator is not as good as LSE and MLE, especially in Setting II and III (i.e. Figures \ref{R2_TenAR(1)_SVD} and \ref{R2_TenAR(1)_MLE}), it still improves significantly from the VAR, and can well serve as the initializer of the LSE and MLE. The performance of LSE and MLE are very similar in Setting I and II, though the LSE is slightly better. On the other hand, in Setting III when $\Sigma_e$ does take the form \eqref{kronecker cov}, the MLE estimator has a much higher estimation accuracy.

\begin{figure}[!ht]
    \centering
    \includegraphics[width = 16cm]{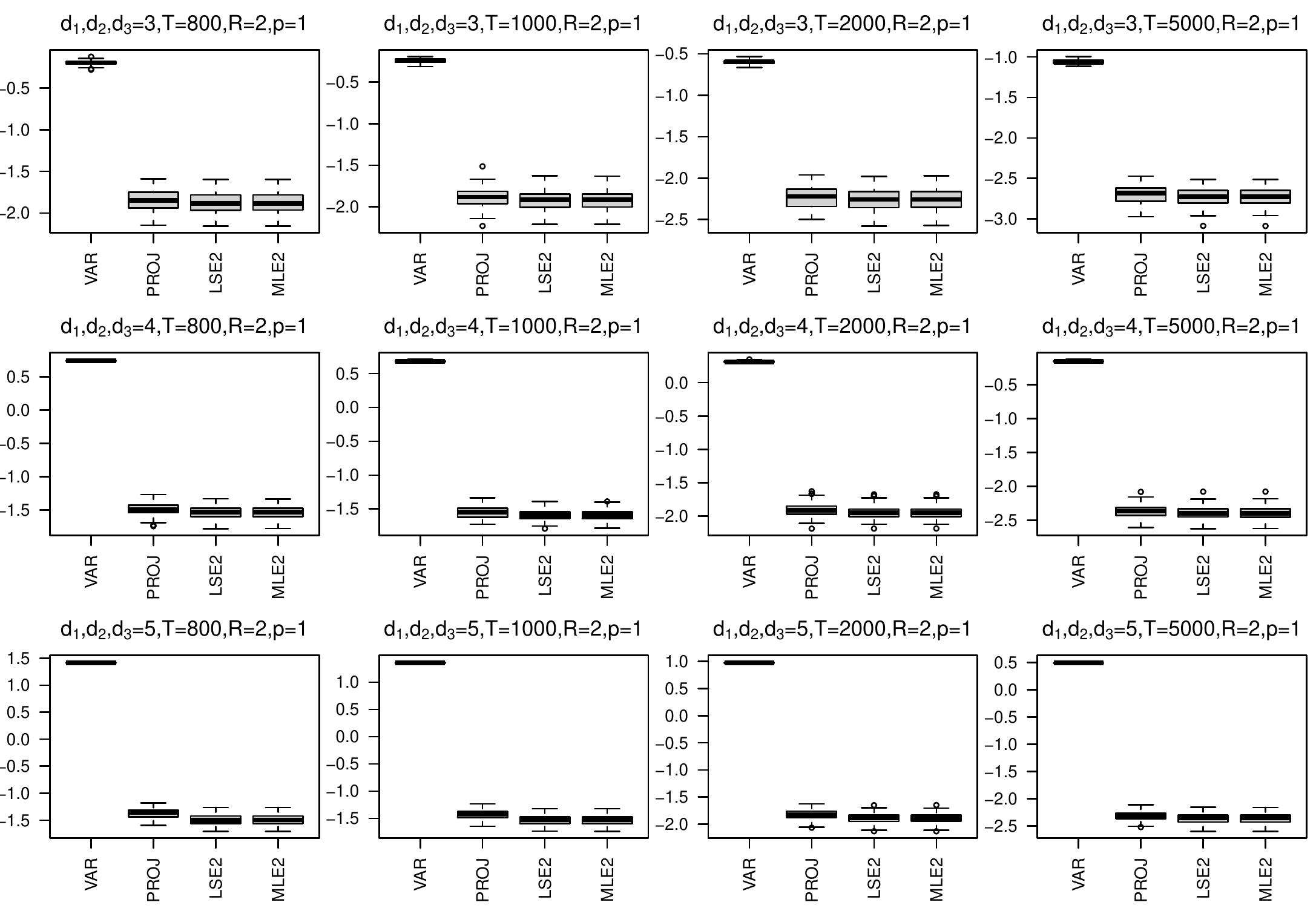}
    \caption{Estimation errors in the log scale. True model is two-term TenAR(1) under setting~I. Comparison of VAR, PROJ, LSE2 (two-term LSE), MLE2 (two-term MLE). We repeat the simulation 100 times. For each row, we fixed the dimension while let $T=800, 1000, 2000, 5000$. For each column, $T$ is fixed while $(d_1, d_2, d_3) = (3,3,3), (4,4,4), (5,5,5)$.}
    \label{R2_TenAR(1)_IID}
\end{figure}

\begin{figure}[!ht]
    \centering
    \includegraphics[width = 16cm]{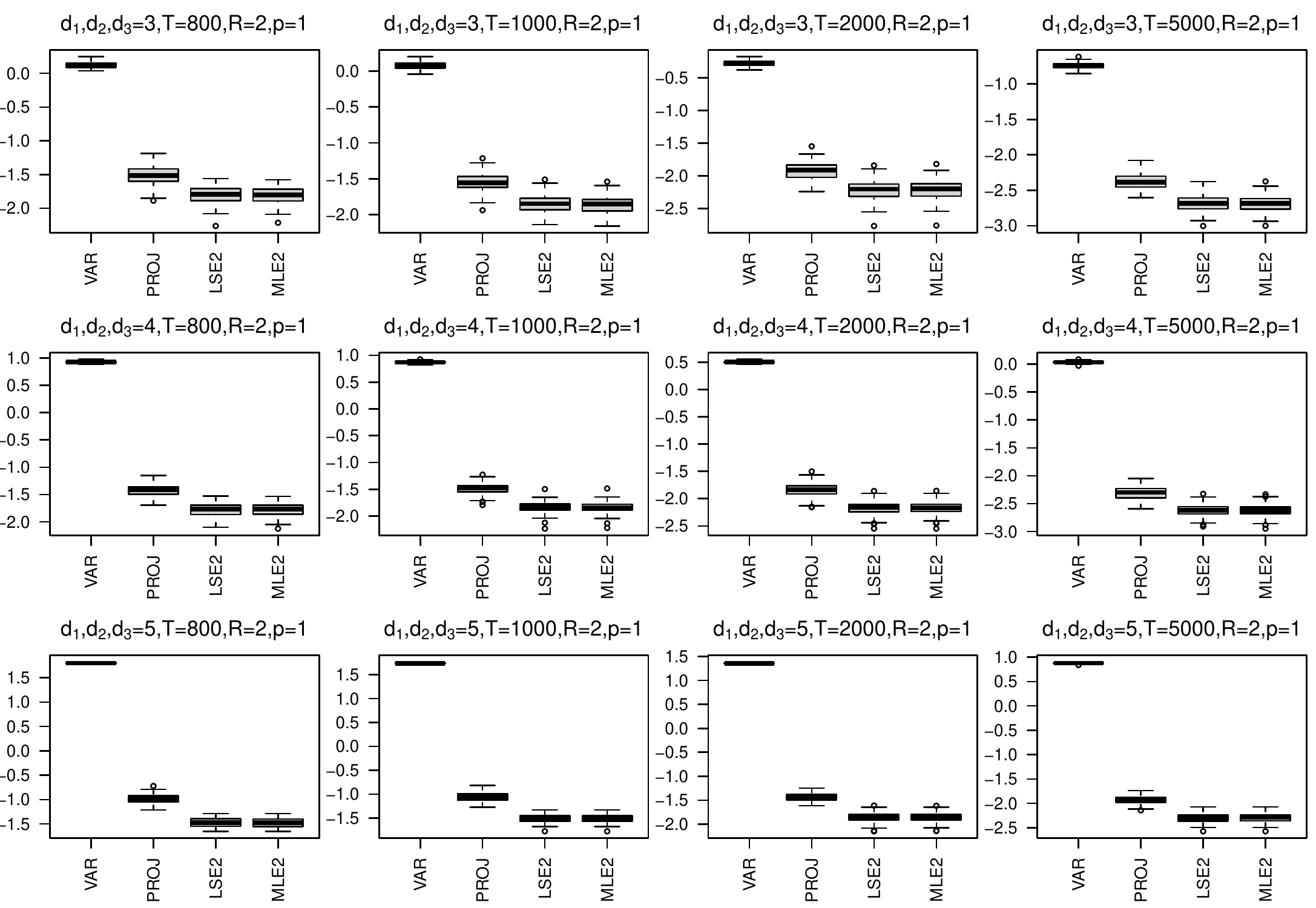}
    \caption{Estimation errors in the log scale. True model is two-term TenAR(1) under setting~II. Comparison of VAR, PROJ, LSE2 (two-term LSE), MLE2 (two-term MLE). We repeat the simulation 100 times. For each row, we fixed the dimension while let $T=800, 1000, 2000, 5000$. For each column, $T$ is fixed while $(d_1, d_2, d_3) = (3,3,3), (4,4,4), (5,5,5)$.}
    \label{R2_TenAR(1)_SVD}
\end{figure}

\begin{figure}[!ht]
    \centering
    \includegraphics[width = 16cm]{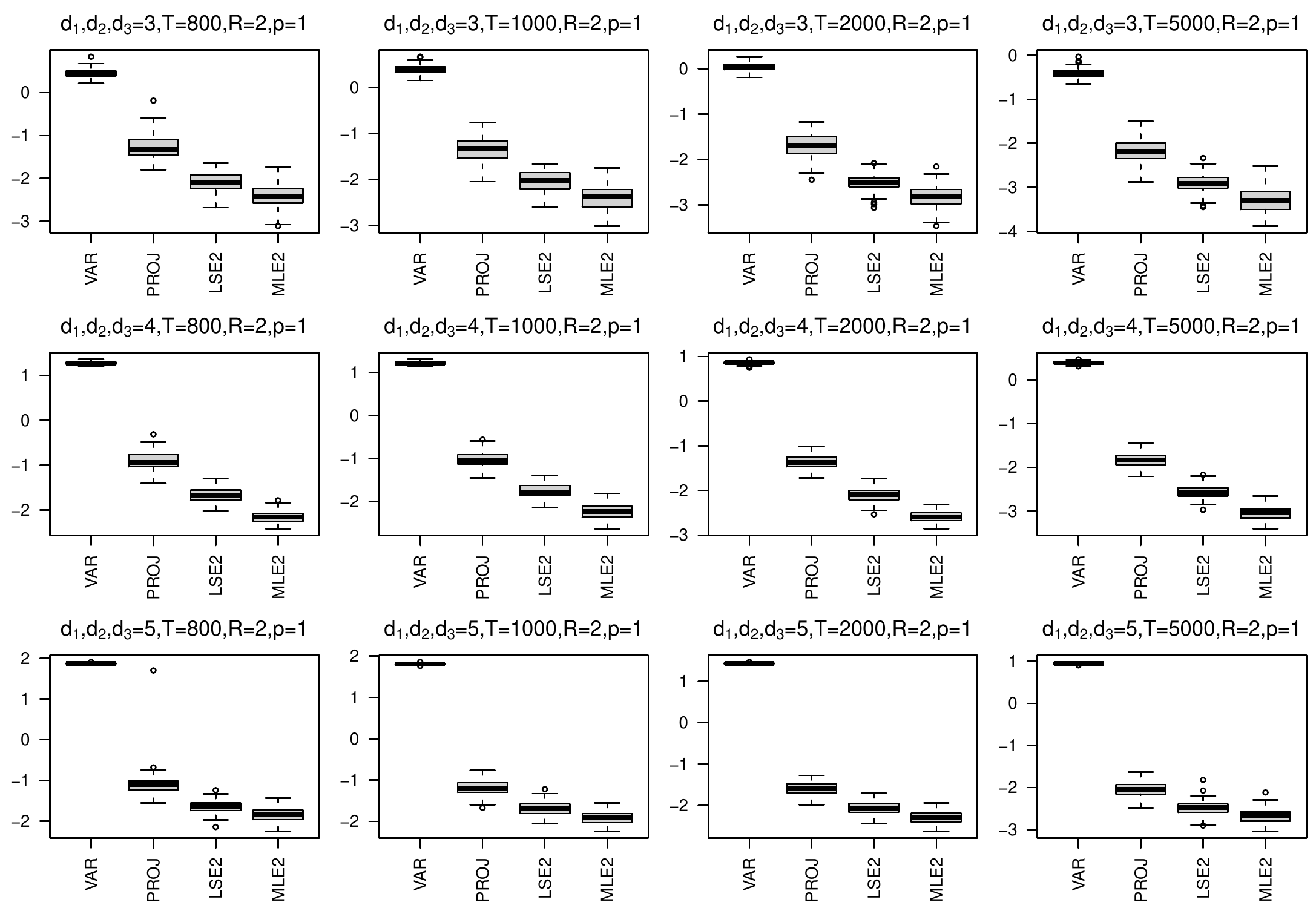}
    \caption{Estimation errors in the log scale. True model is two-term TenAR(1) under setting~III. Comparison of VAR, PROJ, LSE2 (two-term LSE), MLE2 (two-term MLE). We repeat the simulation 100 times. For each row, we fixed the dimension while let $T=900, 1000, 2000, 5000$. For each column, $T$ is fixed while $(d_1, d_2, d_3) = (3,3,3), (4,4,4), (5,5,5)$.}
    \label{R2_TenAR(1)_MLE}
\end{figure}

We also consider the model \eqref{multiarp} with $p=1,\,R=1$ (Figures \ref{R1_TenAR(1)_IID} to \ref{R1_TenAR(1)_MLE}) and $p=2,\,R_1=R_2=1$ (Figures \ref{R1_TenAR(2)_IID} to \ref{R1_TenAR(2)_MLE}). The simulation results confirm the comparisons we have made on different estimators. These additional figures are put in Appendix~\ref{Appendix:sim} for the sake of space.


\noindent {\bf Simulation II: Empirical Coverage of the Confidence Intervals.} In this experiment we look into the empirical coverage of the confidence intervals constructed based on LSE and MLE, and the asymptotic covariance matrices given in Theorems~\ref{lse_arp_multi} and \ref{mle_ar1_multi}. We cosider the TenAR(1) models with $p=1,\,R=1$ and $p=1,\,R=2$, and fix the dimensions at $d_1=d_2=d_3=2$.
Individual $95\%$ confidence intervals are constructed for each entry of $\hA_k^{(r)}$, and empirical coverage probabilities aggregated over all entries and 1000 repetitions are reported in Table~\ref{coverage}. We see the empirical coverage is close to the norminal level $95\%$ for all cases except the MLE in Setting II of $\Sigma_e$. This is not surprising because Theorem~\ref{mle_ar1_multi} is based on the assumption \eqref{kronecker cov}, which does not hold under Setting II. We also note that empirical coverage improves as the sample size increases, confirming the validity of the asymptotic normality.


\begin{table}[ht]
{
\centering
\begin{tabular}{@{}cccccccc@{}}
\toprule
                        & Setting   & \multicolumn{2}{c}{I} & \multicolumn{2}{c}{II} & \multicolumn{2}{c}{III} \\ \cmidrule(l){2-8}
                        & Estimator   & LSE    & MLE     & LSE   & MLE       & LSE   & MLE   \\ \cmidrule(l){1-8}
\multirow{3}{*}{$R=1$}  
                        & T=100            & 0.945  &   0.941   &  0.940 & 0.771  &  0.937 &   0.944\\
                        & T=200            & 0.951  &   0.950   &  0.941 & 0.774  &  0.946 &   0.948\\
                        & T=1000           & 0.953  &   0.952   &  0.951 & 0.776  &  0.955 &   0.956     \\ \cmidrule(l){2-8}
\multirow{3}{*}{$R=2$}  
                        & T=500            & 0.937  &   0.937   &  0.934 & 0.706  & 0.907 & 0.933\\
                        & T=1000           & 0.943  &   0.942   &  0.941 & 0.726 & 0.906 & 0.935\\
                        & T=2000           & 0.950  &   0.950   &  0.943 & 0.724  & 0.920 & 0.936\\ \bottomrule
\end{tabular}
\caption{Percentage of coverages of 95\% confidence intervals.}\label{coverage}
}
\end{table}

\noindent {\bf Simulation III: Model Selection.} 
In the third experiment, we examine the performance of the separate model selection procedure (\ref{eq:sep}), using the information criteria (\ref{ic1}) and (\ref{ic2}). The performance of the joint procedure \eqref{eq:joint} is slightly better, and will not be included here. The data generating model is TenAR(2) with $R_1=R_2=2$. Two choices of dimensions $d_1=d_2=d_3=3$ and $d_1=d_2=d_3=5$ are considered, and both $p_{\max}$ and $R_{\max}$ are capped at 3. We consider Setting II of $\Sigma_e$, which makes the model selection most challenging among the three settings. The empirical frequencies of the correct selection out of 1000 repetitions are reported in Table~\ref{choose}, signifying a satisfactory performance. We note that when $d_1=d_2=d_3=5$, there are in total $d=125$ individual time series under consideration, but the selection is already good enough even when the sample size is merely $T=500$. We also note that $\ic_1$ and $\ic_2$ have very similar performance.




\begin{table}[ht]
\centering
\begin{tabular}{@{}ccccc@{}}
\toprule
                       &       & \multicolumn{3}{c}{$(R_1,R_2)=(2,2)$}  \\ \cmidrule(l){3-5} 
                       &       &  T=500  &  T=800  & T=1000 \\ \midrule
\multirow{3}{*}{$\ic_1$}  
                       &(3,3,3)&(.94, .91, .99)  &(.95, .97, .99)   &(.98, .98, 1)  \\
                       &(5,5,5)&(.98, .98, .98)   &(.98, .98, .99)      &(1, 1, 1)     \\ 
                       \cmidrule(l){2-5}
\multirow{3}{*}{$\ic_2$}  
                    &(3,3,3)&(.94, .89, .99)  &(.95, .96, .99)   &(.98, .98, 1)  \\
                       &(5,5,5)&(.98, .98, .97)   &(.98, .98, .99)      &(1, 1, 1)     \\ \bottomrule
\end{tabular}
\caption{The empirical frequencies that the correct K-ranks is selected by the information criteria $\ic_1$ (\ref{ic1}) and $\ic_2$ (\ref{ic2}) under Setting II, out of 100 repetitions. The signal strength $\rho = 0.8$, $R_{\max} = 3$ and $p_{\max} = 3$. The three numbers in parenthesis refer to the corresponding frequencies of $R_1$, $R_2$ and $R_3$.}\label{choose}
\end{table}

\subsection{Initialization of the Algorithm}\label{local}

As discussed in Section~\ref{sec:proj}, properly setting the initial values of the alternating algorithms (for both LSE and MLE) is critical for them to find the global minimum. We have suggested to use the projection estimators as initializers for both $\hA_k^{(ir)}$ and $\Sigma_k$ (for MLE). In this section we attest this suggestion by comparing the performances of the following initializers, for the TenAR(1) model of dimensions $d_1=d_2=d_3=5$.
\begin{enumerate}
    \item [(i)] SCAL. Each $\hA_{k}^{(r)}$ is initialized as a scalr matrix with diagonal elements 0.5. The speific value 0.5 guarantees that the model satisfies the causality condition at the initial values of the parameters.
    \item [(ii)] RAND. The elements of $\hA_{k}^{(r)}$ are IID $N(0,1/80)$. Again, the choice of the variance $1/80$ is to guaratee the fulfilment of the causality condition at initial values.
    \item [(iii)] PROJ. The initialization by the projection estimators as introduced in Section~\ref{sec:proj}.
    \item [(iv)] PROJ-$m$. We try $m$ random initializers within a small neighborhood of the PROJ estimators, and pick the one that minimizes the least squares (for LSE) or maximizes the likelihood (for MLE).
    \item [(v)] TRUE. For each case, the sum of squares or likelihood based on the true model is minimized/maximized when the true parameters are used to initialize the algorithms.
\end{enumerate}

For all the simulations in this subsection, the true model is TenAR(1) with dimensions $d_1=d_2=d_3=5$, $R=2$, $T=1000$, and covariance matrix $\Sigma_e$ generated according to Setting III, the most general setting among three settings introduced in Section~\ref{sec:simulations}. We put similar results (Figure~\ref{fig: likelihood_mle_iid}) for setting I in Appendix~\ref{Appendix:sim}. We plot the log ratios of the maximum likelihood and the likelihood under the true parameters in Figure~\ref{likelihood_mle}, based on 100 repetitions. The results for the RAND initializer are omitted, as they are much more inferior compared with others. It is seen from the middle figure that when the true model ($R=2$) is fitted, TRUE initilizer leads to the largest likelihood, but the other three initializers work just as well, with PROJ and PROJ-10 being slightly better than SCAL. When $R=1$, the model is under-fitting, and the three initializers PROJ, PROJ-10 and SCAL lead again to about the same likelihood, while the former two are slightly better. When $R=3$ the model is over-fitting, and the SCAL results in much smaller likelihood than PROJ and PROJ-10. For all cases, while PROJ-10 is expected to lead to a larger likelihood than PROJ, we find almost no difference between these two. Therefore, we confirm our suggestion of using PROJ as the initializer in practice.

\begin{figure}[!ht]
    \centering
    \includegraphics[width = 16cm]{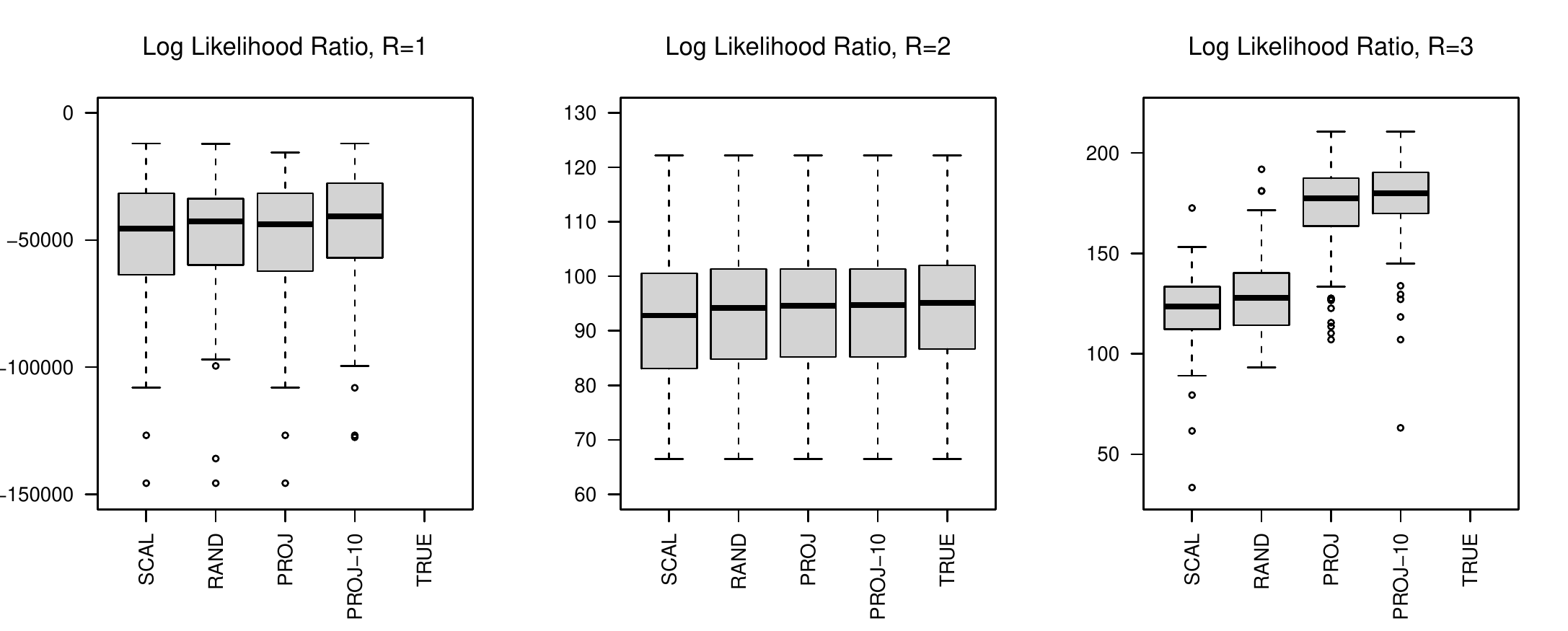}
    \caption{Log likelihood of MLE iterative estimation with different initial values, scaler matrices (SCAL), projection estimator (PROJ), best of $10$ initial values in a neighborhood of PROJ (PROJ-10) and true values (TRUE). True model is two-term ($R=2$) TenAR(1) model, $(d_1,d_2,d_3) = (5,5,5)$, $T=1000$, under Setting III.}\label{likelihood_mle}
\end{figure}

\subsection{Fama-French Research Portfolios}
\label{sec:fama-french}

As the first example, we apply the tensor autoregressive model to analyze the Fama-French research portfolios. We choose the monthly three-way-sorts data based on Size (small and large), Book-to-Market Ratio (four levels from low to high), and Operating Profitability (four levels from low to high), i.e. a $2\times4\times4$ tensor is observed at each month. The data is publicly available at the data library maintained by Prof. Kennth R. French. We consider the range from July 1963 to December 2018.



Both criteria (\ref{ic1}) and (\ref{ic2}) identifies the one-term TenAR(1) model. The estimated coefficients by MLE and the corresponding standard errors of $\hA_1$, $\hA_2$ and $\hA_3$ are reported in Table \ref{A1}, \ref{A2} and \ref{A3}, respectively. $\hA_1$ and $\hA_2$ are rescaled to have Frobenius norm one. We observe that along mode-2 (Book-to-Market Ratio, Table~{A2}), $\cX_t$ depends on $\cX_{t-1}$ through the low and high Book-to-Market Ratio groups.

We report the out-sample mean squared rolling forecast errors of the TenAR(p) models ($p=1,2,3$) in Table~\ref{rolling}. A few other models are also tried as benchmarks for comparison, including:
\begin{itemize}
    \item [--] iAR. Fit an univariate AR model to each individual time series.
    \item [--] VAR. Fit an VAR model to $\{\vect(\cX_t)\}$.
    \item [--] MEAN. Predict $\cX_{t+1}$ by the rolling sample mean up until $t$.
\end{itemize}
The mean squared error $\sum_{t=t_0-1}^{T-1}\|\hat{\mathcal{X}}_{t+1} - \mathcal{X}_{t+1}\|^2_F/\left(d(T-t_0+1)\right)$ over the period from 2001/01 ($t_0=451$) to 2018/12 ($T=666$) are reported in Table~\ref{rolling}, where $d=2\times4\times4=32$ is the size of each tensor $\mathcal{X}_{t}$. We see that although improvement of the 1-term TenAR(1) model over other methods are marginal, reflecting the difficulty of predicting returns, but it does results in the smallest prediction error (with the estimation done by MLE).

\begin{table}[ht]
\centering{
\begin{tabular}{@{}ccccccc@{}}
\toprule
      & LSE   & MLE   & iAR   & VAR   & MEAN & TOTAL \\ \cmidrule(l){2-7}
$p=1$ & 36.74 & 36.57 & 36.91 & 38.76 & 36.96 & 36.94\\
      & 37.96 & 37.73 &       &       & &\\
      & 37.74 & 37.93 &       &       & &\\ 
$p=2$ & 38.52 & 37.93 & 37.06 & 41.18 & &\\
      & 38.82 & 38.79 &       &       & &\\
      & 39.21 & 38.56 &       &       & &\\ 
$p=3$ & 38.72 & 38.10 & 37.17 & 43.96 & &\\
      & 39.62 & 38.74 &       &       & &\\
      & 41.51 & 40.02 &       &       & &\\\bottomrule
\end{tabular}
}
\caption{Mean  squared  rolling  forecast  errors of various models, for the Fama-French data, where $p$ stands for the autoregressive order $p$ of the TenAR, iAR and VAR models. In the first two columns, for each $p$, there are three rows corresponding to TenAR($p$) models with 1, 2 and 3 terms respectively. TOTAL denotes the averaged total sum of squares.}\label{rolling}
\end{table}

\begin{table}[ht]
\centering
\begin{tabular}{c|cc||cc}
      & Small & Big & Small & Big \\ \hline
Small & \begin{tabular}[c]{@{}c@{}}0.233  \end{tabular} & \begin{tabular}[c]{@{}c@{}}0.955  \end{tabular} & +     & +   \\
      & \begin{tabular}[c]{@{}c@{}}(0.104)\end{tabular} & \begin{tabular}[c]{@{}c@{}}(0.161)\end{tabular} &       &    \\
Big   & \begin{tabular}[c]{@{}c@{}}-0.193 \end{tabular} & \begin{tabular}[c]{@{}c@{}}0.795  \end{tabular} & -     & +  \\
      & \begin{tabular}[c]{@{}c@{}}(0.074)\end{tabular} & \begin{tabular}[c]{@{}c@{}}(0.131)\end{tabular} &       &    
\end{tabular}
\caption{The MLE $\hat\hA_1$ of the one-term TenAR(1) model, with standard errors shown in the parentheses. The mode-1 of the tensor corresponds to the Size. The right panel indicates whether the corresponding coefficient is significant and positive (+), significant and negative (-), or insignificant (0) at $5\%$ level.}\label{A1}
\end{table}

\begin{table}[ht]
\centering
\begin{tabular}{c|cccc||cccc}
 & LoBM & 50\%BM & 75\%BM & HiBM & LoBM & 50\%BM & 75\%BM & HiBM \\ \hline
LoBM   & \begin{tabular}[c]{@{}c@{}}-0.413  \end{tabular} & \begin{tabular}[c]{@{}c@{}}0.053 \end{tabular} & \begin{tabular}[c]{@{}c@{}}0.113 \end{tabular} & \begin{tabular}[c]{@{}c@{}}-0.329  \end{tabular} 
& +    & 0      & 0      & +    \\
       & \begin{tabular}[c]{@{}c@{}}(0.062)\end{tabular} & \begin{tabular}[c]{@{}c@{}}(0.089)\end{tabular} & \begin{tabular}[c]{@{}c@{}}(0.098)\end{tabular} & \begin{tabular}[c]{@{}c@{}}(0.054)\end{tabular} 
&      &        &        &      \\
50\%BM & \begin{tabular}[c]{@{}c@{}}-0.253  \end{tabular} & \begin{tabular}[c]{@{}c@{}}0.052 \end{tabular} & \begin{tabular}[c]{@{}c@{}}-0.091  \end{tabular} & \begin{tabular}[c]{@{}c@{}}-0.350  \end{tabular} 
& +    & 0      & 0      & +    \\
       & \begin{tabular}[c]{@{}c@{}}(0.047)\end{tabular} & \begin{tabular}[c]{@{}c@{}}(0.070)\end{tabular} & \begin{tabular}[c]{@{}c@{}}(0.079)\end{tabular} & \begin{tabular}[c]{@{}c@{}}(0.040)\end{tabular} 
&      &        &        &      \\
75\%BM & \begin{tabular}[c]{@{}c@{}}-0.130  \end{tabular} & \begin{tabular}[c]{@{}c@{}}0.023 \end{tabular} & \begin{tabular}[c]{@{}c@{}}-0.111  \end{tabular} & \begin{tabular}[c]{@{}c@{}}-0.359  \end{tabular} 
& +    & 0      & 0      & +    \\
       & \begin{tabular}[c]{@{}c@{}}(0.045)\end{tabular} & \begin{tabular}[c]{@{}c@{}}(0.065)\end{tabular} & \begin{tabular}[c]{@{}c@{}}(0.074)\end{tabular} & \begin{tabular}[c]{@{}c@{}}(0.039)\end{tabular} 
&      &        &        &      \\
HiBM   & \begin{tabular}[c]{@{}c@{}}-0.165  \end{tabular} & \begin{tabular}[c]{@{}c@{}}0.122 \end{tabular} & \begin{tabular}[c]{@{}c@{}}-0.241  \end{tabular} & \begin{tabular}[c]{@{}c@{}}-0.499  \end{tabular} 
& +    & 0      & +      & +    \\
       & \begin{tabular}[c]{@{}c@{}}(0.045)\end{tabular} & \begin{tabular}[c]{@{}c@{}}(0.066)\end{tabular} & \begin{tabular}[c]{@{}c@{}}(0.082)\end{tabular} & \begin{tabular}[c]{@{}c@{}}(0.047)\end{tabular} 
&      &        &        &    
\end{tabular}
\caption{The MLE $\hat\hA_2$ of the one-term TenAR(1) model, corresponding to the Book-to-Market Ratio.}\label{A2}
\end{table}

\begin{table}[ht]
\centering
\begin{tabular}{c|cccc||cccc}
 & LoOP & 50\%OP & 75\%OP & HiOP & LoOP & 50\%OP & 75\%OP & HiOP \\ \hline
LoOP   & \begin{tabular}[c]{@{}c@{}}-0.422  \end{tabular} & \begin{tabular}[c]{@{}c@{}}-0.206  \end{tabular} & \begin{tabular}[c]{@{}c@{}}0.459 \end{tabular} & \begin{tabular}[c]{@{}c@{}}-0.076  \end{tabular} 
& +    & +      & -      & +    \\
       & \begin{tabular}[c]{@{}c@{}}(0.065)\end{tabular} & \begin{tabular}[c]{@{}c@{}}(0.060)\end{tabular} & \begin{tabular}[c]{@{}c@{}}(0.030)\end{tabular} & \begin{tabular}[c]{@{}c@{}}(0.033)\end{tabular} 
&      &        &        &      \\
50\%OP & \begin{tabular}[c]{@{}c@{}}-0.242  \end{tabular} & \begin{tabular}[c]{@{}c@{}}-0.180  \end{tabular} & \begin{tabular}[c]{@{}c@{}}0.339 \end{tabular} & \begin{tabular}[c]{@{}c@{}}-0.114  \end{tabular} 
& +    & +      & -      & +    \\
       & \begin{tabular}[c]{@{}c@{}}(0.047)\end{tabular} & \begin{tabular}[c]{@{}c@{}}(0.052)\end{tabular} & \begin{tabular}[c]{@{}c@{}}(0.027)\end{tabular} & \begin{tabular}[c]{@{}c@{}}(0.032)\end{tabular} 
&      &        &        &      \\
75\%OP & \begin{tabular}[c]{@{}c@{}}-0.177  \end{tabular} & \begin{tabular}[c]{@{}c@{}}-0.153  \end{tabular} & \begin{tabular}[c]{@{}c@{}}0.259 \end{tabular} & \begin{tabular}[c]{@{}c@{}}-0.128  \end{tabular} 
& +    & +      & -      & +    \\
       & \begin{tabular}[c]{@{}c@{}}(0.042)\end{tabular} & \begin{tabular}[c]{@{}c@{}}(0.050)\end{tabular} & \begin{tabular}[c]{@{}c@{}}(0.034)\end{tabular} & \begin{tabular}[c]{@{}c@{}}(0.033)\end{tabular} 
&      &        &        &      \\
HiOP   & \begin{tabular}[c]{@{}c@{}}-0.123  \end{tabular} & \begin{tabular}[c]{@{}c@{}}-0.289 \end{tabular} & \begin{tabular}[c]{@{}c@{}}0.297  \end{tabular} & \begin{tabular}[c]{@{}c@{}}-0.137  \end{tabular} 
& +    & +      & -      & +    \\
       & \begin{tabular}[c]{@{}c@{}}(0.045)\end{tabular} & \begin{tabular}[c]{@{}c@{}}(0.056)\end{tabular} & \begin{tabular}[c]{@{}c@{}}(0.041)\end{tabular} & \begin{tabular}[c]{@{}c@{}}(0.038)\end{tabular} 
&      &        &        &    
\end{tabular}
\caption{The MLE $\hat\hA_3$ of the one-term TenAR(1) model, corresponding the Operating Profitability.}\label{A3}
\end{table}

\subsection{Taxi traffic in New York city}

In this section, we apply the tensor autoregressive model to analyze the New York taxi traffic data. The data includes information of individual taxi rides operated by Yellow Taxi within New York City, from January 1, 2009 to December 31, 2019. It is maintained by the Taxi \& Limousine Commission of New York City and published at \url{https://www1.nyc.gov/site/tlc/about/tlc-trip-record-data.page}. 
We consider the pick-up, drop-off locations and the hour of the pick-up for each ride, and count the number of rides from region $i$ to region $j$, during hour $k$. As a result, an order three-tensor $\cX_t=\{\cX_{t,ijk}\}$ is observed for each day $t$.

\cite{Chen2019FactorMF} introduced the factor model for tensor time series and applied it to analyze the taxi data. One of their findings is that the underlying factors are heavily loaded for certain areas and hours. The traffic among regions around Midtown and Times Square during 8am to 3pm on business days is such an example. Following this identified pattern, here we consider the pick-up and drop-off locations in four heavily loaded regions: Midtown Center, Midtown East, Midtown North and Times Square, and include rides between 8am to 3pm of business days. The observations from business days between January 1, 2009 and December 31, 2019 lead to a tensor time series of length $T=2768$. Each observation $\cX_t$ is an order-3 $4\times 4\times 7$ tensor. 

The data exhibit some strong and persistent trend, especially due to the impact of Uber and Lyft after 2015. We estimate the trend of each individual series by an exponential smoothing, which is then removed from the original data. Specifically, the trend of the series $\{\cX_{t,ijk}\}$ is estimated as $\{\mathcal{S}_{t, ijk}\}$, given by:
\begin{align*}
   \mathcal{S}_{1,ijk} &= \cX_{1,ijk}, \\
   \mathcal{S}_{t,ijk} &= \alpha \cX_{t,ijk} + (1-\alpha) \mathcal{S}_{t-1,ijk}, \quad t>1,
\end{align*}
where $\alpha$ is the smoothing factor. After some exploratory analysis, we decide to set $\alpha = 2/(63+1)$, which corresponds roughly to the $63$-day one sided moving average. We then apply the tensor autoregressive and other models to the de-trended series  $\cY_t=\cX_t-\mathcal S_t$. 

Similar to Section~\ref{sec:fama-french}, we compare the performance of the following prediction methods in terms of the mean squared rolling forecast errors.
\begin{itemize}
    \item [--] LSE and MLE: TenAR models estimated by LSE or MLE.
    \item [--] iAR. Fit an univariate AR model to each individual time series.
    \item [--] VAR. Fit an VAR model to $\{\vect(\cX_t)\}$.
    \item [--] ES. Predict $\cX_{t+1}$ by the trend $\mathcal{S}_{t+1}$ estimated by exponential smoothing.
    \item [--] RW. Random walk prediction: predict $\cX_{t+1}$ by $\cX_t$.
\end{itemize}

For the model based forecasts (LSE, MLE, iAR, and VAR), we first generate the prediction $\hat\cY_{t+1}$ by $\{\cY_1,\ldots,\cY_t\}$, based on the fitted model, then predict $\cX_{t+1}$ as $\hat\cX_{t+1}=\mathcal S_{t} + \hat\cY_{t+1}$.
The mean squared rolling forecast errors over the last two years $\sum_{t=t_0-1}^{T-1}\|{\mathcal{X}}_{t+1} - \hat{\mathcal{X}}_{t+1}\|^2_F/\left(d(T-t_0+1)\right)$ are reported in Table~\ref{taxi_roling}, where $d=4\times4\times7=112$, and $t_0$ and $T$ correspond to 01/01/2018 and 12/31/2019 respectively. 
We see that all TenAR($p$) model outperform the individual AR($p$) models and VAR($p$) models. Because of the overfitting, VAR models are the worst especially after $p\ge3$. For TenAR($p$) and individual AR($p$) models, it seems that the mean square errors are decreasing as $p$ increases for $p \le 5$. The best model is found by MLE using seven-day lags $p=7$. It implies that due to the cyclical nature of business-day data, the seasonal model may better fit the taxi data. Also, increasing the K-ranks may not always result in a better performance, for example, where TenAR(2) MLE with K-rank $\h{R}=(1,1)$ performs better than $\h{R}=(3,3)$. In this example, there is no obvious difference between the LSE and MLE, while for best model $p=5$, TenAR(p) with MLE performs slightly better than LSE. Overall, the TenAR model outperforms other autoregressive models for tensor-valued time series.

\begin{table}[ht]
\centering{
\scalebox{0.9}{
\begin{tabular}{@{}ccccccccc@{}}
\toprule
      & LSE   & MLE   & iAR   & VAR   & ES  & RW  & TOTAL \\ \cmidrule(l){2-8} 
$p=1$ & 50.72 & 50.33 & 51.64 & 52.46 & 50.59 & 83.47 & 56.48\\
      & 50.54 & 50.49 &       &       &       &       & \\
      & 50.17 & 50.81 &       &       &       &       & \\ 
$p=2$ & 48.43 & 48.46 & 51.42 & 51.33 &       &       & \\
      & 47.69 & 48.09 &       &       &       &       & \\
      & 48.06 & 48.52 &       &       &       &       & \\ 
$p=3$ & 48.02 & 47.94 & 51.31 & 53.58 &       &       & \\
      & 47.76 & 48.00 &       &       &       &       & \\
      & 47.78 & 47.94 &       &       &       &       & \\ 
$p=4$ & 47.71 & 47.83 & 50.96 & 55.90 &       &       & \\ 
$p=5$ & 47.67 & 47.18 & 49.94 & 58.32 &       &       & \\ 
$p=6$ & 47.73 & 47.02 & 49.98 & 61.57 &       &       & \\ 
$p=7$ & 47.56 & 46.57 & 49.46 & 64.91 &       &       & \\ 
$p=8$ & 47.84 & 46.74 & 49.36 & 70.41 &       &       & \\ \bottomrule
\end{tabular}
}
}
\caption{Mean squared rolling forecast errors of the taxi data. $p$ stands for the autoregressive order of the TenAR, iAR and VAR models. For each $p=1,2,3$, the three rows for LSE and MLE correspond to the TenAR($p$) model with 1, 2, 3 terms at each lag, respectively. When the autoregressive order is larger ($p=4,5,6$), the TenAR($p$) model only has one term for each lag.
The averaged total sum of squares, still denoted by TOTAL, is also reported for comparison.} \label{taxi_roling}
\end{table}


\section{Conclusion} 

We proposed an multi-linear autoregressive model for tensor-valued time series (TenAR), and consider the extension of including multiple terms and multiple lags in the autoregression. 
Both the LSE and the MLE (under a separable covariance tensor) are introduced and theoretically studied: including the asymptotic normality for the fixed-dimensional case, and convergence rates for the high-dimensional setup. We emphasize the importance of initialization in the alternating algorithms for both LSE and MLE, and suggest using the projection estimator as the initializer. 
We also propose to use the information criteria to select the autoregressive order and the number of terms for each lag, and establish the model selection consistency. 


There are a number of directions which are worth further investigations. First, we have discussed the high dimensional TenAR models in a very general setting, with minimal assumptions. If additional structures are imposed, e.g. low rankness, sparsity of $\hA_i$, it is natural to anticipate faster convergence rates. It will also be possible to make inferences for the high dimensional models as well. Second, from a practical point of view, it is interesting and important to consider the seasonal TenAR models. During the analysis of the Taxi data, we notice that the TenAR(6) and TenAR(7) model have the best predictive performance (see Table~\ref{taxi_roling} for details). This is suggesting that the data may have the cyclic behavior with period 5, which is very reasonable since we are considering the data over business days. Following the development of the univariate and vector autoregressive models, it is natural to include both the regular and seasonal autoregression in multiplicative forms, which will create challenges for both the algorithmic and theoretic aspects under the TenAR models introduced in this paper.

\clearpage

\bibliography{mybibfile}

\clearpage

\begin{appendices}
\section{Some Notations and Basics}\label{appendix:basic}
\nomenclature[R]{$T$}{Time length or sample size in the time series model}
\nomenclature[R]{$d_k$}{The dimension of coefficient matrix in $k$-th mode, i.e. $\hA_k \in \mathbb{R}^{d_k \times d_k}$}
\nomenclature[R]{$N$}{The dimension of coefficient matrix $\Phi$, $N=d_1\cdots d_K$}
\nomenclature[R]{$K$}{The number of mode of tensor $\mathcal{X}_t$}
\nomenclature[R]{$p$}{The number of order in the autoregressive model}
\nomenclature[R]{$R_i$}{The number of terms in $i$-th order. Denotes as $R$ in TenAR(1)}

\nomenclature[T]{$\mathcal{R}_t$}{$\mathcal{R}_t = \mathcal{X}_{t} - \sum_{i=1}^{p} \sum_{r=1}^{R_i} \mathcal{X}_{t-1} \times_{1} \hA_{1}^{(ir)} \times_{2} \cdots \times_{K} \hA_{K}^{(ir)}$}

\nomenclature[T]{$\mathcal{E}_t$}{Error term in tensor form at time $t$}

\nomenclature[T]{$\mathcal{X}_t$}{Tensor-valued time series at time $t$}

\nomenclature[M]{$\hA_{k}^{(ir)}$}{Coefficient matrix for lag-order $i$ and term $r$ of $k$-th mode in multi-term TenAR($p$) model}
\nomenclature[M]{$\hA_{k}^{(r)}$}{Coefficient matrix for term $r$ of $k$-th mode in multi-term TenAR($1$) model}
\nomenclature[M]{$\hA_{k}$}{Coefficient matrix of $k$-th mode in one-term TenAR($1$) model}
\nomenclature[M]{$\ha_{k}^{(ir)}$}{Vectorized coefficient matrix $\vect(\hA_{k}^{(ir)})$}
\nomenclature[M]{$\ha_{k}^{(r)}$}{Vectorized coefficient matrix $\vect(\hA_{k}^{(r)})$}
\nomenclature[M]{$\ha_{k}$}{Vectorized coefficient matrix $\vect(\hA_{k})$}

\nomenclature[M]{$\Phi$}{Coefficient matrix of $\vect(\mathcal{X}_{t-1})$ in multi-term or one-term TenAR($1$)}

\nomenclature[M]{$\Phi^{(i)}$}{$\Phi^{(i)} = \sum_{r=1}^{R_i} \Phi^{(ir)}$ denotes coefficient matrix of $i$-th order $\vect(\mathcal{X}_{t-i})$ in multi-term TenAR($p$)}

\nomenclature[M]{$\Phi^{(ir)}$}{$\Phi^{(i r)} = \hA^{(i r)}_{K} \otimes \cdots \otimes \hA^{(ir)}_{1}$ denotes the $r$-th term in $\Phi^{(i)}$}

\nomenclature[M]{$\Phi_{k}^{(r)}$}{$\Phi_{k}^{(r)} = \hA^{(r)}_{K} \otimes \cdots \otimes \hA^{(r)}_{k+1} \otimes \hA^{(r)}_{k-1} \otimes \cdots \otimes \hA^{(r)}_{1}$ denotes the $\Phi^{(1r)}$ without the $k$-th mode coefficient matrix in multi-term TenAR(1) model}

\nomenclature[M]{$\Phi_{k}$}{$\Phi_{k} = \hA_{K} \otimes \cdots \otimes \hA_{k+1} \otimes \hA_{k-1} \otimes \cdots \otimes \hA_{1}$ denotes the $\Phi$ without the $k$-th mode coefficient matrix in one-term TenAR(1) model}

\nomenclature[M]{$\Phi_{k}^{(ir)}$}{$\Phi_{k}^{(i r)} = \hA^{(i r)}_{K} \otimes \cdots \otimes \hA^{(ir)}_{k+1} \otimes \hA^{(ir)}_{k-1} \otimes \cdots \otimes \hA^{(ir)}_{1}$ denotes the $\Phi^{(i r)}$ without the $k$-th mode coefficient matrix in multi-term TenAR($p$) model}

\nomenclature[M]{$\boldsymbol{X}_{(k)}$}{$k$-th mode matricization of tensor $\mathcal{X}$}

\printnomenclature


In this section we introduce some permutation matrices and basic properties, which are frequently used in our proof. For notations about tensors, we follow the paper \cite{kolda2009tensor}. The details of notations are introduced in section 2. In the follow, we introduce two types of permutation matrices and their properties.

\begin{definition}
Permutation Matrix $\boldsymbol{P}_{m,n}$ is defined as,
\begin{equation}\label{P}
    \boldsymbol{P}_{m,n} = \sum\limits_{i=1}^{n}\sum\limits_{j=1}^{m} (\boldsymbol{U}_{ij} \otimes \boldsymbol{U}_{ij}^{\prime})
\end{equation}
where $\boldsymbol{U}_{ij}$ is the $n\times m$ matrix with one in $(i,j)$ element and zeros elsewhere. Vector Permutation Matrix  $\boldsymbol{Q}_{k}$ is defined as,
\begin{equation}\label{Q}
    \boldsymbol{Q}_{k}=\boldsymbol{I}_{d_K d_{K-1} \cdots d_{k+1}} \otimes \boldsymbol{P}_{d_k, d_{k-1} d_{k-2}\cdots d_1}
\end{equation}
where $\boldsymbol{I}$ is identity matrix and $\boldsymbol{P}$ is permutation matrix, $1 < k < K$. Especially, when $k=1$, $\boldsymbol{Q}_1 = \boldsymbol{I}_{d_K d_{K-1} \cdots d_{1}}$; when $k=K$, $\boldsymbol{Q}_K = \boldsymbol{P}_{d_K, d_{K-1} d_{K-2}\cdots d_1}$
\end{definition}


\begin{proposition}\label{permutation_properties}
For tensor $\mathcal{X} \in \mathbb{R}^{d_1 \times \cdots \times d_K}$ and its unfolding $\boldsymbol{X}_{(k)}$, $1 \le k \le K$, 
\begin{enumerate}
    \item[(i)] $\vect (\boldsymbol{X}_{(k)}) = \boldsymbol{Q}_{k} \vect(\mathcal{X}),$.
    \item[(ii)] Permutation Matrix $\boldsymbol{P}_{m,n}$ is nonsingular, and $(\boldsymbol{P}_{m,n})^{-1}=(\boldsymbol{P}_{m,n})^{\prime}=\boldsymbol{P}_{n,m}$.
    \item[(iii)] Vector Permutation Matrix $\boldsymbol{Q}$, is nonsingular, and $\boldsymbol{Q}^{-1} = \boldsymbol{Q}^{\prime}$.
\end{enumerate}

\end{proposition}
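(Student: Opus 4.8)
All three statements are facts about permutation matrices, so the plan is to reduce everything to the action on standard basis vectors; write $\boldsymbol{e}_i^{(n)}$ for the $i$-th coordinate vector of $\mathbb{R}^n$. I would establish (ii) first. Since $\boldsymbol{U}_{ij}=\boldsymbol{e}_i^{(n)}(\boldsymbol{e}_j^{(m)})^{\prime}$, the mixed-product rule $(\boldsymbol{a}\boldsymbol{b}^{\prime})\otimes(\boldsymbol{c}\boldsymbol{d}^{\prime})=(\boldsymbol{a}\otimes\boldsymbol{c})(\boldsymbol{b}\otimes\boldsymbol{d})^{\prime}$ gives $\boldsymbol{U}_{ij}\otimes\boldsymbol{U}_{ij}^{\prime}=(\boldsymbol{e}_i^{(n)}\otimes\boldsymbol{e}_j^{(m)})(\boldsymbol{e}_j^{(m)}\otimes\boldsymbol{e}_i^{(n)})^{\prime}$, hence $\boldsymbol{P}_{m,n}=\sum_{i,j}(\boldsymbol{e}_i^{(n)}\otimes\boldsymbol{e}_j^{(m)})(\boldsymbol{e}_j^{(m)}\otimes\boldsymbol{e}_i^{(n)})^{\prime}$. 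Because $\{\boldsymbol{e}_j^{(m)}\otimes\boldsymbol{e}_i^{(n)}\}_{i,j}$ and $\{\boldsymbol{e}_i^{(n)}\otimes\boldsymbol{e}_j^{(m)}\}_{i,j}$ are orthonormal bases of $\mathbb{R}^{mn}$ and $\mathbb{R}^{nm}$, this formula exhibits $\boldsymbol{P}_{m,n}$ as the $0/1$ matrix carrying one basis bijectively onto the other; in particular it is a genuine permutation matrix, so it is nonsingular and orthogonal, $\boldsymbol{P}_{m,n}^{-1}=\boldsymbol{P}_{m,n}^{\prime}$. Transposing the displayed sum and relabeling $i\leftrightarrow j$ identifies $\boldsymbol{P}_{m,n}^{\prime}$ with $\boldsymbol{P}_{n,m}$. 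The same computation records the commutation identity $\boldsymbol{P}_{m,n}(\boldsymbol{e}_j^{(m)}\otimes\boldsymbol{e}_i^{(n)})=\boldsymbol{e}_i^{(n)}\otimes\boldsymbol{e}_j^{(m)}$, equivalently $\boldsymbol{P}_{m,n}\vect(\boldsymbol{B})=\vect(\boldsymbol{B}^{\prime})$ for $\boldsymbol{B}\in\mathbb{R}^{m\times n}$, which I will reuse.

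Part (iii) is then immediate from (ii): $\boldsymbol{Q}_k=\boldsymbol{I}_{d_K\cdots d_{k+1}}\otimes\boldsymbol{P}_{d_k,\,d_{k-1}\cdots d_1}$ is a Kronecker product of two orthogonal matrices, so by $(\boldsymbol{A}\otimes\boldsymbol{B})^{\prime}=\boldsymbol{A}^{\prime}\otimes\boldsymbol{B}^{\prime}$ and the mixed-product rule, $\boldsymbol{Q}_k^{\prime}\boldsymbol{Q}_k=\boldsymbol{I}\otimes(\boldsymbol{P}_{d_k,\,d_{k-1}\cdots d_1}^{\prime}\boldsymbol{P}_{d_k,\,d_{k-1}\cdots d_1})=\boldsymbol{I}$; the boundary cases $k=1,K$ are covered by the stated conventions.

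For (i) I would work in ``Kronecker coordinates''. Under the column-major convention, $\vect(\mathcal{X})=\sum_{i_1,\dots,i_K}\mathcal{X}_{i_1\cdots i_K}\,\boldsymbol{e}_{i_K}^{(d_K)}\otimes\cdots\otimes\boldsymbol{e}_{i_1}^{(d_1)}$, while the matricization formula of Section~\ref{sec:als} together with column-stacking of $\boldsymbol{X}_{(k)}$ gives $\vect(\boldsymbol{X}_{(k)})=\sum_{i_1,\dots,i_K}\mathcal{X}_{i_1\cdots i_K}\,\boldsymbol{e}_{i_K}\otimes\cdots\otimes\boldsymbol{e}_{i_{k+1}}\otimes\boldsymbol{e}_{i_{k-1}}\otimes\cdots\otimes\boldsymbol{e}_{i_1}\otimes\boldsymbol{e}_{i_k}$. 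So it suffices to verify $\boldsymbol{Q}_k\big(\boldsymbol{e}_{i_K}\otimes\cdots\otimes\boldsymbol{e}_{i_1}\big)=\boldsymbol{e}_{i_K}\otimes\cdots\otimes\boldsymbol{e}_{i_{k+1}}\otimes\boldsymbol{e}_{i_{k-1}}\otimes\cdots\otimes\boldsymbol{e}_{i_1}\otimes\boldsymbol{e}_{i_k}$ for every index tuple. Writing $\boldsymbol{e}_{i_K}\otimes\cdots\otimes\boldsymbol{e}_{i_1}=(\boldsymbol{e}_{i_K}\otimes\cdots\otimes\boldsymbol{e}_{i_{k+1}})\otimes\big(\boldsymbol{e}_{i_k}\otimes(\boldsymbol{e}_{i_{k-1}}\otimes\cdots\otimes\boldsymbol{e}_{i_1})\big)$, the $\boldsymbol{I}_{d_K\cdots d_{k+1}}$ factor of $\boldsymbol{Q}_k$ fixes the first block, while the commutation identity with $m=d_k$ and $n=d_{k-1}\cdots d_1$, applied to $\boldsymbol{e}_{i_k}$ and the single coordinate vector $\boldsymbol{e}_{i_{k-1}}\otimes\cdots\otimes\boldsymbol{e}_{i_1}\in\mathbb{R}^{d_{k-1}\cdots d_1}$, turns $\boldsymbol{e}_{i_k}\otimes(\boldsymbol{e}_{i_{k-1}}\otimes\cdots\otimes\boldsymbol{e}_{i_1})$ into $(\boldsymbol{e}_{i_{k-1}}\otimes\cdots\otimes\boldsymbol{e}_{i_1})\otimes\boldsymbol{e}_{i_k}$, which is exactly the desired form. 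Summing over the tensor entries gives $\boldsymbol{Q}_k\vect(\mathcal{X})=\vect(\boldsymbol{X}_{(k)})$.

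The one place requiring genuine care is (i): one must pin down the $\vect$/Kronecker ordering conventions and check that the combined index $(i_1,\dots,i_{k-1})$ occupies the \emph{same} internal order inside both $\vect(\mathcal{X})$ and $\vect(\boldsymbol{X}_{(k)})$, so that it may legitimately be treated as one coordinate vector of $\mathbb{R}^{d_{k-1}\cdots d_1}$ when invoking the commutation identity. This is bookkeeping rather than a real obstacle; parts (ii) and (iii) are entirely routine once the basis-permutation picture of (ii) is in place. I would therefore present (ii), then (iii), then (i).
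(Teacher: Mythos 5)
Your argument is correct, and it is worth noting that the paper itself states Proposition~\ref{permutation_properties} without proof, treating it as a standard fact about commutation matrices; your basis-vector verification is exactly the kind of routine check that is being suppressed there. The key steps all hold: the mixed-product identity gives $\boldsymbol{P}_{m,n}=\sum_{i,j}(\boldsymbol{e}_i^{(n)}\otimes\boldsymbol{e}_j^{(m)})(\boldsymbol{e}_j^{(m)}\otimes\boldsymbol{e}_i^{(n)})^{\prime}$, which exhibits it as a permutation of an orthonormal basis (hence (ii)), the Kronecker-product structure of $\boldsymbol{Q}_k$ then gives (iii), and your index bookkeeping for (i) is consistent with the paper's conventions: the mode-$k$ matricization formula makes $i_k$ the fastest-varying index in $\vect(\boldsymbol{X}_{(k)})$ followed by $i_1,\ldots,i_{k-1},i_{k+1},\ldots,i_K$, while $\boldsymbol{Q}_1=\boldsymbol{I}$ pins down the column-major ordering of $\vect(\mathcal{X})$, so the blockwise application of $\boldsymbol{I}_{d_K\cdots d_{k+1}}\otimes\boldsymbol{P}_{d_k,\,d_{k-1}\cdots d_1}$ does exactly the right swap. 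One small slip in an aside: with the paper's convention (compare Proposition~\ref{basic_properties}(vi), $\vect(\hA)=\boldsymbol{P}_{d_{11},d_{12}}\vect(\hA^{\prime})$ for $\hA\in\mathbb{R}^{d_{11}\times d_{12}}$), the vec form of your commutation identity reads $\boldsymbol{P}_{m,n}\vect(\boldsymbol{B})=\vect(\boldsymbol{B}^{\prime})$ for $\boldsymbol{B}\in\mathbb{R}^{n\times m}$, not $\mathbb{R}^{m\times n}$; since in (i) you invoke only the basis-vector identity $\boldsymbol{P}_{m,n}(\boldsymbol{e}_j^{(m)}\otimes\boldsymbol{e}_i^{(n)})=\boldsymbol{e}_i^{(n)}\otimes\boldsymbol{e}_j^{(m)}$ with the correct dimensions, this does not affect the proof, but you should fix the stated convention if you keep that remark.
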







Next we introduce some properties relate to outer product, Kronecker product and permutation matrices.

\begin{proposition}\label{basic_properties}
For vectors $\ha_i \in \mathbb{R}^{d_i}$, $1\le i\le K$, we have
\begin{enumerate}
    \item[(i)] $\ha_i \circ \ha_j = \ha_i \ha^{\prime}_j = \ha_i \otimes \ha^{\prime}_j$, and $\ha_i \otimes \ha_j = \vect(\ha_j \ha_i^{\prime})$.
    \item[(ii)] $(\ha_1 \circ  \cdots \circ \ha_K)_{(i)} = \ha_{i} \vect^{\prime}(\ha_1 \circ \cdots \ha_{i-1}\circ \ha_{i+1} \cdots \circ \ha_K)$.
    \item[(iii)] $\vect(\ha_1 \circ \cdots \circ \ha_K) = \ha_K \otimes \cdots \otimes \ha_1$.
\end{enumerate}

For matrix $\hA \in \mathbb{R}^{d_{11} \times d_{12}}$, matrix $\boldsymbol{B} \in \mathbb{R}^{d_{21} \times d_{22}}$ and matrix $\boldsymbol{C} \in \mathbb{R}^{d_{31} \times d_{32}}$, we have
\begin{enumerate}
    \item[(iv)] $\boldsymbol{B} \otimes \hA = \boldsymbol{P}_{d_{11},d_{21}} (\hA \otimes \boldsymbol{B}) \boldsymbol{P}_{d_{22},d_{12}}$
    \item[(v)] $\boldsymbol{C}\otimes \hA \otimes \boldsymbol{B} = \boldsymbol{P}_{d_{11}d_{21},d_{31}} [\hA \otimes \boldsymbol{B} \otimes \boldsymbol{C}] \boldsymbol{P}_{d_{12},d_{22}d_{32}}$
    \item[(vi)] $\vect(\hA) = \boldsymbol{P}_{d_{11},d_{12}} \vect(\hA^{\prime})$
\end{enumerate}

\end{proposition}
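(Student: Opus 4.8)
The plan is to establish the six identities in turn; each reduces to a short, direct verification once the column-major vectorization and the mode-$k$ matricization index map of Section~\ref{basic} are written out, so no new ideas are needed and the work is purely organizational.

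\emph{Parts (i)--(iii).} For (i) I would compare entries: the $(s,t)$-entry of $\ha_i\circ\ha_j$ is $a_{i,s}a_{j,t}$, which is exactly the $(s,t)$-entry of $\ha_i\ha_j'$ and of $\ha_i\otimes\ha_j'$ (a row vector in the right-hand Kronecker slot merely replicates and scales columns), while $\ha_i\otimes\ha_j=\vect(\ha_j\ha_i')$ is immediate from stacking the columns of $\ha_j\ha_i'$, whose $t$-th column is $a_{i,t}\ha_j$. For (ii) I would use that every mode-$i$ fiber of the rank-one tensor $\ha_1\circ\cdots\circ\ha_K$ is proportional to $\ha_i$, the fiber with off-mode index $(i_1,\dots,i_{i-1},i_{i+1},\dots,i_K)$ carrying the scalar $\prod_{s\neq i}a_{s,i_s}$; collecting these fibers as the columns of $(\ha_1\circ\cdots\circ\ha_K)_{(i)}$ and matching the matricization column order $j=1+\sum_{s\neq i}(i_s-1)J_s$ against the column-major order of $\vect(\ha_1\circ\cdots\circ\ha_{i-1}\circ\ha_{i+1}\circ\cdots\circ\ha_K)$ shows that the common row vector of scalars is precisely that vectorization. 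For (iii) I would induct on $K$: the case $K=2$ is (i); for the step, use $\vect(\mathcal X)=\vect(\boldsymbol X_{(1)})$ (valid since $\boldsymbol Q_1=\boldsymbol I$ in Proposition~\ref{permutation_properties}(i)), apply (ii) at $i=1$ to get $\vect(\ha_1\circ\cdots\circ\ha_K)=\vect\!\big(\ha_1\vect^{\prime}(\ha_2\circ\cdots\circ\ha_K)\big)$, and then apply (i) and the inductive hypothesis to rewrite the right-hand side as $(\ha_K\otimes\cdots\otimes\ha_2)\otimes\ha_1=\ha_K\otimes\cdots\otimes\ha_1$.

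\emph{Parts (iv)--(vi).} The first step is to recognize that $\boldsymbol P_{m,n}$ from \eqref{P} is the commutation (vec-permutation) matrix, characterized by $\boldsymbol P_{m,n}\vect(\h M)=\vect(\h M')$ for every $\h M\in\mathbb{R}^{m\times n}$: indeed $(\boldsymbol U_{ij}\otimes\boldsymbol U_{ij}')\vect(\h M)$ extracts the $(j,i)$-entry of $\h M$ and places it in the $(i,j)$-slot, so summing over $i,j$ produces $\vect(\h M')$. Part (vi) is then just this characterization. For (iv) I would apply both sides to an arbitrary $\vect(\h M)$ of compatible size and collapse each side to a single $\vect$ of a triple matrix product using the standard identity $(\h X\otimes\h Y)\vect(\h M)=\vect(\h Y\h M\h X')$ together with the transpose property just established; the two sides then agree once the matrix product and its transpose are matched and the dimension labels $d_{11},d_{12},d_{21},d_{22}$ are carried through the permutations correctly. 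Part (v) follows from two applications of (iv) --- first commuting $\boldsymbol C$ past $\hA\otimes\hB$, then repairing the order of the inner two factors --- while tracking how the conjugating permutation matrices compose.

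The step I expect to be the main obstacle is none of these conceptually, but rather the bookkeeping: making the index conventions of the matricization map in Section~\ref{basic} line up with the column-major Kronecker ordering in (ii)--(iii), and keeping straight whether $\boldsymbol P_{m,n}$ or its transpose $\boldsymbol P_{n,m}=\boldsymbol P_{m,n}^{-1}$ appears on each side of (iv)--(v), given the particular form of \eqref{P}. Once the transpose characterization of $\boldsymbol P_{m,n}$ and the vectorization identity are in hand, every remaining step is a mechanical check.
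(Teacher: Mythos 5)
The paper states Proposition~\ref{basic_properties} without any proof (these are standard vec/Kronecker/commutation-matrix facts), so there is no argument of the paper's to compare against; your plan --- entrywise verification for (i), the fiber argument for (ii), induction through the mode-1 unfolding for (iii), and for (iv)--(vi) identifying $\boldsymbol{P}_{m,n}$ as a commutation matrix and applying $(\boldsymbol{X}\otimes\boldsymbol{Y})\vect(\boldsymbol{M})=\vect(\boldsymbol{Y}\boldsymbol{M}\boldsymbol{X}^{\prime})$ --- is exactly the standard route and is sound in outline.

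One concrete correction to your pivot fact, which is precisely the convention you flag as the delicate point: with the definition \eqref{P}, where $\boldsymbol{U}_{ij}$ is $n\times m$, the computation $(\boldsymbol{U}_{ij}\otimes\boldsymbol{U}_{ij}^{\prime})\vect(\boldsymbol{M})=\vect(\boldsymbol{U}_{ij}^{\prime}\boldsymbol{M}\boldsymbol{U}_{ij}^{\prime})=M_{ij}\,\vect\bigl(\boldsymbol{e}_j\boldsymbol{e}_i^{\prime}\bigr)$ requires $\boldsymbol{M}\in\mathbb{R}^{n\times m}$, and summing over $i,j$ gives $\boldsymbol{P}_{m,n}\vect(\boldsymbol{M})=\vect(\boldsymbol{M}^{\prime})$ for $\boldsymbol{M}\in\mathbb{R}^{n\times m}$ --- not for $\boldsymbol{M}\in\mathbb{R}^{m\times n}$ as you wrote. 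Your stated characterization is the transpose $\boldsymbol{P}_{n,m}=\boldsymbol{P}_{m,n}^{\prime}$ of the correct one; used verbatim it would deliver $\vect(\hA^{\prime})=\boldsymbol{P}_{d_{11},d_{12}}\vect(\hA)$ rather than (vi) (the correct convention makes (vi) literally the characterization with $\boldsymbol{M}=\hA^{\prime}$), and it would flip which of $\boldsymbol{P}_{m,n}$, $\boldsymbol{P}_{n,m}$ lands on each side of (iv)--(v). Since your method of verification is the right one and you explicitly plan to fix the convention by this very computation, this is a repairable bookkeeping slip rather than a missing idea. A related warning for (v): carrying out your plan --- apply (iv) with $\hA\otimes\boldsymbol{B}$ in the role of $\hA$ and $\boldsymbol{C}$ in the role of $\boldsymbol{B}$ --- produces the right-hand permutation $\boldsymbol{P}_{d_{32},\,d_{12}d_{22}}$, not the $\boldsymbol{P}_{d_{12},\,d_{22}d_{32}}$ printed in the statement; a quick sanity check with $d_{11}=d_{12}=1$ and row vectors $\boldsymbol{B},\boldsymbol{C}$ (where the printed right factor reduces to the identity) shows the printed indices cannot be right, so you should prove and record (v) with the corrected permutation rather than trying to force the displayed form.
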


\begin{proposition}\label{pur}
Let $\hA_k$ be $d_k \times d_k$ matrix, $1 \le k \le K$. Then
$$\vect(\hA_1 \otimes  \cdots \otimes \hA_K) = \boldsymbol{T}_K \vect(\ha_K \circ \cdots \circ \ha_1)$$
where $\boldsymbol{T}_K= \boldsymbol{M}_K\left(\cdots \boldsymbol{M}_4\left(\boldsymbol{M}_3(\boldsymbol{M}_2 \otimes \boldsymbol{I}_{d_3^2})\otimes \boldsymbol{I}_{d_4^2}\right)\cdots \right) \otimes \boldsymbol{I}_{d_{K}^2}$, $\boldsymbol{M}_k = \boldsymbol{I}_{d_1 d_2\cdots d_{k-1}} \otimes \boldsymbol{P}_{d_1 d_2\cdots d_{k-1}, d_k} \otimes \boldsymbol{I}_{d_k}$ for $2 \le k \le K$, $\boldsymbol{I}$ is identity matrix and $\boldsymbol{P}$ is permutation matrix defined in appendix.
\end{proposition}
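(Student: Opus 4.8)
The plan is an induction on $K$, after first rewriting the right-hand side. Since $\ha_k=\vect(\hA_k)$, Proposition~\ref{basic_properties}(iii), applied to the reversed list $\ha_K,\ldots,\ha_1$, gives $\vect(\ha_K\circ\cdots\circ\ha_1)=\ha_1\otimes\cdots\otimes\ha_K$, so the claim is equivalent to
\begin{equation*}
\vect(\hA_1\otimes\cdots\otimes\hA_K)=\boldsymbol{T}_K\,(\ha_1\otimes\cdots\otimes\ha_K).
\end{equation*}
I would also record the recursion implicit in the definition of $\boldsymbol{T}_K$, namely $\boldsymbol{T}_2=\boldsymbol{M}_2$ and $\boldsymbol{T}_K=\boldsymbol{M}_K(\boldsymbol{T}_{K-1}\otimes\boldsymbol{I}_{d_K^2})$ for $K\ge 3$.

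The engine of the proof is a sub-lemma for two square factors: for $\hC\in\mathbb{R}^{m\times m}$ and $\hD\in\mathbb{R}^{n\times n}$,
\begin{equation*}
\vect(\hC\otimes\hD)=(\boldsymbol{I}_m\otimes\boldsymbol{P}_{m,n}\otimes\boldsymbol{I}_n)\,(\vect\hC\otimes\vect\hD).
\end{equation*}
Both sides are bilinear in $(\hC,\hD)$, so it suffices to check rank-one factors $\hC=\boldsymbol{u}_1\boldsymbol{v}_1^{\prime}$ and $\hD=\boldsymbol{u}_2\boldsymbol{v}_2^{\prime}$, with $\boldsymbol{u}_1,\boldsymbol{v}_1\in\mathbb{R}^m$ and $\boldsymbol{u}_2,\boldsymbol{v}_2\in\mathbb{R}^n$. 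Then $\hC\otimes\hD=(\boldsymbol{u}_1\otimes\boldsymbol{u}_2)(\boldsymbol{v}_1\otimes\boldsymbol{v}_2)^{\prime}$, so by Proposition~\ref{basic_properties}(i), $\vect(\hC\otimes\hD)=\boldsymbol{v}_1\otimes\boldsymbol{v}_2\otimes\boldsymbol{u}_1\otimes\boldsymbol{u}_2$, whereas $\vect\hC\otimes\vect\hD=\boldsymbol{v}_1\otimes\boldsymbol{u}_1\otimes\boldsymbol{v}_2\otimes\boldsymbol{u}_2$; the two differ only by transposing the adjacent factors $\boldsymbol{u}_1$ (size $m$) and $\boldsymbol{v}_2$ (size $n$). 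Proposition~\ref{basic_properties}(vi) applied to $\hA=\boldsymbol{u}_1\boldsymbol{v}_2^{\prime}$ gives $\boldsymbol{P}_{m,n}(\boldsymbol{u}_1\otimes\boldsymbol{v}_2)=\boldsymbol{v}_2\otimes\boldsymbol{u}_1$, and flanking by $\boldsymbol{I}_m$ on the left (for the untouched outermost factor $\boldsymbol{v}_1$) and $\boldsymbol{I}_n$ on the right (for the untouched innermost factor $\boldsymbol{u}_2$) performs exactly this transposition, proving the sub-lemma.

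The base case $K=2$ is the sub-lemma with $m=d_1$, $n=d_2$, which gives $\boldsymbol{T}_2=\boldsymbol{M}_2$. For the inductive step, assume the identity for $K-1$, set $\hB=\hA_1\otimes\cdots\otimes\hA_{K-1}\in\mathbb{R}^{D\times D}$ with $D=d_1\cdots d_{K-1}$, and apply the sub-lemma to $\hC=\hB$, $\hD=\hA_K$; since $\boldsymbol{I}_D\otimes\boldsymbol{P}_{D,d_K}\otimes\boldsymbol{I}_{d_K}=\boldsymbol{M}_K$, this yields $\vect(\hA_1\otimes\cdots\otimes\hA_K)=\boldsymbol{M}_K(\vect\hB\otimes\ha_K)$. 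Substituting the inductive hypothesis $\vect\hB=\boldsymbol{T}_{K-1}(\ha_1\otimes\cdots\otimes\ha_{K-1})$ and using the mixed-product rule $(\boldsymbol{T}_{K-1}\boldsymbol{z})\otimes\ha_K=(\boldsymbol{T}_{K-1}\otimes\boldsymbol{I}_{d_K^2})(\boldsymbol{z}\otimes\ha_K)$ gives $\vect(\hA_1\otimes\cdots\otimes\hA_K)=\boldsymbol{M}_K(\boldsymbol{T}_{K-1}\otimes\boldsymbol{I}_{d_K^2})(\ha_1\otimes\cdots\otimes\ha_K)$, and the matrix in front equals $\boldsymbol{T}_K$ by the recursion noted above.

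The main obstacle is purely bookkeeping inside the sub-lemma: one must correctly identify which pair of index blocks is being transposed and match the size of $\boldsymbol{P}_{m,n}$ and of the flanking identity factors against the definition of $\boldsymbol{M}_k$. Reducing to rank-one factors is what makes this transparent and sidesteps any explicit manipulation of index maps; after that, the induction is routine, using only associativity of the Kronecker product, the mixed-product rule, and the recursive shape of $\boldsymbol{T}_K$.
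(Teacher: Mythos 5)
Your proof is correct. The paper itself states Proposition~\ref{pur} without giving a proof (it is used as a known rearrangement identity in the proof of Theorem~\ref{lse_arp_multi}), so there is no in-paper argument to compare against; your write-up in effect supplies the missing proof. The route you take is the natural one: the two-factor identity $\vect(\hC\otimes\hD)=(\boldsymbol{I}_m\otimes\boldsymbol{P}_{m,n}\otimes\boldsymbol{I}_n)(\vect\hC\otimes\vect\hD)$, verified on rank-one factors via bilinearity and Proposition~\ref{basic_properties}(i),(vi), is exactly the Van Loan--Pitsianis-type rearrangement step, and the induction on $K$ with $\hB=\hA_1\otimes\cdots\otimes\hA_{K-1}$ together with the mixed-product rule reproduces the recursion $\boldsymbol{T}_K=\boldsymbol{M}_K(\boldsymbol{T}_{K-1}\otimes\boldsymbol{I}_{d_K^2})$, $\boldsymbol{T}_2=\boldsymbol{M}_2$. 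You are also right to read the displayed definition of $\boldsymbol{T}_K$ through this recursion: the literal placement of the trailing $\otimes\boldsymbol{I}_{d_K^2}$ outside the parentheses is dimensionally inconsistent, and the recursive reading is the one confirmed by the paper's own $K=3$ remark, $\boldsymbol{T}_3=\boldsymbol{M}_3(\boldsymbol{M}_2\otimes\boldsymbol{I}_{d_3^2})$. All uses of the paper's conventions (the definition of $\boldsymbol{P}_{m,n}$, column-major vectorization, $\boldsymbol{M}_k=\boldsymbol{I}_{d_1\cdots d_{k-1}}\otimes\boldsymbol{P}_{d_1\cdots d_{k-1},d_k}\otimes\boldsymbol{I}_{d_k}$, and $\vect(\ha_K\circ\cdots\circ\ha_1)=\ha_1\otimes\cdots\otimes\ha_K$ from Proposition~\ref{basic_properties}(iii)) check out, so no gaps remain.
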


\begin{remark}
Above expression may be intricate but will have much simpler expression in lower dimensions. We can specify two important cases here.
\begin{itemize}
\item When $K=2$, $\vect(\hA_1 \otimes \hA_2) = \boldsymbol{M}_2(\ha_1 \otimes \ha_2) = \boldsymbol{M}_2 (\vect(\ha_2 \circ \ha_1)) = \boldsymbol{M}_2 (\vect(\ha_2  \ha_1^{\prime}))$.

\item When $K=3$, $\vect(\hA_1 \otimes \hA_2 \otimes \hA_3) = \boldsymbol{M}_3(\boldsymbol{M}_2 \otimes \boldsymbol{I}_{d_3^2}) \vect(\ha_3 \circ \ha_{2} \circ \ha_1)$.
\end{itemize}
\end{remark}

\begin{remark}
The closed form of re-arrangement operator $\mathcal{R}$ discussedin Section~{\ref{sec:tensor}} is 
$$\mathcal{R}(\cdot)=\vect^{-1}\boldsymbol{T}_K^{-1}\vect(\cdot).$$
it's easy to check such $\boldsymbol{T}_K$ is invertible.
\end{remark}

\section{Additional Theorems}\label{appendix:theorems}
\subsection{Asymptotics for projection estimators in one-term TenAR(1) model}

Since TenAR(1) model also has the form (\ref{eq:multi_tenAR_vec}), following standard theory of multivariate ARMA models \citep{dunsmuir1976vector, hannan2009multiple}, $\hat{\Phi}$ converges to a multivariate normal distribution:
\begin{equation*}
    \sqrt{T}\vect(\hat{\Phi} - \hA_K \otimes \cdots \otimes \hA_1) \Rightarrow \mathcal{N}(0,\Gamma_0^{-1} \otimes \Sigma),
\end{equation*}
where $\Sigma$ is the covariance matrix of $\vect(\cE_t)$ and $\Gamma_0 = \textup{Cov}(\vect(\mathcal{X}_t),\vect(\mathcal{X}_{t}))$. Thus, we have

\begin{equation*}
    \sqrt{T}\vect[(\mathcal{R}({\Phi}) - \ha_1 \circ \cdots \circ \ha_K)_{(1)}] \Rightarrow \mathcal{N}(0,\Xi_1),
\end{equation*}
where $(\cdot)_{(1)}$ denote mode-1 matricization of the tensor. The matrix $\Xi_1$ is obtained by rearranging the entries of $\Gamma_0^{-1} \otimes \Sigma$, such rearrangement operator is discussed before and the explicit form is given in Appendix.

\begin{theorem}\label{proj_arp_multi}
Consider one-term TenAR(1) model. Set $\ha_k = \vect(\hA_k)$,  $\boldsymbol{\beta}_k = \vect(\ha_1 \circ \cdots \circ \ha_{k-1} \circ \ha_{k+1} \circ \cdots  \circ \ha_K)$, $1 \le k \le K$, $\boldsymbol{P}$ and $\boldsymbol{Q}$ are permutation matrices defined in Appendix.
\begin{align*}
\boldsymbol{V}_0 =& \begin{pmatrix} 
(\boldsymbol{\beta}_1^{\prime} \otimes \boldsymbol{I}_{d_1})(\boldsymbol{Q}_{1}) - (\boldsymbol{\beta}_1^{\prime} \otimes \ha_1)(\boldsymbol{I}_{d_K} \otimes \boldsymbol{\beta}_1 \boldsymbol{\beta}_K^{\prime})\boldsymbol{P}_{d_K, d_1\cdots d_{K-1}} \boldsymbol{Q}_{K} \\
\cdots\\
(\boldsymbol{\beta}_k^{\prime} \otimes \boldsymbol{I}_{d_k})(\boldsymbol{Q}_{k}) - (\boldsymbol{\beta}_k^{\prime} \otimes \ha_k)(\boldsymbol{I}_{d_K} \otimes \boldsymbol{\beta}_k \boldsymbol{\beta}_K^{\prime}) \boldsymbol{P}_{d_K, d_1\cdots d_{K-1}} \boldsymbol{Q}_{K}
\\
\cdots \\
(\boldsymbol{\beta}_K^{\prime} \otimes \boldsymbol{I}_{d_K}) \boldsymbol{Q}_{K}
\end{pmatrix} 
\end{align*}
Assume that $\mathcal{E}_1, \cdots, \mathcal{E}_{T}$ are IID with mean zero and finite second moments. Also assume the causality condition $\prod_{k=1}^{K} \rho(\hA_k) < 1$, and coefficient matrices $\hA_k$,  $1 \le k \le K$, and $\Sigma$ are nonsingular. Then it holds that
\begin{equation*}
 \sqrt{T}\begin{pmatrix} 
\vect(\bar{\hA}_{1} - \hA_1) \\
\cdots \\
\vect(\bar{\hA}_K - \hA_K)
\end{pmatrix} \to \mathcal{N}(0,\h{V}_0 {\Xi}_1 \h{V}_0^{\prime})
\end{equation*}
\end{theorem}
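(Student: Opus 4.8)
The plan is to derive the limiting distribution of $(\bar{\hA}_1,\ldots,\bar{\hA}_K)$ by composing two asymptotic facts: (i) the central limit theorem for the VAR(1) least squares estimator $\hat\Phi$, which is standard and quoted above, and (ii) the fact that the projection step \eqref{multiprovec} — in the one-term case, the best rank-one approximation of the rearranged tensor $\mathcal{R}(\hat\Phi)$ — is, near the population truth, a smooth map whose derivative I will compute explicitly. Since $R=1$, there is no permutation ambiguity and the best rank-one approximation exists and is locally unique (the population tensor $\ha_1\circ\cdots\circ\ha_K$ is exactly rank one with a nonzero leading singular value, under the causality and nonsingularity assumptions), so the implicit function theorem applies and the map $\mathcal{R}(\Phi)\mapsto(\bar\hA_1,\ldots,\bar\hA_K)$, under the normalization $\|\hA_k\|_F=1$ for $k\le K-1$, is differentiable at the true value.

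First I would set up the chain. From the quoted VAR result, $\sqrt T\,\vect(\hat\Phi-\Phi)\Rightarrow\mathcal N(0,\Gamma_0^{-1}\otimes\Sigma)$; applying the linear rearrangement operator $\mathcal{R}$ (which equals $\vect^{-1}\boldsymbol T_K^{-1}\vect(\cdot)$, a fixed invertible linear map by Proposition~\ref{pur} and the subsequent remark) and then mode-1 matricization gives $\sqrt T\,\vect[(\mathcal R(\hat\Phi)-\ha_1\circ\cdots\circ\ha_K)_{(1)}]\Rightarrow\mathcal N(0,\Xi_1)$, exactly the intermediate statement preceding the theorem; $\Xi_1$ is obtained by the corresponding rearrangement of $\Gamma_0^{-1}\otimes\Sigma$. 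Then I would invoke the delta method: it remains to identify the Jacobian of the rank-one-decomposition map at the population point and show it equals $\h V_0$, after which $\sqrt T\,(\vect(\bar\hA_1-\hA_1)',\ldots,\vect(\bar\hA_K-\hA_K)')'\Rightarrow\mathcal N(0,\h V_0\Xi_1\h V_0')$ follows immediately.

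The heart of the argument — and the main obstacle — is computing that Jacobian and matching it to the displayed $\h V_0$. Concretely, write $\mathcal T=\ha_1\circ\cdots\circ\ha_K$ and let $\boldsymbol\beta_k=\vect(\ha_1\circ\cdots\circ\ha_{k-1}\circ\ha_{k+1}\circ\cdots\circ\ha_K)$, so by Proposition~\ref{basic_properties}(ii) the mode-$k$ matricization is $\mathcal T_{(k)}=\ha_k\boldsymbol\beta_k'$, a rank-one matrix with left singular vector $\propto\ha_k$. The perturbed estimate $\bar\hA_k$ is (up to the $\|\cdot\|_F=1$ rescaling for $k<K$) the leading left singular vector of the perturbed $\mathcal T_{(k)}$, and standard first-order singular-vector perturbation theory gives its derivative in terms of the pseudoinverse of $\mathcal T_{(k)}$ acting on the perturbation; carrying this out, using $\vect(\mathcal T_{(k)})=\boldsymbol Q_k\vect(\mathcal T)$ from Proposition~\ref{permutation_properties}(i) to express everything through a single mode-1 matricization, and accounting for the normalization constraint via the projection $\boldsymbol I-\ha_k\ha_k'$ on the $k<K$ blocks, produces precisely the block rows $(\boldsymbol\beta_k'\otimes\boldsymbol I_{d_k})\boldsymbol Q_k-(\boldsymbol\beta_k'\otimes\ha_k)(\boldsymbol I_{d_K}\otimes\boldsymbol\beta_k\boldsymbol\beta_K')\boldsymbol P_{d_K,d_1\cdots d_{K-1}}\boldsymbol Q_K$ for $k<K$ and $(\boldsymbol\beta_K'\otimes\boldsymbol I_{d_K})\boldsymbol Q_K$ for $k=K$. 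I expect the bookkeeping with the permutation matrices $\boldsymbol P$, $\boldsymbol Q$ and the Kronecker/outer-product identities of Proposition~\ref{basic_properties} to be the genuinely delicate part; the CLT inputs and the delta method are routine once the differentiability and local uniqueness of the rank-one decomposition are in hand. Finally I would note that continuity of the rescaling maps and Slutsky's theorem close the gap between "$\bar\hA_k$ normalized" and the stated conclusion.
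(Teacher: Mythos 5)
Your high-level skeleton---the standard VAR(1) CLT for $\hat\Phi$, pushed through the fixed invertible rearrangement $\mathcal{R}$ to get the $\Xi_1$ limit, followed by a first-order expansion of the projection step---matches the paper's. The gap is in how you identify that first-order expansion. Your central claim, that $\bar{\hA}_k$ is (up to rescaling) the leading left singular vector of the perturbed mode-$k$ unfolding of $\mathcal{R}(\hat\Phi)$, is false for $K\ge 3$: that recipe is the truncated HOSVD, not the minimizer of \eqref{multipro}, and the two do not coincide in general. What is true, and what your argument actually needs, is only a first-order coincidence: because the population tensor $\ha_1\circ\cdots\circ\ha_K$ is exactly rank one, the best rank-one factors and the unfolding singular vectors agree up to $O_p(T^{-1})$ terms, since the cross-contributions from perturbing the other factors enter the mode-$k$ contraction only through inner products $\ha_j^{\prime}(\bar{\ha}_j-\ha_j)$, which are second order under the normalization. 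That coincidence is precisely the thing requiring proof, and proving it amounts to linearizing the joint stationarity conditions of \eqref{multipro}---which is the paper's actual argument: it writes the gradient condition $\bar{\ha}_k\bar{\boldsymbol{\beta}}_k^{\prime}\bar{\boldsymbol{\beta}}_k=\bar{\Psi}_{(k)}\bar{\boldsymbol{\beta}}_k$ with $\bar\Psi=\mathcal{R}(\hat\Phi)$, substitutes $\bar{\Psi}_{(k)}=\ha_k\boldsymbol{\beta}_k^{\prime}+(\bar{\Psi}_{(k)}-\ha_k\boldsymbol{\beta}_k^{\prime})$, and controls the remainders using the $\sqrt{T}$-rate for the factors supplied by Lemma~\ref{hessian} (nondegeneracy of the Hessian off the scaling-indeterminacy directions). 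That lemma is also what legitimizes your appeal to the implicit function theorem, which you assert but do not establish.

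A second, related piece of unfinished business is the normalization bookkeeping, which you explicitly defer. Only $\hA_1,\ldots,\hA_{K-1}$ have unit Frobenius norm while $\hA_K$ carries the scale, so the correction on the $k<K$ blocks is not a plain projection $\boldsymbol{I}-\ha_k\ha_k^{\prime}$ applied to the mode-$k$ contraction; in the paper it arises from substituting the mode-$K$ equation into the mode-$k$ equations, which is exactly how the term $(\boldsymbol{\beta}_k^{\prime}\otimes\ha_k)(\boldsymbol{I}_{d_K}\otimes\boldsymbol{\beta}_k\boldsymbol{\beta}_K^{\prime})\boldsymbol{P}_{d_K,d_1\cdots d_{K-1}}\boldsymbol{Q}_K$ in $\h{V}_0$ is produced, and also where the $\|\boldsymbol{\beta}_k\|_F$ scaling enters. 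Since this substitution plus the first-order identification above constitute essentially the entire proof (the CLT inputs and the delta method are indeed routine), the proposal as written leaves the crux unestablished.
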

The proof of the theorem is presented in Appendix. When $K=2$, above Theorem degenerates to the form given by \cite{chen2020autoregressive}.

\subsection{Corollaries of Theorem~\ref{lse_arp_multi}}

Corollary~\ref{lse_ar1_one} is for LSE estimators in one-term TenAR(1) model with $p=1$ and $R_1=1$.
\begin{corollary}\label{lse_ar1_one}
Define $\boldsymbol{H} := \mathbb{E}(\boldsymbol{W}_t \boldsymbol{W}_t^{\prime}) + \sum_{k=1}^{K-1} \h{\gamma}_k \h{\gamma}_k^{\prime} $, $\Xi_2 =: \boldsymbol{H}^{-1} \mathbb{E}(\boldsymbol{W}_t \Sigma \boldsymbol{W}_t^{\prime}) \boldsymbol{H}^{-1}$ where
\begin{align*}
\boldsymbol{W}_t =& \begin{pmatrix} 
((\boldsymbol{X}_{t(1)}\Phi_1^{\prime}) \otimes \boldsymbol{I}_{d_1}) \boldsymbol{Q}_1 \\
\cdots \\
((\boldsymbol{X}_{t(K)}\Phi_K^{\prime}) \otimes \boldsymbol{I}_{d_K}) \boldsymbol{Q}_K
\end{pmatrix} 
\end{align*}
where $\Phi_k=\hA_K \otimes  \cdots  \hA_{k-1}\otimes \hA_{k+1}  \cdots \otimes \hA_1$ and $\h{\gamma}_{k} := (0^{\prime},\ha_k^{\prime},0^{\prime})^{\prime}$ be a vector in $\mathbb{R}^{d_1^{2} + \cdots + d_K^{2}}$, where the first $\h{0} \in \mathbb{R}^{d_1^2 + \cdots + d_{k-1}^2}$ and the later $\h{0} \in \mathbb{R}^{d_{k+1}^2 + \cdots + d_K^2}$ for $1 \le k \le K$. Under same condition of Theorem~\ref{lse_arp_multi}. It holds that
\begin{equation*}
 \sqrt{T}\begin{pmatrix} 
\vect(\hat{\hA}_{1} - \hA_1) \\
\cdots \\
\vect(\hat{\hA}_K - \hA_K)
\end{pmatrix} \to \mathcal{N}(0,{\Xi}_2)
\end{equation*}
\end{corollary}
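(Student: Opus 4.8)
The plan is to derive Corollary~\ref{lse_ar1_one} as a direct specialization of Theorem~\ref{lse_arp_multi} to the case $p=1$, $R_1=1$. First I would note that when $p=1$ and $R=1$ the index set $(i,r)$ collapses to a single value, so the block vectors $\boldsymbol W_t = (\boldsymbol W_t^{(11)},\ldots)^{\prime}$ in Theorem~\ref{lse_arp_multi} reduce to the single block $\boldsymbol W_t^{(11)}$, whose components are $((\boldsymbol X_{t(k)}\Phi_k^{\prime})\otimes\boldsymbol I_{d_k})\boldsymbol Q_k$ for $k=1,\ldots,K$, which is precisely the $\boldsymbol W_t$ stated in the corollary (using $\boldsymbol X_{t+1-i,(k)} = \boldsymbol X_{t(k)}$ when $i=1$, and $\Phi_k^{(iR_i)} = \Phi_k$ when $R_i = 1$). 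Likewise the correction vectors $\h\gamma_k^{(ir)}$ collapse to $\h\gamma_k = (\h 0^{\prime},\ha_k^{\prime},\h 0^{\prime})^{\prime}\in\mathbb R^{d_1^2+\cdots+d_K^2}$, so the matrix $\boldsymbol H = \mathbb E(\boldsymbol W_t\boldsymbol W_t^{\prime}) + \sum_{k=1}^{K-1}\h\gamma_k\h\gamma_k^{\prime}$ and $\Xi_2 = \boldsymbol H^{-1}\mathbb E(\boldsymbol W_t\Sigma\boldsymbol W_t^{\prime})\boldsymbol H^{-1}$ are exactly the specializations of the corresponding quantities in Theorem~\ref{lse_arp_multi}. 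The asymptotic normality statement is then the $p=1$, $R=1$ instance of the theorem verbatim.

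The second component of the argument is to verify that the hypotheses of Theorem~\ref{lse_arp_multi} are all implied by those assumed in the corollary: causality of the TenAR(1) model, IID error tensors with mean zero and finite second moments, and nonsingularity of the $\hA_k$ and of $\Sigma$. Since the corollary explicitly invokes "the same condition of Theorem~\ref{lse_arp_multi}," this is immediate; I would only remark that the causality condition in the TenAR($p$) framework, $\det\Phi(z)\neq 0$ for $|z|\leq 1$, becomes simply $\rho(\hA_K\otimes\cdots\otimes\hA_1)<1$, i.e. $\prod_{k=1}^K\rho(\hA_k)<1$, when $p=1$ and $R=1$.

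Finally I would record the one genuine simplification worth spelling out: in the general statement $\boldsymbol W_t$ depends on the full-rank correction matrix through $\Phi_k^{(iR_i)}$, but with a single term there is no ambiguity, so one can directly check that the score equations of the alternating least squares problem \eqref{lse}, linearized at the true parameter, have Jacobian $\boldsymbol H$ and gradient with covariance $\mathbb E(\boldsymbol W_t\Sigma\boldsymbol W_t^{\prime})$; combined with a martingale CLT for $T^{-1/2}\sum_t \boldsymbol W_{t-1}\vect(\boldsymbol E_{t(k)})$ and a law of large numbers for the Hessian, the stated limit follows. I do not expect any real obstacle here, since everything is subsumed by the already-proven Theorem~\ref{lse_arp_multi}; the only care needed is bookkeeping — matching the block structure, the permutation matrices $\boldsymbol Q_k$, and the dimensions of the zero-padding in $\h\gamma_k$ — to confirm the formulas coincide. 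Thus the proof is essentially: "substitute $p=1$, $R_1=1$ into Theorem~\ref{lse_arp_multi} and simplify the notation."
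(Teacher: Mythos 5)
Your proposal matches the paper: Corollary~\ref{lse_ar1_one} is presented there exactly as the $p=1$, $R_1=1$ specialization of Theorem~\ref{lse_arp_multi}, with no separate proof beyond identifying how $\boldsymbol{W}_t$, $\h{\gamma}_k$, $\boldsymbol{H}$ and $\Xi_2$ collapse in the one-term case. Your bookkeeping of the blocks, zero-padding, and the causality condition $\prod_{k=1}^K\rho(\hA_k)<1$ is correct, so nothing further is needed.
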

Next consider multi-term TenAR(1) model with $p=1$ and $R_1=R$.
\begin{corollary}\label{lse_ar1_multi}
Define $\boldsymbol{H} := \mathbb{E}(\boldsymbol{W}_t \boldsymbol{W}_t^{\prime}) + \sum_{r=1}^{R}\sum_{k=1}^{K-1} \h{\gamma}_{k}^{(r)} \h{\gamma}_{k}^{(r)^{\prime}}$, $\Xi_2 =: \boldsymbol{H}^{-1} \mathbb{E}(\boldsymbol{W}_t {\Sigma} \boldsymbol{W}_t^{\prime}) \boldsymbol{H}^{-1}$,
\begin{align*}
\boldsymbol{W}_t = \begin{pmatrix} 
\boldsymbol{W}^{(1)}_t \\
\cdots\\
\boldsymbol{W}^{(R)}_t\\
\end{pmatrix}, \
\boldsymbol{W}^{(r)}_t = \begin{pmatrix} 
((\boldsymbol{X}_{t(1)}{\Phi^{(r)}_1}^{\prime}) \otimes \boldsymbol{I}_{d_1}) \boldsymbol{Q}_1 \\
\cdots\\
((\boldsymbol{X}_{t(K)}{\Phi^{(r)}_K}^{\prime}) \otimes \boldsymbol{I}_{d_K})  \boldsymbol{Q}_K\\
\end{pmatrix}, \ 1 \le r \le R
\end{align*}
where $\Phi^{(r)}_k= \hA^{(r)}_K \otimes \cdots \hA^{(r)}_{k-1} \otimes \hA^{(r)}_{k+1} \cdots \otimes \hA^{(r)}_1$ and $\h{\gamma}_{k}^{(r)} := (0^{\prime},\ha_k^{ (r)\prime},0^{\prime})^{\prime}$ be a vector in $\mathbb{R}^{R(d_1^{2} + \cdots + d_K^{2})}$, where the first $\h{0} \in \mathbb{R}^{(r-1)(d_1^2 + \cdots + d_K^2) + d_1^2 + \cdots + d_{k-1}^2}$ and the later $\h{0} \in \mathbb{R}^{(R-r)(d_1^2 + \cdots + d_K^2) + d_{k+1}^2 + \cdots + d_K^2}$ for $1 \le k \le K$, $1 \le r \le R$. Under same condition of Theorem~\ref{lse_arp_multi}. It holds that
\begin{equation*}
 \sqrt{T}\begin{pmatrix} 
vec(\hat{\hA}^{(1)}_{1} - \hA_1) \\
\cdots \\
vec(\hat{\hA}^{(R)}_{K} - \hA_K)
\end{pmatrix} \to \mathcal{N}(0,\Xi_2)
\end{equation*}
\end{corollary}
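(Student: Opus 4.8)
The statement is precisely the $p=1$, $R_1=R$ instance of Theorem~\ref{lse_arp_multi}, so my plan is to deduce it from that theorem after verifying that every object in the general statement collapses to the object displayed here. When $p=1$ the compound index $(i,r)$ reduces to the single index $r\in[R]$; the ambient dimension $\sum_{i=1}^{p}R_i d$ (with $d=d_1^2+\cdots+d_K^2$) becomes $Rd$; the padded block vector $\h{\gamma}_k^{(ir)}$ becomes the $\h{\gamma}_k^{(r)}\in\mathbb{R}^{R(d_1^2+\cdots+d_K^2)}$ written in the corollary, with exactly the stated zero-padding over the indices $r$ and $k$; and the stacked regressor $\boldsymbol{W}_t^{(ir)}$ collapses to $\boldsymbol{W}_t^{(r)}$ assembled from $\Phi_k^{(r)}=\hA_K^{(r)}\otimes\cdots\otimes\hA_{k+1}^{(r)}\otimes\hA_{k-1}^{(r)}\otimes\cdots\otimes\hA_1^{(r)}$, the unfoldings $\boldsymbol{X}_{t(k)}$, and the permutation matrices $\boldsymbol{Q}_k$ of Appendix~\ref{appendix:basic}. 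Substituting these into $\boldsymbol{H}=\mathbb{E}(\boldsymbol{W}_t\boldsymbol{W}_t^{\prime})+\sum_{i,r,k}\h{\gamma}_k^{(ir)}\h{\gamma}_k^{(ir)\prime}$ and into $\Xi_2=\boldsymbol{H}^{-1}\mathbb{E}(\boldsymbol{W}_t\Sigma\boldsymbol{W}_t^{\prime})\boldsymbol{H}^{-1}$ reproduces verbatim the matrices in the corollary, and the hypotheses (causality of the TenAR(1) model, i.e. $\rho(\sum_r\hA_K^{(r)}\otimes\cdots\otimes\hA_1^{(r)})<1$; IID mean-zero errors with finite second moments; nonsingular $\hA_k^{(r)}$ and $\Sigma$; the identifiability conditions of Proposition~\ref{kruskal}) are exactly those invoked in Theorem~\ref{lse_arp_multi}. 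Hence the conclusion is immediate.

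For readers who want the reduction spelled out rather than quoted, I would also recall the shape of the underlying argument. After the mode-$k$ matricization the one-term TenAR(1) model reads $\boldsymbol{X}_{t(k)}=\sum_{r=1}^{R}\hA_k^{(r)}\boldsymbol{X}_{t-1,(k)}(\Phi_k^{(r)})^{\prime}+\boldsymbol{E}_{t(k)}$, so the causality assumption makes $\{\cX_t\}$ a causal linear process in $\{\cE_t\}$, and $\{(\cX_t,\cE_t)\}$ is strictly stationary and ergodic with finite second moments. Writing $\hat{\h{\theta}}$ and $\h{\theta}_0$ for the stacked vectorizations of the scale- and sign-normalized estimated and true coefficient matrices, one first establishes $\hat{\h{\theta}}\to\h{\theta}_0$ by a standard M-estimation argument using the identifiability granted by Proposition~\ref{kruskal}, then expands the first-order conditions of \eqref{lsearp} about $\h{\theta}_0$. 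The gradient evaluated at $\h{\theta}_0$ is a sum of martingale differences of the form $\boldsymbol{W}_{t-1}\vect(\cE_t)$, so a martingale CLT gives $T^{-1/2}\sum_t\boldsymbol{W}_{t-1}\vect(\cE_t)\Rightarrow\mathcal{N}(0,\mathbb{E}(\boldsymbol{W}_t\Sigma\boldsymbol{W}_t^{\prime}))$, while the Hessian converges in probability to $\mathbb{E}(\boldsymbol{W}_t\boldsymbol{W}_t^{\prime})$. This last matrix is singular, because the multilinear parametrization has scaling indeterminacy; the rank-one corrections $\h{\gamma}_k^{(r)}\h{\gamma}_k^{(r)\prime}$ (one for each $(k,r)$ with $1\le k\le K-1$, pointing in the direction of rescaling $\ha_k^{(r)}$) are exactly the curvature supplied by the normalization constraints $\|\hA_k^{(r)}\|_F=1$, which renders $\boldsymbol{H}$ nonsingular. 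Combining the CLT for the score, the Hessian limit, and the constraint correction by the usual sandwich/delta-method step yields $\sqrt{T}(\hat{\h{\theta}}-\h{\theta}_0)\Rightarrow\mathcal{N}(0,\Xi_2)$.

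The only genuine difficulty — and the reason this is a corollary rather than something one reproves from scratch — is the interplay between the non-convex multilinear parametrization and the scaling/permutation indeterminacy: one must pin the estimator down on the normalized slice of the parameter space, verify consistency there, and check that the constraint Jacobian enters only through the $\h{\gamma}_k^{(r)}$ blocks and nowhere else in the asymptotic variance. All of this is carried out once in the proof of Theorem~\ref{lse_arp_multi}; for the present statement it suffices to set $p=1$ and match notation, which is the content of the argument above.
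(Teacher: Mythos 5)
Your proposal is correct and matches the paper's treatment: the paper states Corollary~\ref{lse_ar1_multi} without separate proof precisely because it is the $p=1$, $R_1=R$ specialization of Theorem~\ref{lse_arp_multi}, which is the reduction you carry out. In fact the paper's proof of Theorem~\ref{lse_arp_multi} is itself written ``without loss of generality'' for the multi-term TenAR(1) case, and your sketch of that argument (consistency via Lemmas~\ref{hessian} and \ref{lemma2}, expansion of the gradient conditions, martingale CLT for $\sum_t \boldsymbol{W}_{t-1}\vect(\cE_t)$, and the rank-one corrections $\h{\gamma}_k^{(r)}\h{\gamma}_k^{(r)\prime}$ restoring invertibility of $\boldsymbol{H}$ under the normalization $\|\hA_k^{(r)}\|_F=1$) follows the same steps.
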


\subsection{Corollaries of Theorem~\ref{mle_arp_multi}}

For MLE estimators in one-term TenAR(1) model with $p=1$ and $R_1=1$, we have Corollary~\ref{mle_ar1_one}.
\begin{corollary}\label{mle_ar1_one}
Define $\boldsymbol{H} := \mathbb{E}(\boldsymbol{W}_t {\Sigma}^{-1} \boldsymbol{W}_t^{\prime}) + \sum_{k=1}^{K-1} \h{\gamma}_{k} \h{\gamma}_{k}^{ \prime}$, where $\boldsymbol{W}_t$ and $\h{\gamma}_{k}$ are defined in Corollaries~\ref{lse_ar1_one}. Let $\Xi_3 =: \boldsymbol{H}^{-1} \mathbb{E}(\boldsymbol{W}_t {\Sigma}^{-1} \boldsymbol{W}_t^{\prime}) \boldsymbol{H}^{-1}$. Assume the same conditions as Theorem~\ref{mle_arp_multi}. It holds that
\begin{equation*}
 \sqrt{T}\begin{pmatrix} 
vec(\tilde{\hA}_{1}^{(11)} - \hA_{1}^{(11)}) \\
\cdots \\
vec(\tilde{\hA}^{(pR_p)}_{K} - \hA^{(pR_p)}_K)
\end{pmatrix} \to \mathcal{N}(0,\Xi_3).
\end{equation*}
\end{corollary}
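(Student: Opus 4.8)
The plan is to obtain Corollary~\ref{mle_ar1_one} as a direct specialization of Theorem~\ref{mle_arp_multi} to the case $p=1$, $R_1=1$, so that the estimators in question are the one-term TenAR(1) MLE $\tilde\hA_1,\dots,\tilde\hA_K$ (the superscripts in the displayed vector of the corollary being inherited verbatim from the $p$-lag theorem). The only substantive work is to check that, under this specialization, the generic objects entering the limiting covariance of Theorem~\ref{mle_arp_multi} collapse exactly to the ones named in the statement, which are imported from Corollary~\ref{lse_ar1_one}. With a single lag and a single term, the stacked process $\boldsymbol{W}_t$ of Theorem~\ref{lse_arp_multi} has only one outer block, $\boldsymbol{W}_t^{(11)}$, whose $k$-th inner block is $((\boldsymbol{X}_{t(k)}\Phi_k^{\prime})\otimes\boldsymbol{I}_{d_k})\boldsymbol{Q}_k$ with $\Phi_k=\hA_K\otimes\cdots\otimes\hA_{k+1}\otimes\hA_{k-1}\otimes\cdots\otimes\hA_1$; this is precisely the $\boldsymbol{W}_t$ of Corollary~\ref{lse_ar1_one}. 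Likewise the zero-padded vectors $\h{\gamma}_k^{(ir)}$ lose the ``other lags/terms'' zero blocks and reduce to $\h{\gamma}_k=(\h{0}^{\prime},\ha_k^{\prime},\h{0}^{\prime})^{\prime}\in\mathbb{R}^{d_1^2+\cdots+d_K^2}$ with leading and trailing zero parts of lengths $d_1^2+\cdots+d_{k-1}^2$ and $d_{k+1}^2+\cdots+d_K^2$. Consequently $\boldsymbol{H}=\mathbb{E}(\boldsymbol{W}_t\Sigma^{-1}\boldsymbol{W}_t^{\prime})+\sum_{k=1}^{K-1}\h{\gamma}_k\h{\gamma}_k^{\prime}$ and $\Xi_3=\boldsymbol{H}^{-1}\mathbb{E}(\boldsymbol{W}_t\Sigma^{-1}\boldsymbol{W}_t^{\prime})\boldsymbol{H}^{-1}$, which is exactly the claimed limit, and the weak convergence follows immediately from Theorem~\ref{mle_arp_multi}.

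If one instead wanted a self-contained argument, I would specialize the proof of Theorem~\ref{mle_arp_multi}: matricize each mode to write $\boldsymbol{X}_{t(k)}=\hA_k\boldsymbol{X}_{t-1,(k)}\Phi_k^{\prime}+\h{E}_{t(k)}$, vectorize and use Proposition~\ref{permutation_properties}(i) to pass between $\vect(\boldsymbol{X}_{(k)})$ and $\vect(\cX)$, substitute a consistent $\tilde\Sigma_k$ (consistency handled exactly as in the proof of Theorem~\ref{mle_arp_multi}), and differentiate the Gaussian log-likelihood in $(\ha_1,\dots,\ha_K)$ subject to $\|\hA_k\|_F=1$ for $k\le K-1$. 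The score at the truth is a normalized sum of martingale differences of the form $\boldsymbol{W}\Sigma^{-1}\vect(\cE)$; causality together with the finite-moment and Gaussianity assumptions give stationarity, ergodicity and the Lindeberg condition, so it converges to $\mathcal{N}(0,\mathbb{E}(\boldsymbol{W}_t\Sigma^{-1}\boldsymbol{W}_t^{\prime}))$. The negative Hessian converges to $\mathbb{E}(\boldsymbol{W}_t\Sigma^{-1}\boldsymbol{W}_t^{\prime})$ plus the $K-1$ rank-one terms $\h{\gamma}_k\h{\gamma}_k^{\prime}$ contributed by the scaling constraints (enforced through Lagrange multipliers or a local reparametrization on the product of spheres), and a standard Taylor expansion of the estimating equation yields $\sqrt{T}(\vect(\tilde\hA_k-\hA_k))_{k}\Rightarrow\mathcal{N}(0,\boldsymbol{H}^{-1}\mathbb{E}(\boldsymbol{W}_t\Sigma^{-1}\boldsymbol{W}_t^{\prime})\boldsymbol{H}^{-1})$.

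I expect no genuine analytic obstacle here, since the real work (the martingale CLT, the treatment of the estimated separable covariance, and the constrained expansion) already resides in Theorem~\ref{mle_arp_multi}; the main thing to get right is clerical, namely confirming that the block-index ranges in $\boldsymbol{W}_t$ and $\h{\gamma}_k$ degenerate correctly when $p=R_1=1$ and tracking the $\Sigma^{-1}$-weighting so that it enters once inside $\boldsymbol{H}$ and once in the ``meat'' $\mathbb{E}(\boldsymbol{W}_t\Sigma^{-1}\boldsymbol{W}_t^{\prime})$ of the sandwich (this is what distinguishes $\Xi_3$ from the $\Sigma$-weighted $\Xi_2$ of the LSE). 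As a sanity check, in the degenerate unconstrained case the rank-one terms disappear and $\Xi_3$ collapses to $\boldsymbol{H}^{-1}$, the inverse Fisher information, so the MLE attains the usual efficiency bound, while in general it is a constrained-MLE sandwich.
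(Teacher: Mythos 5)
Your proposal is correct and follows exactly the paper's own route: the paper obtains Corollary~\ref{mle_ar1_one} simply as the $p=1$, $R_1=1$ specialization of Theorem~\ref{mle_arp_multi}, with $\boldsymbol{W}_t$ and $\h{\gamma}_k$ collapsing to the single-block objects of Corollary~\ref{lse_ar1_one} and the $\Sigma^{-1}$-weighted sandwich giving $\Xi_3$, and it omits further details. Your bookkeeping of the degenerate block structure (and the observation that the superscripts in the displayed vector are inherited from the general theorem) is consistent with what the paper intends.
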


Corollary~\ref{mle_ar1_multi} is for MLE estimators in multi-term TenAR(1) model with $p=1$ and $R_1=R$.
\begin{corollary}\label{mle_ar1_multi}
Define $\boldsymbol{H} := \mathbb{E}(\boldsymbol{W}_t {\Sigma}^{-1} \boldsymbol{W}_t^{\prime}) + \sum_{r=1}^{R}\sum_{k=1}^{K-1} \h{\gamma}_{k}^{(r)} \h{\gamma}_{k}^{(r)\prime}$, where $\boldsymbol{W}_t$ and $\h{\gamma}_{k}$ are defined in Corollaries~\ref{lse_ar1_multi}. Let $\Xi_3 =: \boldsymbol{H}^{-1} \mathbb{E}(\boldsymbol{W}_t {\Sigma}^{-1} \boldsymbol{W}_t^{\prime}) \boldsymbol{H}^{-1}$. Assume the same conditions as Theorem~\ref{mle_arp_multi}. It holds that
\begin{equation*}
 \sqrt{T}\begin{pmatrix} 
vec(\tilde{\hA}_{1}^{(11)} - \hA_{1}^{(11)}) \\
\cdots \\
vec(\tilde{\hA}^{(pR_p)}_{K} - \hA^{(pR_p)}_K)
\end{pmatrix} \to \mathcal{N}(0,\Xi_3).
\end{equation*}
\end{corollary}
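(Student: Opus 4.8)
The plan is to obtain Corollary~\ref{mle_ar1_multi} as a direct specialization of Theorem~\ref{mle_arp_multi} to $p=1$ and $R_1=R$, so that no new probabilistic argument is required. First I would set $p=1$ throughout the statement and notation of Theorem~\ref{mle_arp_multi}: the outer sum $\sum_{i=1}^{p}$ collapses to the single term $i=1$, every index pair $(i,r)$ reduces to $(1,r)$, and we may drop the redundant lag superscript, writing $\hA_k^{(r)}$, $\ha_k^{(r)}$, $\Phi_k^{(r)}$, $\h{\gamma}_k^{(r)}$ in place of $\hA_k^{(1r)}$, $\ha_k^{(1r)}$, $\Phi_k^{(1r)}$, $\h{\gamma}_k^{(1r)}$. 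Since the only lagged observation is $\cX_{t-1}$, the block $\boldsymbol{X}_{t+1-i,(k)}$ inside $\boldsymbol{W}_t^{(ir)}$ becomes $\boldsymbol{X}_{t,(k)}$ after re-indexing the stationary process, which is exactly the form of $\boldsymbol{W}_t^{(r)}$ displayed in Corollary~\ref{mle_ar1_multi}.

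Second, I would verify that $\boldsymbol{H}$ and $\Xi_3$ in the corollary coincide with those of the theorem under this specialization. In Theorem~\ref{mle_arp_multi}, $\boldsymbol{H} = \mathbb{E}(\boldsymbol{W}_t\Sigma^{-1}\boldsymbol{W}_t^{\prime}) + \sum_{i=1}^{p}\sum_{r=1}^{R_i}\sum_{k=1}^{K-1}\h{\gamma}_k^{(ir)}\h{\gamma}_k^{(ir)\prime}$; putting $p=1$, $R_1=R$ gives exactly $\mathbb{E}(\boldsymbol{W}_t\Sigma^{-1}\boldsymbol{W}_t^{\prime}) + \sum_{r=1}^{R}\sum_{k=1}^{K-1}\h{\gamma}_k^{(r)}\h{\gamma}_k^{(r)\prime}$, and $\Xi_3 = \boldsymbol{H}^{-1}\mathbb{E}(\boldsymbol{W}_t\Sigma^{-1}\boldsymbol{W}_t^{\prime})\boldsymbol{H}^{-1}$ is unchanged in form. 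The dimension bookkeeping for the zero padding in $\h{\gamma}_k^{(r)}$ — namely that the leading zero block has length $(r-1)(d_1^2+\cdots+d_K^2) + d_1^2+\cdots+d_{k-1}^2$ and the trailing one has length $(R-r)(d_1^2+\cdots+d_K^2) + d_{k+1}^2+\cdots+d_K^2$ — follows by substituting $p=1$ into the general padding lengths stated before Theorem~\ref{lse_arp_multi}, where the contributions $\sum_{j=1}^{i-1}R_j d$ and $\sum_{j=i+1}^{p}R_j d$ both vanish.

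Finally, the causality, IID normality, separable-covariance (of the form~\eqref{kronecker cov}), and nonsingularity hypotheses imposed in Corollary~\ref{mle_ar1_multi} are precisely the $p=1$, $R_1=R$ instances of the hypotheses of Theorem~\ref{mle_arp_multi}, so the conclusion of that theorem applies verbatim and delivers the stated asymptotic normality for the stacked vector $(\vect(\tilde{\hA}_1^{(1)}-\hA_1),\ldots,\vect(\tilde{\hA}_K^{(R)}-\hA_K))$ with covariance $\Xi_3$; the generic labels $\tilde{\hA}^{(pR_p)}_K$ that appear in the displayed limit are to be read as the TenAR(1) labels $\tilde{\hA}_K^{(R)}$. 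I do not expect any genuine analytic obstacle here: the entire content is already contained in Theorem~\ref{mle_arp_multi}, and the only place an error could creep in is the routine but slightly fiddly matching of indices and of the zero-padding block sizes in the definitions of $\h{\gamma}_k^{(r)}$ and $\boldsymbol{W}_t$ — that bookkeeping is the part to carry out with care.
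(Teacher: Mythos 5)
Your proposal is correct and matches the paper's treatment: the paper itself presents Corollary~\ref{mle_ar1_multi} as a direct specialization of Theorem~\ref{mle_arp_multi} to $p=1$, $R_1=R$ and omits the details, which is exactly the index-bookkeeping reduction you carry out (including the correct handling of the zero-padding lengths in $\h{\gamma}_k^{(r)}$ and the relabeling of the displayed estimators). No gap; nothing further is needed.
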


\section{Proof of the Theorems}\label{appendix:proofs}
\subsection{Proof of Proposition~\ref{kruskal}}
\begin{proof}[Proof of Proposition~\ref{kruskal}]
When $K=2$, \eqref{TAR1} becomes a multi-term MAR model,
and \eqref{eq:Phi_CP} corresponds to the singular value decomposition
of $\mathcal R(\Phi)$. The identifiability conditions are implied by the uniqueness conditions of the singular value decomposition. When $K \ge 3$, the classical results on the uniqueness of the tensor CP decomposition suggest that the identifiability of $\hA^{(r)}_{k}$ is granted under the Kruskal’s condition for $K$-mode tensors \citep{sidiropoulos2000uniqueness}.

\end{proof}

\subsection{Proof of Proposition~\ref{symmetric_prosig}}
\begin{proof}[Proof of Proposition~\ref{symmetric_prosig}]
Without loss of generality, we consider the case $K=3$. Claim (i) is simply the direct result of \cite{van1993approximation}. To prove Claim (ii), We have form (\ref{kronecker cov}) that $\Sigma = \Sigma_3 \otimes \Sigma_2 \otimes \Sigma_1$ and after rearrangement $\mathcal{R}(\Sigma) = \vect(\Sigma_1) \vect(\Sigma_3 \otimes \Sigma_2)^{\prime}$. Since $\vect(\hat{\Sigma}_1)$ is the first left singular vector of $\mathcal{R}(\hat{\Sigma})$, by the sin $\Theta$ theorems \citep{davis1970rotation, wedin1972perturbation}, we have
$$\sqrt{1-\left(\vect({\Sigma_1})^{\prime} \vect(\hat{\Sigma}_1)\right)^2} \le \frac{\max\left\{\|\h{Z} \hat{\h{V}} \|_F, \|\hat{\h{U}}^{\prime} \h{Z} \|_F\right\}}{\delta},$$
where $\hat{\h{U}} \in \mathbb{R}^{d_1^2}$ is the first left singular vector and $\hat{\h{V}} \in \mathbb{R}^{d_2^2 d_3^2}$ is the first right singular vector of $\mathcal{R}(\hat{\Sigma})$, $\delta = \|\mathcal{R}(\hat{\Sigma})\|_s$ and $\h{Z} = \hat{\Sigma} - \Sigma$. By Proposition~\ref{sig_convergence} we know that $\|\h{Z}\|_s = \|\hat{\Sigma} - \Sigma\|_s = o_p(1)$ for fixed dimension. Also, $\|\delta\|_s = \|\mathcal{R}({\Sigma}) + \mathcal{R}({\h{Z}})\|_s \ge \|\mathcal{R}({\Sigma})\|_s - \|\mathcal{R}(\h{Z})\|_s = \|\Sigma\|_F - \|\mathcal{R}(\h{Z})\|_s > 0$ with probability one. So $\sqrt{1-\left(\vect({\Sigma_1})^{\prime} \vect(\hat{\Sigma}_1)\right)^2} = o_p(1)$ and therefore
$$\|\hat{\Sigma}_1 - \Sigma_1\|_F = \|\vect(\hat{\Sigma}_1) - \vect(\Sigma_1)\|_F = \sqrt{2\left(1-\vect(\Sigma_1)^{\prime} \vect(\hat{\Sigma}_1)\right)} = o_p(1).$$
Similarly, we have $\|\hat{\Sigma}_3 \otimes \hat{\Sigma}_2 - \Sigma_3 \otimes \Sigma_2\|_F = o_p(1)$. By the conditions that $\|\Sigma_2\|_F = 1$ and $\|\hat{\Sigma}_3\|_F = 1$, it can be shown that $\|\hat{\Sigma}_2 - \Sigma_2\|_F =o_p(1)$ and $\|\hat{\Sigma}_3 - \Sigma_3\|_F =o_p(1)$. 

\end{proof}

\subsection{Proof of Theorem~\ref{lse_arp_multi}}
To prove Theorem~\ref{lse_arp_multi}, \ref{mle_arp_multi} and Theorem~\ref{proj_arp_multi}, we will use the results of following Lemma~\ref{hessian}. Recall in Section~2 we introduced the factor matrices $\mathbb{A}_{k} = [\ha_k^{(1)},\cdots,\ha_k^{(R)}] \in \mathbb{R}^{d_k^2 \times R}$, $1 \le k \le K$. We assume the minimum singular value of these factor matrices does not vanish as dimension $d_k$ goes to infinity, i.e. $s_{\min}(\mathbb{A}_{k}) > 0$. 

\begin{remark}
This assumption is reasonable and quite general. First, it satisfied with some simplest cases, for example, in matrix case $K=2$ it meets the assumption since the $\ha_k^{(r)}$, $r=1,\cdots,R$, are orthogonal to each other. Second, in random setting, say if coefficient matrices are generated by IID Gaussian, then it satisfied with probability one.
\end{remark}

\begin{lemma}\label{hessian}
Let $\bar{\hA}_k^{(ir)}$ be $d_k \times d_k$ matrices that $\|\bar{\hA}_K^{(ir)}\|_F=1$, $i=1,\cdots,p$, $r=1,\cdots,R_i$, $k=1,\cdots,K-1$, and $\|\bar{\Phi}^{(i)} -\Phi^{(i)}\|_F \neq 0$. Then we have $\|\bar{\hA}_k^{(ir)} - \hA_k^{(ir)}\|_F \le C \sum_{i=1}^{p}\|\bar{\Phi}^{(i)} -\Phi^{(i)}\|_F$ where $C$ is a constant.
\end{lemma}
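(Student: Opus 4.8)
The plan is to transfer the problem, via the rearrangement operator, to a local perturbation bound for the CP decomposition, and then to control the individual CP factors through the identifiability and full-rank hypotheses.

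Since the left-hand side concerns a single triple $(i,r,k)$ while the right-hand side sums over all lags, it suffices to prove $\|\bar{\hA}_k^{(ir)} - \hA_k^{(ir)}\|_F \le C\,\|\bar{\Phi}^{(i)} - \Phi^{(i)}\|_F$ for each fixed $i$; fix such an $i$ and drop it from the notation. By Proposition~\ref{pur} and the remark following it, $\mathcal{R}=\vect^{-1}\boldsymbol{T}_K^{-1}\vect$, where $\boldsymbol{T}_K$ is built from Kronecker products of identity and permutation matrices and is therefore itself a permutation matrix; hence $\mathcal R$ merely rearranges entries and preserves the Frobenius norm, so $\|\bar{\Phi}-\Phi\|_F = \|\mathcal R(\bar{\Phi})-\mathcal R(\Phi)\|_F$. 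Writing $\ha_k^{(r)}=\vect(\hA_k^{(r)})$ and $\bar{\ha}_k^{(r)}=\vect(\bar{\hA}_k^{(r)})$, one has $\mathcal R(\Phi)=\sum_{r=1}^{R_i}\ha_1^{(r)}\circ\cdots\circ\ha_K^{(r)}$ and likewise for $\bar{\Phi}$, so it is enough to bound $\|\bar{\ha}_k^{(r)}-\ha_k^{(r)}\|_2$ by a multiple of $\|\mathcal R(\bar{\Phi})-\mathcal R(\Phi)\|_F$.

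The direction we need is the ``expansive'' one: a telescoping of the Kronecker products immediately gives $\|\bar{\Phi}-\Phi\|_F \lesssim \sum_{r,k}\|\bar{\hA}_k^{(r)}-\hA_k^{(r)}\|_F$, but we require the reverse, i.e.\ that the map reconstructing $\mathcal R(\Phi)$ from its factors has a Lipschitz inverse near the true point. I would obtain this from a Jacobian argument. Consider the map $\psi$ sending the constrained parameter $(\ha_k^{(r)})_{k\le K,\,r\le R_i}$, with $\|\ha_k^{(r)}\|_2=1$ for $k\le K-1$, to $\sum_r\ha_1^{(r)}\circ\cdots\circ\ha_K^{(r)}$; its differential at the true parameter, in a tangent direction $(\h{h}_k^{(r)})$ with $\h{h}_k^{(r)}\perp\ha_k^{(r)}$ for $k<K$, is $\sum_r\sum_k\ha_1^{(r)}\circ\cdots\circ\h{h}_k^{(r)}\circ\cdots\circ\ha_K^{(r)}$. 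The key claim is that this differential is injective: under the identifiability conditions of Proposition~\ref{kruskal} together with the standing hypothesis $s_{\min}(\mathbb{A}_k)>0$ for all $k$ (so each $\mathbb{A}_k$ has full column rank $R_i$), no nonzero tangent direction is annihilated. A concrete way to see this, which also supplies the constant, is to pass to the mode-$k$ matricization $(\mathcal R(\Phi))_{(k)}=\mathbb{A}_k\boldsymbol{B}_k^{\mathsf T}$, with $\boldsymbol{B}_k$ the Khatri--Rao product of the remaining factor matrices $\mathbb{A}_{k'}$, $k'\neq k$ (of full column rank under Proposition~\ref{kruskal}), so that $\mathrm{col}((\mathcal R(\Phi))_{(k)})=\mathrm{col}(\mathbb{A}_k)$ with smallest nonzero singular value bounded below in terms of $s_{\min}(\mathbb{A}_k)$; Wedin's theorem controls the perturbation of this column space, and the separation of the CP directions guaranteed by Kruskal's condition lets one peel off each $\ha_k^{(r)}$ individually. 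Granting injectivity, the quantitative inverse function theorem yields a neighborhood of the true parameter on which $\psi$ is bi-Lipschitz, hence $\|\bar{\ha}_k^{(r)}-\ha_k^{(r)}\|_2\le C'\,\|\mathcal R(\bar{\Phi})-\mathcal R(\Phi)\|_F$ for $\bar{\ha}$ in that neighborhood, the constant $C'$ absorbing the reciprocal of the smallest singular value of the differential, $s_{\min}(\mathbb{A}_k)$, and the magnitudes of the true factors; combining with the previous paragraph and using $\|\bar{\hA}_k^{(r)}-\hA_k^{(r)}\|_F=\|\bar{\ha}_k^{(r)}-\ha_k^{(r)}\|_2$ finishes the argument. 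In the one-term case $R_i=1$ this collapses to the nearest-Kronecker-product setting: $\mathcal R(\Phi)$ is rank one, $\ha_k^{(r)}$ is its leading left singular vector up to scale, and the bound is Davis--Kahan/Wedin for a rank-one matrix with gap $\|\mathcal R(\Phi)\|_F$.

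The main obstacle is precisely the injectivity of the CP-reconstruction differential when $R_i\ge 2$: Kruskal's condition yields uniqueness but not, on its own, a nondegenerate Jacobian, so one must genuinely use $s_{\min}(\mathbb{A}_k)>0$ to rule out directions along which $\psi$ is first-order flat, and then track how $C$ depends on $s_{\min}(\mathbb{A}_k)$ and on the true coefficient matrices. It should also be noted that, because the identifiable parameterization still carries a sign ambiguity, the bound can hold only for $\bar{\hA}$ in a neighborhood of the true $\hA$ (with matching signs) --- which is exactly the regime in which the lemma is applied in the proofs of Theorems~\ref{lse_arp_multi}, \ref{mle_arp_multi} and~\ref{proj_arp_multi}.
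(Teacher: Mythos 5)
Your overall strategy --- reduce to a single lag, pass to the norm-preserving rearranged form, and deduce the bound from nondegeneracy of the factor-to-product map modulo the scaling/mixing indeterminacies, with constants governed by $s_{\min}(\mathbb{A}_k)$ --- is in spirit the same as the paper's, since the Hessian of $\|\bar{\Phi}^{(i)}-\Phi^{(i)}\|_F^2$ at the true parameter is exactly the Gram matrix of the differential whose injectivity you want. But the decisive step is missing: you yourself flag the injectivity of that differential for $R_i\ge 2$ as the main obstacle and do not prove it, and the route you sketch does not close it. Wedin/Davis--Kahan controls only the perturbation of $\mathrm{col}(\mathbb{A}_k)=\mathrm{col}\bigl((\mathcal R(\Phi))_{(k)}\bigr)$; stably recovering the individual columns $\ha_k^{(r)}$ from that subspace is precisely the hard part when $R_i\ge 2$, and Kruskal's condition cannot supply it, being a purely qualitative uniqueness criterion with no modulus of continuity --- ``the separation guaranteed by Kruskal's condition lets one peel off each $\ha_k^{(r)}$'' is an assertion, not an argument. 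Moreover, the quantitative inverse function theorem you invoke needs exactly the lower bound on the smallest singular value of the restricted differential that is in question, so as written the argument is circular at its key point.

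The paper fills this gap by direct computation rather than CP-perturbation theory: it groups the modes into a rank-$R$ bilinear factorization $\Phi^{(i)}=\sum_r \ha_r\boldsymbol{b}_r^{\prime}$ with $\ha_r=\vect(\hA_{K-1}^{(ir)}\otimes\cdots\otimes\hA_1^{(ir)})$ and $\boldsymbol{b}_r=\vect(\hA_K^{(ir)})$, writes the Hessian of $f=\|\sum_r(\bar\ha_r\bar{\boldsymbol{b}}_r^{\prime}-\ha_r\boldsymbol{b}_r^{\prime})\|_F^2$ in closed form, lower-bounds the two diagonal block matrices by $s_{\min}(\h{B})^2$ and $s_{\min}(\h{A})^2$ (the latter via singular-value interlacing, since $\h{A}=\mathbb{A}_{K-1}\odot\cdots\odot\mathbb{A}_1$ is a column submatrix of $\mathbb{A}_{K-1}\otimes\cdots\otimes\mathbb{A}_1$, using the standing assumption $s_{\min}(\mathbb{A}_k)>0$), shows the off-diagonal Schur-complement correction has rank at most $R^2$, and identifies the resulting null space with the $R^2$ exact invariance directions $(\ha_i,\boldsymbol{b}_j)\mapsto(\ha_i+\delta\ha_j,\;\boldsymbol{b}_j-\delta\boldsymbol{b}_i)$, which are excluded by the constraints; a second-order Taylor expansion of $f$ at the truth then gives the lemma. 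To complete your proposal you would need an analogous explicit lower bound on the restricted Jacobian; your caveat that the bound is local and requires matching signs is fine and is shared by the paper, whose expansion is likewise local at the true parameter.
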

\begin{proof}
It is sufficient to show that for each $i$ we have $\|\bar{\hA}_k^{(ir)} - \hA_k^{(ir)}\|_F \le C \|\bar{\Phi}^{(i)} -\Phi^{(i)}\|_F$. We fix $i$, we denote $R_i$ as $R$ for simplicity, and let $\ha_r = \vect(\hA_{K-1}^{(ir)} \otimes \cdots \otimes \hA_{1}^{(ir)})$ and $\boldsymbol{b}_r = \vect(\hA_{K}^{(ir)})$, $r=1,\cdots,R$. $\h{B}=[\boldsymbol{b}_1, \cdots, \boldsymbol{b}_{R}]$ and $\h{A} =[\ha_1, \cdots, \ha_{R}]$. $\h{B}$ is mode-$K$ factor matrix that meets our assumption $s_{\min}(\h{B}) \ge c_2 > 0$. Observe that $\h{A}$ is column-wise Khatri-Rao product of factor matrices such that $\hA = \mathbb{A}_{K-1} \odot \cdots \odot \mathbb{A}_1$, which is column sub-matrix of $\mathbb{A}_{K-1} \otimes \cdots \otimes \mathbb{A}_1$. By singular value interlacing theorem [\cite{horn2012}] and $s_{\min}(\mathbb{A}_{k}) > c_1 > 0$, we know,
$$s_{\min}(\hA) \ge s_{\min}(\mathbb{A}_{K-1} \otimes \cdots \otimes \mathbb{A}_1) = \Pi_{k=1}^{K-1} s_{\min}(\mathbb{A}_{k}) > 0$$
Let $f(\bar{\ha}_1,\cdots,\bar{\ha}_R,\bar{\boldsymbol{b}}_1,\cdots,\bar{\boldsymbol{b}}_R) = \|\sum_{r=1}^{R} (\bar{\ha}_r \bar{\boldsymbol{b}}_r^{\prime} -\ha_r \boldsymbol{b}_r^{\prime} )\|_F^2$. We have $\|\bar{\Phi}^{(i)} - \Phi^{(i)} \|_F^2=f(\bar{\ha}_1,\cdots,\bar{\boldsymbol{b}}_R)$. Denote $\boldsymbol{H}(\boldsymbol{x})$ be the Hessian matrix of $f(\boldsymbol{x})$. Consider $\boldsymbol{x}=[\bar{\ha}_1,\cdots,\bar{\boldsymbol{b}}_R]$ as some local parameters around $\boldsymbol{x}_0=[\ha_1,\cdots,\boldsymbol{b}_R]$, $\Delta \boldsymbol{x} = \boldsymbol{x} - \boldsymbol{x}_0$. Note that $f(\boldsymbol{x}_0)$ and $\Delta f(\boldsymbol{x}_0)$ is zero and by the Tyler expansion of $f(\boldsymbol{x})$ at $\boldsymbol{x}_0$, we only need to show that 
\begin{equation}\label{H_claim_1}
    \Delta \boldsymbol{x}^{\prime}\boldsymbol{H}(\boldsymbol{x}_0)\Delta \boldsymbol{x} \ge c \Delta \boldsymbol{x}^{\prime}\Delta \boldsymbol{x}
\end{equation}
for some constant $c$. Let $\beta_{ij} = \boldsymbol{b}_i^{\prime} \boldsymbol{b}_j$ and $\alpha_{ij} = \ha_i^{\prime} \ha_j$. Taking second derivatives of $f(\boldsymbol{x})$ we have
\begin{align}\label{H}
\frac{1}{2}\boldsymbol{H}(\boldsymbol{x}_0) =& \begin{pmatrix} 
\beta_{11} \boldsymbol{I}_{1} &\cdots& \beta_{R1} \boldsymbol{I}_{1} & \ha_1 \boldsymbol{b}_1^{\prime}&\cdots&\ha_R \boldsymbol{b}_1^{\prime} \\
\vdots &&\vdots&\vdots&&\vdots& \\
\beta_{1R} \boldsymbol{I}_{1} &\cdots& \beta_{RR} \boldsymbol{I}_{1} & \ha_1 \boldsymbol{b}_R^{\prime}&\cdots&\ha_R \boldsymbol{b}_R^{\prime} \\
\boldsymbol{b}_1 \ha_1^{\prime}&\cdots&\boldsymbol{b}_R \ha_1^{\prime} & \alpha_{11} \boldsymbol{I}_{2} &\cdots& \alpha_{R1} \boldsymbol{I}_{2} \\
\vdots &&\vdots&\vdots&&\vdots& \\
\boldsymbol{b}_1 \ha_R^{\prime}&\cdots&\boldsymbol{b}_R \ha_R^{\prime} & \alpha_{1R} \boldsymbol{I}_{2} &\cdots& \alpha_{RR} \boldsymbol{I}_{2} 
\end{pmatrix}.
\end{align}    
Before we move on, we define some notations. First, as shown in (\ref{H}) we view $\boldsymbol{H}(\boldsymbol{x}_0)$ as $2R \times 2R$ block matrix, and denote the dimension of $\boldsymbol{H}(\boldsymbol{x}_0)$ as $n$. Let $\boldsymbol{v}_i^j$ be vectors such that the $i$-th block equals $\ha_j$, the $(R+j)$-th block equals $-\boldsymbol{b}_i$ and other places are zero. $\boldsymbol{V} = \spn\{\boldsymbol{v}_i^j,\, i,j=1,\cdots,R\}$. Second, we also view $\boldsymbol{H}(\boldsymbol{x}_0)$ as $2 \times 2$ block matrix
\begin{align}
\frac{1}{2}\boldsymbol{H}(\boldsymbol{x}_0) =& \begin{pmatrix} 
\boldsymbol{H}_1 & \boldsymbol{H}_2 \\
\boldsymbol{H}_3 & \boldsymbol{H}_4
\end{pmatrix}, 
\end{align}
where $\boldsymbol{H}_1$ is top-left $R \times R$ blocks in (\ref{H}). Now we claim that
\begin{equation}\label{H_claim_2}
    \min_{\boldsymbol{\mu} \neq 0,\, \boldsymbol{\mu} \notin \boldsymbol{V}} |\boldsymbol{\mu}^{\prime} \boldsymbol{H}(\boldsymbol{x}_0) \boldsymbol{\mu}| > c > 0.
\end{equation}
Next we prove this claim. By block matrix diagonalization, we have
\begin{equation}
    \rank(\boldsymbol{H}(\boldsymbol{x}_0)) = \rank
        \begin{pmatrix}
        \boldsymbol{H}_1-\boldsymbol{H}_2 \boldsymbol{H}_4^{-1} \boldsymbol{H}_3 & \boldsymbol{0} \\
        \boldsymbol{0} & \boldsymbol{H}_4
        \end{pmatrix}.
\end{equation}
It is sufficient to show that $\rank{\boldsymbol{H}(\boldsymbol{x}_0)} \ge n - R^2$. Since $\boldsymbol{H}_1 = \boldsymbol{B}^{\prime} \boldsymbol{B} \otimes \boldsymbol{I}_{1}$, $\boldsymbol{H}_4 = \hA^{\prime} \hA \otimes \boldsymbol{I}_{2}$ so that
\begin{equation}
    \lambda_{\min}(\boldsymbol{H}_1)=\lambda_{\min}(\boldsymbol{B}^{\prime} \boldsymbol{B}) \ge c_2^2 > 0, \, \lambda_{\min}(\boldsymbol{H}_4)=\lambda_{\min}(\hA^{\prime} \hA) \ge c_1^2 > 0.
\end{equation}
By further simplification we can rewrite the $ij$-th block of $\boldsymbol{H}_2 \boldsymbol{H}_4^{-1} \boldsymbol{H}_3$ in the form
\begin{equation}
    [\boldsymbol{H}_2 \boldsymbol{H}_4^{-1} \boldsymbol{H}_3]_{ij} = \sum_{r=1}^{R} \ha_r \boldsymbol{c}_r^{\prime}
\end{equation}
for some vectors $\boldsymbol{c}_r$, $r=1,\cdots,R$ and $i,j= 1,\cdots,R$. This form is sum of $R$ outer product of two vectors, so $\rank([\boldsymbol{H}_2 \boldsymbol{H}_4^{-1} \boldsymbol{H}_3]_{ij}) \le R$, which implies that $\rank(\boldsymbol{H}_2 \boldsymbol{H}_4^{-1} \boldsymbol{H}_3) \le R^2$. It follows that,
\begin{equation}
\begin{split}
    \rank(\boldsymbol{H}(\boldsymbol{x}_0)) =& \rank(\boldsymbol{H}_4) + \rank(\boldsymbol{H}_1 - \boldsymbol{H}_2 \boldsymbol{H}_4^{-1} \boldsymbol{H}_3))\\
    \ge& \rank(\boldsymbol{H}_4) + \rank(\boldsymbol{H}_1) - \rank(\boldsymbol{H}_2 \boldsymbol{H}_4^{-1} \boldsymbol{H}_3)) \\
    \ge& n- R^2
\end{split}
\end{equation}
It implies the claim (\ref{H_claim_2}). Next we discuss (\ref{H_claim_1}). Observe that for any $\delta > 0$ we have 
\begin{equation}
\sum_{r=1}^{R} \ha_r \boldsymbol{b}_r^{\prime} = \sum_{r \neq i,j} \ha_r \boldsymbol{b}_r^{\prime} +  (\ha_i + \delta \ha_j) \boldsymbol{b}_i^{\prime} + \ha_j (\boldsymbol{b}_j - \delta \boldsymbol{b}_i)^{\prime}.  
\end{equation}
This implies, if $\boldsymbol{x} = \boldsymbol{x}_0 + \delta \boldsymbol{v}_i^{j}$ then $f(\boldsymbol{x}) = f(\boldsymbol{x}_0)$. These directions have to be excluded from $\Delta \boldsymbol{x}$ by the requirement that $\|\bar{\Phi}^{(l)} - \Phi^{(l)}\|_F \neq 0$ for $l=1,\cdots,p$. Thus the (\ref{H_claim_1}) is implied by (\ref{H_claim_2}), which completes our proof.
\end{proof}
We need one more lemma. First state and prove following lemma in the fixed dimension.
\begin{lemma}\label{lemma2}
Consider the VAR($p$) representation of (\ref{multiarp}), and $\Phi^{(i)}$ is defined in Appendix A. Assume the conditions of Theorem~\ref{lse_arp_multi}. Then for any sequence $\{c_T\}$ such that $c_T \rightarrow \infty$,
$$P\left[ \inf_{\sqrt{T} \sum_{i=1}^{p}\|\bar{\Phi}^{(i)} - \Phi^{(i)}\|_F \ge c_T} \sum_{t=p+1}^{T} \|\vect( \mathcal{X}_t) - \sum_{i=1}^{p} \bar{\Phi}^{(i)} \vect( \mathcal{X}_{t-1} )\|_F^2 \le \sum_{t=p+1}^{T} \|\vect(\mathcal{E}_t)\|^2   \right] \rightarrow 0.$$
\end{lemma}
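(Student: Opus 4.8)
The plan is to reparametrize the least-squares objective through the deviations $\boldsymbol{\Delta}^{(i)} := \bar{\Phi}^{(i)} - \Phi^{(i)}$ and then to show that, with probability tending to one, the residual sum of squares strictly exceeds $\sum_t \|\vect(\mathcal{E}_t)\|^2$ everywhere on the (unbounded) region $\mathcal{D}_T := \{\sum_{i=1}^p \|\boldsymbol{\Delta}^{(i)}\|_F \ge c_T/\sqrt{T}\}$. Writing $\boldsymbol{x}_t := \vect(\mathcal{X}_t)$, $\boldsymbol{e}_t := \vect(\mathcal{E}_t)$, $\boldsymbol{X}_t := (\boldsymbol{x}_{t-1}^{\prime},\ldots,\boldsymbol{x}_{t-p}^{\prime})^{\prime}$, and stacking $\boldsymbol{\Delta} := [\boldsymbol{\Delta}^{(1)},\ldots,\boldsymbol{\Delta}^{(p)}]$, the VAR($p$) representation $\boldsymbol{x}_t = \sum_i \Phi^{(i)}\boldsymbol{x}_{t-i} + \boldsymbol{e}_t$ gives $\boldsymbol{x}_t - \sum_i \bar{\Phi}^{(i)}\boldsymbol{x}_{t-i} = \boldsymbol{e}_t - \boldsymbol{\Delta}\boldsymbol{X}_t$, so
$$\sum_{t=p+1}^T\Big\|\boldsymbol{x}_t - \sum_{i=1}^p \bar{\Phi}^{(i)}\boldsymbol{x}_{t-i}\Big\|^2 - \sum_{t=p+1}^T\|\boldsymbol{e}_t\|^2 = Q_T(\boldsymbol{\Delta}) - 2L_T(\boldsymbol{\Delta}),$$
where $Q_T(\boldsymbol{\Delta}) := \sum_t \|\boldsymbol{\Delta}\boldsymbol{X}_t\|^2 = \tr(\boldsymbol{\Delta}\widehat{\Gamma}_T\boldsymbol{\Delta}^{\prime})$ with $\widehat{\Gamma}_T := \sum_t \boldsymbol{X}_t\boldsymbol{X}_t^{\prime}$, and $L_T(\boldsymbol{\Delta}) := \sum_t \boldsymbol{e}_t^{\prime}\boldsymbol{\Delta}\boldsymbol{X}_t = \langle\boldsymbol{\Delta},\boldsymbol{M}_T\rangle$ with $\boldsymbol{M}_T := \sum_t \boldsymbol{e}_t\boldsymbol{X}_t^{\prime}$. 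It therefore suffices to show $P[\inf_{\boldsymbol{\Delta}\in\mathcal{D}_T}(Q_T(\boldsymbol{\Delta}) - 2L_T(\boldsymbol{\Delta})) > 0]\to 1$.

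I would first bound $Q_T$ from below uniformly in $\boldsymbol{\Delta}$. Because the TenAR($p$) model is causal, $\{\boldsymbol{x}_t\}$ has a stationary ergodic causal representation, so by the ergodic theorem $T^{-1}\widehat{\Gamma}_T \to \Gamma := \mathbb{E}[\boldsymbol{X}_{p+1}\boldsymbol{X}_{p+1}^{\prime}]$ in probability; moreover causality together with the nonsingularity of $\Sigma$ (assumed in Theorem~\ref{lse_arp_multi}) forces the block-Toeplitz matrix $\Gamma$ to be positive definite, so $\lambda_{\min}(\Gamma)>0$. Hence $P[\lambda_{\min}(\widehat{\Gamma}_T)\ge \tfrac12 T\lambda_{\min}(\Gamma)]\to 1$, and on that event $Q_T(\boldsymbol{\Delta}) \ge \tfrac12 T\lambda_{\min}(\Gamma)\|\boldsymbol{\Delta}\|_F^2 \ge \frac{T\lambda_{\min}(\Gamma)}{2p}\big(\sum_i\|\boldsymbol{\Delta}^{(i)}\|_F\big)^2$ for every $\boldsymbol{\Delta}$, the last step by Cauchy--Schwarz. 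Next I would control $L_T$: since $\boldsymbol{e}_t$ is independent of $\boldsymbol{X}_t$, which by causality is measurable with respect to $\{\boldsymbol{e}_s:s<t\}$, the array $\{\boldsymbol{e}_t\boldsymbol{X}_t^{\prime}\}$ is a martingale difference sequence, whence $\mathbb{E}\|\boldsymbol{M}_T\|_F^2 = \sum_t \mathbb{E}\|\boldsymbol{e}_t\|^2\,\mathbb{E}\|\boldsymbol{X}_t\|^2 = O(T)$ using only finite second moments; by Markov, for each $\varepsilon>0$ there is $C_\varepsilon$ with $\sup_T P[\|\boldsymbol{M}_T\|_F > C_\varepsilon\sqrt{T}] < \varepsilon$, and $|L_T(\boldsymbol{\Delta})| \le \|\boldsymbol{\Delta}\|_F\|\boldsymbol{M}_T\|_F \le \big(\sum_i\|\boldsymbol{\Delta}^{(i)}\|_F\big)\|\boldsymbol{M}_T\|_F$ by Cauchy--Schwarz.

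Combining the two bounds on the intersection of the two good events, for $\boldsymbol{\Delta}\in\mathcal{D}_T$ and $s := \sum_i\|\boldsymbol{\Delta}^{(i)}\|_F\ge c_T/\sqrt{T}$,
$$Q_T(\boldsymbol{\Delta}) - 2L_T(\boldsymbol{\Delta}) \ge \frac{T\lambda_{\min}(\Gamma)}{2p}s^2 - 2C_\varepsilon\sqrt{T}\,s = s\sqrt{T}\Big(\tfrac{\lambda_{\min}(\Gamma)}{2p}\sqrt{T}\,s - 2C_\varepsilon\Big) \ge s\sqrt{T}\Big(\tfrac{\lambda_{\min}(\Gamma)}{2p}c_T - 2C_\varepsilon\Big),$$
which is strictly positive once $c_T > 4pC_\varepsilon/\lambda_{\min}(\Gamma)$, i.e.\ for all large $T$ since $c_T\to\infty$. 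Hence $\limsup_T P[\inf_{\boldsymbol{\Delta}\in\mathcal{D}_T}(Q_T - 2L_T)\le 0]\le\varepsilon$, and letting $\varepsilon\downarrow 0$ gives the claim.

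The main obstacle, I expect, is not any single estimate but making everything \emph{uniform} over the unbounded set $\mathcal{D}_T$: the quadratic term must be bounded below simultaneously for all $\boldsymbol{\Delta}$ (which is why one passes through $\lambda_{\min}(\widehat{\Gamma}_T)$ and hence needs $\Gamma\succ0$), the cross term must be dominated by a single $\boldsymbol{\Delta}$-free random quantity $\|\boldsymbol{M}_T\|_F$ (handled by Cauchy--Schwarz), and since $\|\boldsymbol{M}_T\|_F/\sqrt{T}$ is only tight rather than $o_p(1)$, the relevant probabilities have to be made small by the $\varepsilon$--$C_\varepsilon$ device and then sent to zero, rather than shown to vanish outright. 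A minor additional point is the positive definiteness of $\Gamma$; this is the standard spectral-density lower bound for a causal VAR with nonsingular innovation covariance.
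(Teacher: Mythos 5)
Your argument is correct, but it is organized differently from the paper's proof, and in a way that is arguably cleaner. The paper works with the raw objective: it invokes the ergodic theorem to get almost-sure uniform convergence of the empirical second-moment quantities over compact balls, passes to a slowly growing radius $c_T'\le c_T$ by a diagonal device, bounds the cross term $\sum_t \tr[(\bar\Phi^{(i)}-\Phi^{(i)})\vect(\mathcal X_{t-i})\vect(\mathcal E_t)']$ as $O_p(c_T')$ only \emph{on the boundary sphere} $\sqrt{T}\sum_i\|\bar\Phi^{(i)}-\Phi^{(i)}\|_F=c_T'$, compares it with the quadratic term lower-bounded by $\lambda_{\min}$ times $(c_T')^2$, and then uses \emph{convexity} of the sum of squares in $\bar\Phi^{(i)}$ to propagate the strict inequality from the boundary to the whole exterior region. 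You instead get uniformity over the unbounded set $\mathcal D_T$ directly: the quadratic term is bounded below for \emph{all} $\boldsymbol\Delta$ via $\lambda_{\min}(\widehat\Gamma_T)\gtrsim T\lambda_{\min}(\Gamma)$ (ergodic theorem plus positive definiteness of the stacked stationary covariance), and the cross term is bounded by a single $\boldsymbol\Delta$-free quantity $\|\boldsymbol M_T\|_F=O_p(\sqrt T)$ through Cauchy--Schwarz, with the $\sqrt T$ rate coming from the martingale-difference orthogonality $\mathbb E\|\boldsymbol M_T\|_F^2=\sum_t\mathbb E\|\boldsymbol e_t\|^2\,\mathbb E\|\boldsymbol X_t\|^2=O(T)$ — the same rate the paper obtains by its variance calculation. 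Because your two bounds are uniform, the domination of the linear term by the quadratic term on $\{\sqrt T\,s\ge c_T\}$ follows pointwise and you need neither the convexity step nor the auxiliary sequence $c_T'$; the tightness (rather than negligibility) of $\|\boldsymbol M_T\|_F/\sqrt T$ is handled correctly by your $\varepsilon$--$C_\varepsilon$ argument. The only ingredients you cite without proof — strict positivity of $\lambda_{\min}(\Gamma)$ for a causal VAR($p$) with nonsingular $\Sigma$, and $\mathbb E\|\boldsymbol X_t\|^2<\infty$ under stationarity — are standard and are likewise taken for granted in the paper, so I see no gap.
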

\begin{proof}
By the ergodic theorem
$$\frac{1}{T} \sum_{t=p+1}^{T} \vect(\mathcal{X}_{t-1}) \vect(\mathcal{X}_{t-1})^{\prime} \to \Gamma_0 \quad \as$$
So we have for any constant $c > 0$,
\begin{equation}\label{lemma_2.1}
\begin{split}
       \sup_{\sqrt{T} \sum_{i=1}^{p}\|\bar{\Phi}^{(i)} - \Phi^{(i)}\|_F \ge c}
       \bigg|  &\sum_{t=p+1}^{T} \tr\left[(\bar{\Phi}^{(i)} - \Phi^{(i)}) \vect(\mathcal{X}_{t-i})  \vect(\mathcal{X}_{t-j})^{\prime} (\bar{\Phi}^{(j)} - \Phi^{(j)})^{\prime}\right] \\
      &- T\tr\left[(\bar{\Phi}^{(i)} - \Phi^{(i)}) \Gamma_h (\bar{\Phi}^{(j)} - \Phi^{(j)})^{\prime}\right]  \bigg|  \quad
      \to 0 \quad \as  
\end{split}
\end{equation}
It follows that there exits a sequence $\{c_T^{\prime}\}$ such that $c_T^{\prime} \to \infty$, $c_T^{\prime} \le c_T$ and
\begin{equation}\label{lemma_2.2}
\begin{split}
       \sup_{\sqrt{T} \sum_{i=1}^{p}\|\bar{\Phi}^{(i)} - \Phi^{(i)}\|_F \ge c_T^{\prime}}
       \bigg|  &\sum_{t=p+1}^{T} \tr\left[(\bar{\Phi}^{(i)} - \Phi^{(i)}) \vect(\mathcal{X}_{t-i})  \vect(\mathcal{X}_{t-j})^{\prime} (\bar{\Phi}^{(j)} - \Phi^{(j)})^{\prime}\right] \\
      &- T\tr\left[(\bar{\Phi}^{(i)} - \Phi^{(i)}) \Gamma_h (\bar{\Phi}^{(j)} - \Phi^{(j)})^{\prime}\right]  \bigg|  \quad
      \to 0 \quad \text{in probability.} 
\end{split}
\end{equation}
Now we write
\begin{equation}\label{lemma_2.3}
\begin{split}
      &\sum_{t=p+1}^{T} \|\vect(\mathcal{X}_t) - \sum_{i=1}^{p}\bar{\Phi}^{(i)} \vect(\mathcal{X}_{t-i}) \|^2_{F} - \sum_{t=p+1}^{T} \|\vect(\mathcal{E}_t)\|^2 \\
      =& \sum_{t=p+1}^{T} \sum_{i,j=1}^{p} \tr\left[(\bar{\Phi}^{(i)} - \Phi^{(i)}) \vect(\mathcal{X}_{t-i}) \vect(\mathcal{X}_{t-j})^{\prime} (\bar{\Phi}^{(j)} - \Phi^{(j)})^{\prime}\right] \\
      & - 2 \sum_{t=p+1}^{T} \sum_{i=1}^{p} \tr\left[(\bar{\Phi}^{(i)} - \Phi^{(i)}) \vect(\mathcal{X}_{t-i}) \vect(\mathcal{E}_{t})^{\prime}\right]
\end{split}
\end{equation}
On the boundary set $\sqrt{T} \sum_{i=1}^{p}\| \bar{\Phi}^{(i)} - \Phi^{(i)} \|_{F} = c_{T}^{\prime}$, by calculating the variance, we know that
\begin{equation}\label{lemma_2.4}
      \sum_{t=p+1}^{T} \tr\left[(\bar{\Phi}^{(i)} - \Phi^{(i)}) \vect(\mathcal{X}_{t-i}) \vect(\mathcal{E}_{t})^{\prime}\right] =  O_p(  c_{T}^{\prime})
\end{equation}
On the other hand,
\begin{equation}\label{lemma_2.5}
\begin{split}
      T \tr\left[(\bar{\Phi}^{(i)} - \Phi^{(i)}) \Gamma_{h} (\bar{\Phi}^{(j)} - \Phi^{(j)})^{\prime}\right] \ge  \lambda_{\min}(\Gamma_h) (c_{T}^{\prime})^2 
\end{split}
\end{equation}
where $\lambda_{\min}(\Gamma_h)$ is the minimum eigenvalue of $\Gamma_h$, which is strictly positive under our assumptions. Follows from (\ref{lemma_2.2}) to (\ref{lemma_2.5}), and the fact that $c_{T}^{\prime} \to \infty$, we have
\begin{equation}\label{lemma_2.6}
    P \left[ \inf_{\sqrt{T} \sum_{i=1}^{p} \|\bar{\Phi}^{(i)} - \Phi^{(i)} \|_{F} = c_{T}^{\prime}} \sum_{t=p+1}^{T} \|\vect(\mathcal{X}_t) - \sum_{i=1}^{p}\bar{\Phi}^{(i)} \vect(\mathcal{X}_{t-i}) \|^2_{F} \le \sum_{t=p+1}^{T} \|\vect(\mathcal{E}_t)\|^2  \right] \to 0.
\end{equation}
Observe that $\sum_{t=p+1}^{T} \|\vect(\mathcal{X}_t) - \sum_{i=1}^{p}\bar{\Phi}^{(i)} \vect(\mathcal{X}_{t-i}) \|^2_{F}$ is a convex function of $\bar{\Phi}^{(i)}$, so Lemma~\ref{lemma2} is implied by (\ref{lemma_2.6}).
\end{proof}

\begin{proof}[Proof of Theorem~\ref{lse_arp_multi}]

Without loss of generality, consider multi-term TenAR(1) model where $p=1$. The proof can be extended to TenAR($p$) under same idea. By taking partial derivatives of (\ref{lsearp}) respect to the $k$-mode, $r$-term coefficient matrix $\hA_{k}^{(r)}$, $1 \le r \le R$ and $1 \le k \le K$, we obtain the gradient condition, 
\begin{equation}\label{arpgd}
\sum_{t} \left( \boldsymbol{X}_{t(k)} -  \sum_{j = 1}^{R} \hat{\hA}_{k}^{(j)} \boldsymbol{X}_{t-1,(k)} {\hat{\Phi}^{(j){\prime}}_k} \right) \left( {\hat{\Phi}^{(r)}_k} {\boldsymbol{X}_{t-l,(k)}}^{\prime}  \right) = 0
\end{equation}
  Let $\Phi_k = \sum_{j=1}^{R} \hA_K^{(j)} \otimes \cdots \hA_{k+1}^{(j)} \otimes \hA_{k-1}^{(j)} \cdots \otimes \hA_1^{(j)}$, $k=1,\cdots,K$. By Lemma~\ref{hessian} and Lemma~\ref{lemma2}, we know that $\hat{\hA}_k^{(j)} = \hA_k^{(j)} + O_p(T^{-1/2})$. By gradient condition (\ref{arpgd}) and replacing $\hX_{t(k)} = \sum_{j=1}^{R}\hA_k^{(j)} \hX_{t-1,(k)} \Phi_k^{(j) \prime} + \h{E}_{t(k)}$, we have
\begin{equation} \label{2.1}
\begin{split}
    &\sum_{t} \left[ \sum_{j=1}^{R} (\hat{\hA}_{k}^{(j)} - \hA_{k}^{(j)}) \boldsymbol{X}_{t-1,(k)} \Phi_k^{(j)\prime} \Phi_k^{(r)} \boldsymbol{X}_{t-1,(k)}^{\prime} 
    + \sum_{j=1}^{R} \hA_{k}^{(j)} \boldsymbol{X}_{t-1, (k)} (\hat{\Phi}_k^{(j)} - \Phi_k^{(j)})^{\prime} \Phi_k^{(r)} \boldsymbol{X}_{t-1, (k)}^{\prime} \right] \\
    &= \sum_t \boldsymbol{E}_{t(k)} \Phi_k^{(r)} \boldsymbol{X}_{t-1, (k)}^{\prime} + o_p(\sqrt{T})
    \end{split}
\end{equation}
Let $\underline{M}_{ik}^{(j)} = \hA_K^{(j)} \otimes \cdots \otimes (\hat{\hA}_i^{(j)} - \hA_i^{(j)}) \otimes \cdots \otimes \hA_{k+1}^{(j)} \otimes  \hA_{k-1}^{(j)} \otimes \cdots \otimes \hA_1^{(j)}$ denotes Kronecker product without $\hA_k^{(j)}$ and the difference is taking with $\hat{\hA}_i^{(j)} - \hA_i^{(j)}$. By Proposition~\ref{basic_properties} and Proposition~\ref{pur}, we can take vectorization and re-write it as 
\begin{equation}\label{2.2}
\begin{split}
    \vect(\underline{M}_{ik}^{(j)}) &= \boldsymbol{T}_k \vect(\ha_1^{(j)} \circ \cdots \circ \ha_{k-1}^{(j)} \circ  \ha_{k+1}^{(j)} \circ \cdots \circ (\hat{\ha}_i^{(j)} - \ha_i^{(j)}) \circ  \cdots \ha_K^{(j)})\\
    &= \boldsymbol{T}_k \left(\ha_K^{(j)} \otimes \cdots \otimes (\hat{\ha}_i^{(j)} - \ha_i^{(j)}) \otimes \cdots \otimes \ha_{k+1}^{(j)} \otimes  \ha_{k-1}^{(j)} \otimes \cdots \ha_1^{(j)}\right)\\
    &= \boldsymbol{T}_k \boldsymbol{S}_{(i,k)}\vect(\hat{\hA}_i^{(j)} - \hA_i^{(j)})
\end{split}
\end{equation}
where $\boldsymbol{S}_{(i,k)} = \ha_K^{(j)} \otimes \cdots \ha_{i+1}^{(j)} \otimes \boldsymbol{I}_i \otimes  \ha_{i-1}^{(j)} \otimes \cdots \ha_{k+1}^{(j)} \otimes \ha_{k-1}^{(j)} \otimes \cdots \otimes \ha_1^{(j)}$ such that the term with subscript $i$ is $\boldsymbol{I}_i$ and no term with subscript $k$, and $\boldsymbol{T}_k$ is defined in Proposition~\ref{pur}. Using (\ref{2.2}) we can take vectorization of the second term of LHS in (\ref{2.1}) and re-write it as
\begin{equation} \label{2.3}
\begin{split}
    &\vect\left(\sum_t \sum_{j=1}^{R} \hA_{k}^{(j)} \boldsymbol{X}_{t-1, (k)} (\hat{\Phi}_k^{(j)} - \Phi_k^{(j)})^{\prime} \Phi_k^{(r)} \boldsymbol{X}_{t-1, (k)}^{\prime}\right) \\
    =& \sum_t \sum_{j=1}^{R} \left[\left(\boldsymbol{X}_{t-1, (k)} \Phi_k^{(r)\prime} \otimes  \hA_{k}^{(j)} \boldsymbol{X}_{t-1, (k)} \right) \vect(\sum_{i=1}^{K} \underline{M}_{ik}^{(j)\prime} ) \right] +o_p(\sqrt{T}) \\
    =& \sum_t \sum_{j=1}^{R} \left[ \left(\boldsymbol{X}_{t-1, (k)} \Phi_k^{(r)\prime} \otimes  \hA_{k}^{(j)} \boldsymbol{X}_{t-1, (k)} \right) \left(\sum_{i=1}^{K} \boldsymbol{T}_k \boldsymbol{S}_{(i,k)} \boldsymbol{P}_{d_i,d_i} \vect(\hat{\hA}_i^{(j)} - \hA_i^{(j)})\right) \right] +o_p(\sqrt{T}) 
\end{split}
\end{equation}
where $\boldsymbol{P}_{d_i, d_i}$ is permutation matrix such that has property Proposition~\ref{basic_properties} (vi). Then taking (\ref{2.3}) into (\ref{2.1}) and taking vectorization on both sides of (\ref{2.1}) for $k = 1,\cdots,K$, we have
\begin{equation} \label{2.6}
    \boldsymbol{U} \begin{pmatrix}
\vect(\hat{\hA}_{1}^{(1)} - \hA_1^{(1)}) \\
\vect(\hat{\hA}_{2}^{(1)} - \hA_2^{(1)}) \\
\cdots \\
\vect(\hat{\hA}_K^{(R)} - \hA_K^{(R)}) 
\end{pmatrix}  = \sum_{t} 
\boldsymbol{W}_t  \vect(\mathcal{E}_t) + o_p(\sqrt{T})
\end{equation}
where we can view $\boldsymbol{U}$ as a hierarchy block-matrix such that we first view it as a $R \times R$ block-matrix, the $m,n$-th block denoted as $\boldsymbol{U}^{(mn)}$, $1 \le m,n \le R$. For each block, we view it as a $K \times K$ block-matrix that we denote $i,j$-th block, $1 \le i,j \le K$, as $\boldsymbol{U}^{(mn)}_{(ij)}$, such that 
\begin{align*}
    \text{if} \ i=j, \ \boldsymbol{U}^{(mn)}_{(ii)} &= \left( \sum_t \boldsymbol{X}_{t-1,(i)} \Phi_k^{(m)\prime} \Phi_k^{(n)} \boldsymbol{X}_{t-1,(i)}^{\prime}  \right) \otimes \boldsymbol{I}_{d_i}
    \\
    \text{if} \ i\neq j,\  \boldsymbol{U}^{(mn)}_{(ij)} &= \sum_t \left(\boldsymbol{X}_{t-1, (i)} \Phi_i^{(m)\prime} \otimes  \hA_{i}^{(n)} \boldsymbol{X}_{t-1, (i)} \right) \boldsymbol{T}_K \boldsymbol{S}_{(j,i)} \boldsymbol{P}_{d_j,d_j}
\end{align*}
We claim that (\ref{2.6}) can be re-written as,
\begin{equation}\label{2.8}
     \boldsymbol{U} \begin{pmatrix}
\vect(\hat{\hA}_{1}^{(1)} - \hA_1^{(1)}) \\
\vect(\hat{\hA}_{2}^{(1)} - \hA_2^{(1)}) \\
\cdots \\
\vect(\hat{\hA}_{K}^{(R)} - \hA_K^{(R)}) 
\end{pmatrix}=
 \sum_t \boldsymbol{W}_{t-1} \boldsymbol{W}_{t-1}^{\prime} \begin{pmatrix}
\vect(\hat{\hA}_{1}^{(1)} - \hA_1^{(1)}) \\
\vect(\hat{\hA}_{2}^{(1)} - \hA_2^{(1)}) \\
\cdots \\
\vect(\hat{\hA}_{K}^{(R)} - \hA_K^{(R)}) 
\end{pmatrix}.
\end{equation}
Consider the same block partition as $\boldsymbol{U}$, $i,j$-th block in the $m,n$-th block of $\boldsymbol{W}_{t-1} \boldsymbol{W}_{t-1}^{\prime}$ would be denoted as $(\boldsymbol{W}_{t-1} \boldsymbol{W}_{t-1}^{\prime})_{(ij)}^{(mn)}$. First, note that for $i=j, 1\le i,j \le K$, by Proposition~\ref{basic_properties} in the Appendix, we have $\boldsymbol{Q}_i \boldsymbol{Q}_i^{\prime} = \boldsymbol{I}_{d_1\cdots d_K}$, so
\begin{equation*}
\begin{split}
    \sum_t (\boldsymbol{W}_{t-1} \boldsymbol{W}_{t-1}^{\prime})_{(ii)}^{(mn)} &= \sum_t \left((\boldsymbol{X}_{t-1,(i)} \Phi_i^{(m) \prime}) \otimes \boldsymbol{I}_{d_i} \right) \boldsymbol{Q}_i \boldsymbol{Q}_i^{\prime}
    \left((\boldsymbol{X}_{t-1,(i)} \Phi_i^{(n)\prime}) \otimes \boldsymbol{I}_{d_i}  \right)^{\prime} \\
    &= \left( \sum_t \boldsymbol{X}_{t-1,(i)} \Phi_i^{(m) \prime} \Phi_i^{(n)} \boldsymbol{X}_{t-1,(i)} ^{\prime} \right) \otimes \boldsymbol{I}_{d_i} = \boldsymbol{U}_{(ii)}^{(mn)}
\end{split}
\end{equation*}
Next we verify that (\ref{2.8}) holds for $i\neq j$,
\begin{align} \label{2.9}
\begin{split}
& \sum_t (\boldsymbol{W}_{t-1} \boldsymbol{W}_{t-1}^{\prime})_{(ij)}^{(mn)}  \vect(\hat{\hA}_{j}^{(n)} - \hA_{j}^{(n)}) \\
=& \sum_t \left((\boldsymbol{X}_{t-1,(i)} \Phi_i^{(m) \prime}) \otimes \boldsymbol{I}_{d_i} \right) \boldsymbol{Q}_i \boldsymbol{Q}_j^{\prime}
    \left(  ( \Phi_j^{(n)} \boldsymbol{X}_{t-1,(j)}^{\prime}) \otimes \boldsymbol{I}_{d_j} \right) \vect(\hat{\hA}_{j}^{(n)} - \hA_{j}^{(n)}) \\
=& \sum_t \left((\boldsymbol{X}_{t-1,(i)} \Phi_i^{(m) \prime}) \otimes \boldsymbol{I}_{d_i} \right) \boldsymbol{Q}_i \boldsymbol{Q}_j^{\prime} \vect\left((\hat{\hA}_{j}^{(n)} - \hA_{j}^{(n)}) \hX_{t-1,(j)} \Phi_j^{(n)\prime}\right)\\
=& \sum_t \left((\boldsymbol{X}_{t-1,(i)} \Phi_i^{(m) \prime}) \otimes \boldsymbol{I}_{d_i} \right)  \vect\left( \hA_{i}^{(n)} \boldsymbol{X}_{t-1,(i)} \underline{M}_{ji}^{(n) \prime} \right)
\end{split}
\end{align}
On the other hand,
\begin{align}\label{2.10}
\begin{split}
    & \boldsymbol{U}_{(ij)}^{(mn)} \vect(\hat{\hA}_{j}^{(n)} - \hA_{j}^{(n)})\\
    =& \sum_t \left(\boldsymbol{X}_{t-1, (i)} \Phi_i^{(m)\prime} \otimes \hA_i^{(n)} \boldsymbol{X}_{t-1, (i)} \right) \boldsymbol{T}_k \boldsymbol{S}_{(j,i)} \boldsymbol{P}_{d_j,d_j}  \vect(\hat{\hA}_{j}^{(n)} - \hA_{j}^{(n)}) \\
    =& \sum_t \left((\boldsymbol{X}_{t-1,(i)} \Phi_i^{(m) \prime}) \otimes \boldsymbol{I}_{d_i} \right)  \vect\left( \hA_{i}^{(n)} \boldsymbol{X}_{t-1,(i)} \underline{M}_{ji}^{(n) \prime} \right)
\end{split}
\end{align}
Thus, by (\ref{2.9}) and (\ref{2.10}) we verified that (\ref{2.8}) holds. From (\ref{2.6}) and (\ref{2.8}), we have
\begin{equation} \label{2.11}
    \sum_t \boldsymbol{W}_{t-1} \boldsymbol{W}_{t-1}^{\prime} \begin{pmatrix}
\vect(\hat{\hA}_{1}^{(1)} - \hA_1^{(1)}) \\
\vect(\hat{\hA}_{2}^{(1)} - \hA_2^{(1)}) \\
\cdots \\
\vect(\hat{\hA}_K^{(R)} - \hA_K^{(R)}) 
\end{pmatrix}  = \sum_{t} 
\boldsymbol{W}_{t}  \vect(\mathcal{E}_t) + o_p(\sqrt{T})
\end{equation}
By the ergodic theorem as $\cX_t$ is strictly stationary with IID. innovations under the conditions, we have
\begin{equation*}
    \frac{1}{T} \sum_{t}  \boldsymbol{W}_{t-1} \boldsymbol{W}_{t-1}^{\prime} \to \mathbb{E}(\boldsymbol{W}_t \boldsymbol{W}_t^{\prime}),\quad a.s.
\end{equation*}
Observe that $\mathbb{E}(\boldsymbol{W}_t \boldsymbol{W}_t^{\prime}) (\h{\gamma}_k^{(j)} - \h{\gamma}_K^{(j)}) = 0$ for $1 \le k \le K-1$ and $1 \le j \le R$. We construct a full rank matrix $\boldsymbol{H}$ such that $\boldsymbol{H} := \mathbb{E}(\boldsymbol{W}_t \boldsymbol{W}_t^{\prime}) + \sum_{j=1}^{R} \sum_{i=1}^{K-1} \h{\gamma}_i^{(j)} \h{\gamma}_i^{(j) \prime}$. Since $\|\hA_k^{(j)}\|=\|\hat{\hA}_k^{(j)}\|=1$, it holds that $\ha_k^{(j)\prime} (\vect(\hat{\hA}_k^{(j)} - \hA_k^{(j)})) = o_p(T^{-1/2})$, consequently from (\ref{2.11}) we have
\begin{equation*}
    \boldsymbol{H} \begin{pmatrix}
\vect(\hat{\hA}_{1}^{(1)} - \hA_1^{(1)}) \\
\vect(\hat{\hA}_{2}^{(1)} - \hA_2^{(1)}) \\
\cdots \\
\vect(\hat{\hA}_K^{(R)} - \hA_K^{(R)}) 
\end{pmatrix}  = \frac{1}{T} \sum_{t} \boldsymbol{W}_{t-1} \vect(\mathcal{E}_t) + o_p(T^{-1/2})
\end{equation*}
By martingale central limit theorem
\begin{equation*}
    \frac{1}{T} \sum_{t} \boldsymbol{W}_{t-1} \vect(\mathcal{E}_t) \Rightarrow \mathcal{N}(0, \mathbb{E}(\boldsymbol{W}_t \Sigma \boldsymbol{W}_t^{\prime})).
\end{equation*}
Thus, it holds that
\begin{equation*}
 \sqrt{T}\begin{pmatrix}
\vect(\hat{\hA}_{1}^{(1)} - \hA_1^{(1)}) \\
\vect(\hat{\hA}_{2}^{(1)} - \hA_2^{(1)}) \\
\cdots \\
\vect(\hat{\hA}_K^{(R)} - \hA_K^{(R)}) 
\end{pmatrix} \to \mathcal{N}(0,\Xi_2)
\end{equation*}
where $\Xi_2 =: \boldsymbol{H}^{-1} \mathbb{E}(\boldsymbol{W}_t \Sigma \boldsymbol{W}_t^{\prime}) \boldsymbol{H}^{-1}$.
\end{proof}

\subsection{Proof of Theorem~\ref{mle_arp_multi}}
The proof is almost the same as Theorem 4 in \cite{chen2020autoregressive}. Since the argument is based on the vectorized model, it can be applied to our Theorem~\ref{mle_arp_multi} without difficulty, we omit the proof. The idea is that first prove the consistency that $\hA_{k}^{(ir)} = \tilde{\hA}_{k}^{(ir)} + O_p(T^{-1/2})$, $\hat{\Sigma}_k =\Sigma_k + o_p(1)$, for $i=1,\cdots,p$, $r=1,\cdots,R_i$, $k=1,\cdots,K$. Then we prove CLT by the argument similar with that of Theorem~\ref{lse_arp_multi}.

\subsection{Proof of Proposition~\ref{phi_convergence}}

To prove Proposition~\ref{phi_convergence}, we first state the following Lemmas from \cite{vershynin2018high}.

\begin{lemma}[Theorem 4.7.1 in \cite{vershynin2018high}]\label{HDL1} 
Let $\hX_i$ be independent sub-gaussian random vectors in $\mathbb{R}^N$, $i = 1,\cdots, T$, $\hat{\Sigma} = \frac{1}{T} \sum_{i=1}^{T} \hX_i \hX_i^{\prime}$ and $\Sigma = \mathbb{E}(\hat{\Sigma})$. We have 
$$\mathbb{E}\|\hat{\Sigma} - \Sigma\|_{s} \le C (\sqrt{\frac{N}{T}} + \frac{N}{T}) \|\Sigma\|_{s}$$
where $C$ is an absolute constant.
\end{lemma}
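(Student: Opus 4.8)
The plan is to reproduce the standard covering-argument proof for the concentration of a sub-gaussian sample covariance matrix (this is exactly Theorem~4.7.1 in \cite{vershynin2018high}). First I would reduce to the isotropic case by whitening. Writing $\hX_i = \Sigma^{1/2}\h{Z}_i$, where the $\h{Z}_i$ are i.i.d.\ isotropic sub-gaussian vectors with $\mathbb{E}(\h{Z}_i\h{Z}_i^{\prime})=\boldsymbol{I}_N$ and sub-gaussian norm bounded by an absolute constant $K$, we obtain $\hat{\Sigma}-\Sigma = \Sigma^{1/2}(\h{R}-\boldsymbol{I}_N)\Sigma^{1/2}$ with $\h{R}:=\frac{1}{T}\sum_{i=1}^{T}\h{Z}_i\h{Z}_i^{\prime}$. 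Submultiplicativity of the spectral norm then gives $\|\hat{\Sigma}-\Sigma\|_s \le \|\Sigma\|_s\,\|\h{R}-\boldsymbol{I}_N\|_s$, so it suffices to prove $\mathbb{E}\|\h{R}-\boldsymbol{I}_N\|_s \le C(\sqrt{N/T}+N/T)$ in the isotropic case.

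Next I would discretize the spectral norm. Since $\h{R}-\boldsymbol{I}_N$ is symmetric, $\|\h{R}-\boldsymbol{I}_N\|_s = \sup_{x\in S^{N-1}}|\langle(\h{R}-\boldsymbol{I}_N)x,x\rangle|$, and replacing the sphere by a $1/4$-net $\mathcal{N}$ (of cardinality $|\mathcal{N}|\le 9^N$) costs only a factor of $2$, so $\|\h{R}-\boldsymbol{I}_N\|_s \le 2\max_{x\in\mathcal{N}}|\langle(\h{R}-\boldsymbol{I}_N)x,x\rangle|$. For each fixed $x\in S^{N-1}$, the scalars $W_i:=\langle\h{Z}_i,x\rangle$ are i.i.d.\ sub-gaussian with $\mathbb{E}W_i^2=1$, whence $\langle(\h{R}-\boldsymbol{I}_N)x,x\rangle = \frac{1}{T}\sum_{i=1}^{T}(W_i^2-1)$ is a centered average of sub-exponential summands, using that the square of a sub-gaussian variable is sub-exponential with $\|W_i^2-1\|_{\psi_1}\le CK^2$.

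I would then apply Bernstein's inequality for sub-exponential random variables to the fixed-direction average,
$$\mathbb{P}\left(\left|\frac{1}{T}\sum_{i=1}^{T}(W_i^2-1)\right|\ge u\right)\le 2\exp\left(-cT\min\left(\frac{u^2}{K^4},\frac{u}{K^2}\right)\right),$$
take a union bound over the $9^N$ points of $\mathcal{N}$, and integrate the resulting tail bound over $u$ to recover the bound on the expectation. The main obstacle, and the step that produces the characteristic two-term rate $\sqrt{N/T}+N/T$, is balancing the entropy factor $9^N$ (equivalently an additive $cN$ in the exponent from the union bound) against the Bernstein exponent $cT\min(u^2/K^4,\,u/K^2)$: the sub-gaussian regime $u^2/K^4$ governs the $\sqrt{N/T}$ term when $T\gtrsim N$, while the sub-exponential regime $u/K^2$ produces the $N/T$ term. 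Carefully tracking the absolute constants through this balance, together with verifying the sub-gaussian-to-sub-exponential norm estimate for $W_i^2-1$, are the only delicate points; the remaining reduction to the expectation via tail integration is routine.
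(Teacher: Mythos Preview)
The paper does not supply its own proof of this lemma: it is stated verbatim as Theorem~4.7.1 of \cite{vershynin2018high} and used as a black box in the proof of Proposition~\ref{sig_convergence}. Your proposal is precisely the standard covering/Bernstein argument that Vershynin gives for that theorem, and it is correct; so in effect your approach coincides with the cited source. One small caveat: the whitening step $\hX_i=\Sigma^{1/2}\h{Z}_i$ tacitly assumes the $\hX_i$ are mean zero with common second-moment matrix $\Sigma$ (and that $\Sigma$ is nonsingular, or else one restricts to its range), and the sub-gaussian constant $K$ of the whitened vectors is not literally ``absolute'' but inherited from the $\hX_i$; these are issues with the lemma's phrasing rather than with your argument.
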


\begin{corollary}[Exercise 4.7.3 in \cite{vershynin2018high}]
Let $\hX_i$ be independent sub-gaussian random vectors in $\mathbb{R}^N$, $i = 1,\cdots, T$. For any $u \ge 0$. We have 
$$\|\hat{\Sigma} - \Sigma\|_{s} \le C (\sqrt{\frac{N +u}{T}} + \frac{N +u}{T}) \|\Sigma\|_{s}$$
with probability at least $1- 2e^{-u}$, where $C$ is an absolute constant.
\end{corollary}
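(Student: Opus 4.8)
The plan is to route the argument through the unconstrained VAR($p$) least squares estimator, which attains the claimed spectral rate by itself, and then to argue that imposing the Kronecker-sum structure does not degrade it. Stack the lagged design as $\h{Z}_t = (\vect(\cX_{t-1})',\ldots,\vect(\cX_{t-p})')' \in \mathbb{R}^{pd}$, and let $\check\Phi = [\check\Phi^{(1)},\ldots,\check\Phi^{(p)}]$ be the unconstrained VAR($p$) least squares estimator obtained by regressing $\vect(\cX_t)$ on $\h{Z}_t$. Its normal equations give $\check\Phi - \Phi = \hat{\h{N}}\,\hat{\Gamma}^{-1}$, where $\hat\Gamma = T^{-1}\sum_t \h{Z}_t \h{Z}_t'$ is the stacked sample autocovariance of the design and $\hat{\h{N}} = T^{-1}\sum_t \vect(\cE_t)\, \h{Z}_t'$ is the noise--design cross term. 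Everything reduces to two spectral-norm facts: (a) $\|\hat\Gamma^{-1}\| = O_p(1)$, and (b) $\|\hat{\h{N}}\| = O_p(\sqrt{d/T})$.

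For (a), I would apply Lemma~\ref{HDL1} and its corollary to the sub-Gaussian vectors $\h{Z}_t$: under $d\log d/T\to 0$ this gives $\|\hat\Gamma - \Gamma\| = o_p(1)$, where $\Gamma = \mathbb{E}(\h{Z}_t\h{Z}_t')$ is the block-Toeplitz population autocovariance. Causality together with a uniformly nondegenerate $\Sigma$ guarantees that the spectral density of the VAR($p$) process is bounded below, so $\lambda_{\min}(\Gamma)\ge c>0$ uniformly in $d$; hence $\lambda_{\min}(\hat\Gamma)\ge c/2$ with probability tending to one, giving (a). For (b), the $\{\vect(\cE_t)\}$ are sub-Gaussian martingale differences with respect to the past and $\h{Z}_t$ is measurable with respect to the past, so a martingale version of the cross-covariance concentration underlying Lemma~\ref{HDL1} yields $\|\hat{\h{N}}\| = O_p(\sqrt{d/T})$, both the row- and column-dimensions being $O(d)$. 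Combining (a) and (b) gives $\|\check\Phi^{(i)} - \Phi^{(i)}\| = O_p(\sqrt{d/T})$ for each $i$.

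It remains to transfer this bound to the structured estimator $\hat\Phi^{(i)}$. Since $\Phi^{(i)}$ is itself a sum of $R_i$ Kronecker products, the truth is feasible for \eqref{lsearp}, so the basic inequality $\sum_t\|\vect(\cX_t) - \sum_i\hat\Phi^{(i)}\vect(\cX_{t-i})\|_F^2 \le \sum_t\|\vect(\cX_t)-\sum_i\Phi^{(i)}\vect(\cX_{t-i})\|_F^2$ holds; rewritten around $\check\Phi$, it shows that $\hat\Phi$ is the best Kronecker-sum approximation of $\check\Phi$ in the $\hat\Gamma^{1/2}$-weighted Frobenius metric. Near $\Phi$ the Kronecker-sum manifold is well approximated by its tangent cone, whose elements are telescoping sums $\sum_{r}\hA_K^{(ir)}\otimes\cdots\otimes \h{B}_k\otimes\cdots\otimes\hA_1^{(ir)}$. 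Using the normalization $\|\hA_k^{(ir)}\|_F = 1$ (so $\|\hA_k^{(ir)}\|\le 1$) and submultiplicativity of the spectral norm under Kronecker products, the spectral norm of such a direction is bounded by $\sum_{k,r}\|\h{B}_k\|$; hence $\hat\Phi^{(i)} - \Phi^{(i)}$ decomposes, to first order, into telescoping directions whose spectral norms are controlled by $\sum_{k,r}\|\hat\hA_k^{(ir)} - \hA_k^{(ir)}\|$. Lemma~\ref{hessian}, together with a high-dimensional version of the localization in Lemma~\ref{lemma2}, ties these parameter errors back to $\sum_i\|\hat\Phi^{(i)} - \Phi^{(i)}\|_F$, closing the loop.

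I expect the main obstacle to be exactly this last transfer, i.e.\ controlling $\|\hat\Phi^{(i)} - \check\Phi^{(i)}\|$ in spectral norm. The comparison inequality only yields a contraction in the weighted Frobenius metric, and the naive bound $\|\cdot\| \le \|\cdot\|_F$ loses a factor that would degrade the rate to $d/\sqrt{T}$. The resolution is to work entirely inside the tangent cone of the Kronecker-sum manifold, where the Frobenius and spectral norms of the relevant product-structured directions differ only by constants depending on the fixed factor norms rather than on $d$; making this geometric reduction rigorous, and absorbing the second-order curvature term of the manifold, is the delicate part. A secondary point is upgrading Lemma~\ref{lemma2} and the accompanying ergodic-theorem step from the fixed-dimensional regime to the high-dimensional one, which is precisely where sub-Gaussianity and the scaling $d\log d/T\to 0$ enter through Lemma~\ref{HDL1}.
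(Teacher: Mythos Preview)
Your proposal does not address the stated Corollary at all. The statement in question is a high-probability concentration bound for the sample covariance of \emph{independent} sub-Gaussian vectors --- a standard result quoted verbatim from Vershynin's textbook (Exercise~4.7.3). The paper does not prove it; it simply cites it as a known ingredient. A proof would proceed exactly as in Vershynin: an $\varepsilon$-net on the sphere of $\mathbb{R}^N$, a union bound over the net, and a Bernstein-type tail for $T^{-1}\sum_i \langle \hX_i, x\rangle^2 - \mathbb{E}\langle \hX_i,x\rangle^2$ at each net point. None of this has anything to do with the TenAR($p$) model, the structured estimator $\hat\Phi^{(i)}$, or the Kronecker-sum manifold.

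What you wrote is instead a sketch for Theorem~\ref{phi_convergence}. Even viewed as such, your route differs from the paper's and runs into the obstacle you yourself flag. The paper never passes through the unconstrained VAR estimator $\check\Phi$; it works directly with the constrained least-squares objective and proves a high-dimensional localization (the analogue of Lemma~\ref{lemma2}, equations \eqref{arp_lemma_0}--\eqref{arp_lemma_6}) showing that the minimum cannot occur outside a $\sqrt{d/T}$-neighborhood of $\Phi$ in spectral norm. This sidesteps entirely the step you call ``delicate'': controlling $\|\hat\Phi^{(i)} - \check\Phi^{(i)}\|$ in spectral norm via a Frobenius-to-spectral transfer on the tangent cone. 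That transfer is the real gap in your argument --- the basic inequality only gives a weighted-Frobenius contraction toward $\check\Phi$, and the claim that Frobenius and spectral norms are comparable on the relevant product-structured directions ``up to constants not depending on $d$'' is false in general (a single Kronecker factor $\h B_k$ can have $\|\h B_k\|_F/\|\h B_k\|$ of order $\sqrt{d_k}$). The paper's direct localization avoids this issue.
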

We extend above results to the autoregressive model, which is the following Proposition.

\begin{proposition}\label{sig_convergence}
Let $\hX_i$ be random vectors in $\mathbb{R}^N$ with stationary AR(1) structures, $i = 1,\cdots, T$. More precisely, $\hX_{i+1} = \Phi \hX_{i} + \xi_i$, assume that $\xi_i$ are IID sub-gaussian random vectors in $\mathbb{R}^{N}$. Let $\hat{\Sigma} = \frac{1}{T} \sum_{i=1}^{T} \hX_i \hX_i^{\prime}$ and $\Sigma = \mathbb{E}(\hat{\Sigma})$. We have 
$$\mathbb{E}\|\hat{\Sigma} - \Sigma\|_{s} \le C (\sqrt{\frac{N\log{N}}{T}} + \frac{N\log{N}}{T}) \|\Sigma\|_{s}$$
where $C$ is a constant.
\end{proposition}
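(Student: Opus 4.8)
The plan is to reduce the dependent sample covariance to the i.i.d.\ case already covered by Lemma~\ref{HDL1} and its Corollary, paying a logarithmic price for the temporal dependence. First I would record that, since the autoregression is causal, $\rho(\Phi)<1$, so by Gelfand's formula there are $\bar\rho\in(\rho(\Phi),1)$ and $C_{0}<\infty$ with $\|\Phi^{j}\|_{s}\le C_{0}\bar\rho^{\,j}$ for all $j\ge 0$; consequently $\hX_i=\sum_{j\ge0}\Phi^{j}\xi_{i-j}$, and because the $\xi_i$ are i.i.d.\ sub-Gaussian, $\hX_i$ is a sub-Gaussian vector whose $\psi_2$-norm relative to its covariance $\Sigma=\sum_{j\ge0}\Phi^{j}\mathrm{Cov}(\xi_0)(\Phi^{j})^{\prime}$ is bounded by an absolute constant, which is exactly the hypothesis under which Lemma~\ref{HDL1} produces the factor $\|\Sigma\|_{s}$.

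Next I would truncate. Fix a level $L$ (to be chosen of order $\log(NT)$), set $\hX_i^{L}:=\sum_{j=0}^{L}\Phi^{j}\xi_{i-j}$, $\h{R}_i^{L}:=\hX_i-\hX_i^{L}$, and $\Sigma^{L}:=\mathrm{Cov}(\hX_i^{L})$. Geometric decay of $\|\Phi^{j}\|_{s}$ gives $\|\Sigma-\Sigma^{L}\|_{s}\le C\bar\rho^{\,2L}\|\Sigma\|_{s}$, $\|\Sigma^{L}\|_{s}\le 2\|\Sigma\|_{s}$, and $\mathbb{E}\|\h{R}_i^{L}\|^{2}=\tr\,\mathrm{Cov}(\h{R}_i^{L})\le CN\bar\rho^{\,2L}\|\Sigma\|_{s}$. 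Expanding $\hX_i\hX_i'-\hX_i^{L}\hX_i^{L\prime}$ into its two cross terms and the remainder and using Cauchy--Schwarz on each, I expect
$$\mathbb{E}\Big\|\tfrac1T\sum_{i=1}^{T}\big(\hX_i\hX_i'-\hX_i^{L}\hX_i^{L\prime}\big)\Big\|_{s}\ \le\ C\,N\,\bar\rho^{\,L}\,\|\Sigma\|_{s},$$
so that, together with $\|\Sigma-\Sigma^{L}\|_{s}$, this part is $O\!\big(\sqrt{N/T}\,\|\Sigma\|_{s}\big)$ provided $\bar\rho^{\,L}\lesssim(NT)^{-1/2}$, i.e.\ $L\asymp\log(NT)$, which up to the conventions of the statement is $L\asymp\log N$. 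It then remains to control $\|\widehat\Sigma^{L}-\Sigma^{L}\|_{s}$, where $\widehat\Sigma^{L}:=\tfrac1T\sum_i\hX_i^{L}\hX_i^{L\prime}$.

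For this I would exploit that $\{\hX_i^{L}\}_i$ is $L$-dependent: partition $\{1,\dots,T\}$ into the $L+1$ arithmetic subsequences $\mathcal{I}_\ell=\{\ell,\,\ell+(L+1),\,\ell+2(L+1),\dots\}$, $0\le\ell\le L$, so that within each $\mathcal{I}_\ell$ the $\hX_i^{L}$ are i.i.d.\ with covariance $\Sigma^{L}$ and $|\mathcal{I}_\ell|\ge T/(L+1)-1$. Since $\widehat\Sigma^{L}-\Sigma^{L}=\tfrac1T\sum_{\ell}|\mathcal{I}_\ell|\,(\widehat\Sigma^{L}_{\ell}-\Sigma^{L})$ with $\widehat\Sigma^{L}_{\ell}$ the sample covariance over $\mathcal{I}_\ell$, one gets $\|\widehat\Sigma^{L}-\Sigma^{L}\|_{s}\le\max_{\ell}\|\widehat\Sigma^{L}_{\ell}-\Sigma^{L}\|_{s}$. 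I would then apply the Corollary of Lemma~\ref{HDL1} (Exercise~4.7.3 in \cite{vershynin2018high}) to each $\widehat\Sigma^{L}_{\ell}$ with effective sample size $|\mathcal{I}_\ell|\asymp T/(L+1)$ and deviation parameter $u+\log(L+1)$, union-bound over $\ell$, and integrate the resulting tail; this yields $\mathbb{E}\|\widehat\Sigma^{L}-\Sigma^{L}\|_{s}\le C\big(\sqrt{N(L+1)/T}+N(L+1)/T\big)\|\Sigma\|_{s}$. Combining with the truncation estimate and $L\asymp\log N$ finishes the proof.

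\emph{Main obstacle.} The one real issue is the temporal dependence; the rest is a mechanical reduction. The device above, truncating the moving-average expansion to make the data $L$-dependent and then splitting into $L+1$ independent subsequences, is precisely what forces the extra $\log$ factor, because $L$ must be taken of order $\log(NT)$ to annihilate the geometric truncation error at the scale $\sqrt{N/T}$. A secondary point to handle carefully is that causality only gives $\rho(\Phi)<1$, so one must pass through Gelfand's formula for the bound $\|\Phi^{j}\|_{s}\le C_{0}\bar\rho^{\,j}$ that makes the moving-average representation summable, and one must verify that $\hX_i$ is sub-Gaussian relative to its covariance $\Sigma$ (not merely with a dimension-dependent constant) so that Lemma~\ref{HDL1} applies with the factor $\|\Sigma\|_{s}$.
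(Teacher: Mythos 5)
Your proposal is correct and follows essentially the same route as the paper's proof: truncate the moving-average representation at roughly $\log N$ lags, use geometric decay to make the truncation bias negligible, split the resulting finitely-dependent sequence into $O(\log N)$ independent subsequences, and apply the i.i.d.\ covariance bound (Lemma~\ref{HDL1}) to each block, which is exactly where the extra $\log N$ enters. The only differences are cosmetic and, if anything, slightly more careful than the paper: you pass through Gelfand's formula rather than assuming $\|\Phi\|_s<1$, space the blocks by $L+1$ so they are genuinely independent, and control the blockwise maximum via the tail-bound corollary with a union bound, whereas the paper simply averages the blockwise expectation bounds.
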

\begin{proof}
We can write $\hX_i$ as $\hX_i = \sum_{j=0}^{\infty} \Phi^{j} \xi_{i-j}$ and denote its truncation form $\tilde{\hX}_i = \sum_{j=0}^{m} \Phi^{j} \xi_{i-j}$, where $\|\Phi\| = r < 1$ since it's a stationary process. Let $\tilde{\Sigma} = \frac{1}{T} \sum_{i=1}^{T} \tilde{\hX}_i \tilde{\hX}_i^{\prime}$. First we claim that as long as $m \ge \log{N}$, we have
$$\mathbb{E} \|\hat{\Sigma} - \tilde{\Sigma}\|_{s} \to 0$$
To prove this claim, let $\tilde{\h{Y}}_i = \hX_i - \tilde{\hX}_i$, $i = 1,\cdots,T$, which is $\tilde{\h{Y}}_i = \sum_{j=m+1}^{\infty} \Phi^{j} \xi_{i-j}$. So that
$\|\tilde{\h{Y}}_i\|_s = \|\sum_{j=m+1}^{\infty} \Phi^{j} \xi_{i-j}\|_s \le r^{m} \|\sum_{j=0}^{\infty} \Phi^{j} \xi_{i-j}\|_s \le r^{m} \sqrt{N} c_1$ with high probability since $\xi_i$ are sub-gaussian random vectors, where $c_1$ is an absolute constant. Similarly $\|\tilde{\hX}_i\|_s \le c_1 \sqrt{N}$ with high probability. Thus, we have
\begin{align}
    \begin{split}
       \mathbb{E} \|\hat{\Sigma} - \tilde{\Sigma}\|_{s} 
        &= \mathbb{E} \|\frac{1}{T}\sum_{i=1}^{T} (\tilde{\hX}_i \tilde{\hX}_i^{\prime} - \hX_i \hX_i^{\prime})\|_{s} \\
        &=  \mathbb{E} \|\frac{1}{T}\sum_{i=1}^{T} (\tilde{\hX}_i \tilde{\h{Y}}_i^{\prime} + \tilde{\h{Y}}_i \tilde{\hX}_i^{\prime} + \tilde{\h{Y}}_i \tilde{\h{Y}}_i^{\prime})\|_{s}\\
        &\le \frac{1}{T}\sum_{i=1}^{T} \left(\mathbb{E}\|\tilde{\hX}_i \tilde{\h{Y}}_i^{\prime}\|_s + \mathbb{E}\|\tilde{\h{Y}}_i \tilde{\hX}_i^{\prime}\|_s + \mathbb{E}\|\tilde{\h{Y}}_i \tilde{\h{Y}}_i^{\prime}\|_s\right) \\
        &\le c_2 r^m N
    \end{split}
\end{align}
since $0 < r < 1$, it goes to zero as long as $m \ge \log{N}$. Thus the claim has been proved.

Next, let $\tilde{\Sigma}_j = \frac{1}{h} \sum_{k=1}^{h} \tilde{\hX}_{j + km} \tilde{\hX}_{j + km}^{\prime}$ where $h$ is the integer part of $T/m$, $j = 1,\cdots,m$.
$$\mathbb{E}\|\hat{\Sigma} - \Sigma\|_s \le \mathbb{E}\left(\|\hat{\Sigma} - \tilde{\Sigma}\|_s + \|\tilde{\Sigma} - \mathbb{E}(\tilde{\Sigma})\|_s + \|\mathbb{E}(\tilde{\Sigma}) - \mathbb{E}(\hat{\Sigma})\|_s \right)$$
The first and third part on the right side of above formula goes to zero by the claim we have just proved. Thus we consider the second term,
\begin{align}
    \begin{split}
        \mathbb{E} \|\tilde{\Sigma} - \mathbb{E}(\tilde{\Sigma})\|_s
        &\le \mathbb{E} \left(\frac{1}{m}\|\sum_{j=1}^{m} (\tilde{\Sigma}_j - \mathbb{E}(\tilde{\Sigma}_j))\|_s \right)\\
        &\le  \mathbb{E} \left(\frac{1}{m}\sum_{j=1}^{m} \|\tilde{\Sigma}_j - \mathbb{E}(\tilde{\Sigma}_j)\|_s \right)\\
        &\le C \left(\sqrt{\frac{N\log{N}}{T}} + \frac{N\log{N}}{T}\right) \left(\frac{1}{m} \sum_{j=1}^{m}\|\mathbb{E} (\tilde{\Sigma}_j)\|_s\right)
    \end{split}
\end{align}
The last inequality follows from the Lemma~\ref{HDL1} since $\{\tilde{\hX}_i\}$ are independent sub-gaussian random vectors. Since $\mathbb{E} (\tilde{\Sigma}_j) = \mathbb{E} (\tilde{\Sigma}_i)$, $i \neq j$, we finish our proof by
$$\frac{1}{m} \sum_{j=1}^{m}\|\mathbb{E} (\tilde{\Sigma}_j)\|_s = \|  \mathbb{E} (\frac{1}{m} \sum_{j=1}^{m} \tilde{\Sigma}_j)\|_s = \|\mathbb{E} (\hat{\Sigma})\|_s = \|\Sigma\|_s$$
\end{proof}

\begin{proposition}\label{arp_sig}
Let $\hX_i$ be random vectors in $\boldsymbol{R}^N$ with stationary AR(p) structures, $i = 1,\cdots, T$. More precisely, $\hX_{i} = \sum_{l=1}^{p} \Phi^{(l)} \hX_{i-l} + \xi_{i}$, assume that $\xi_i$ are IID sub-gaussian random vectors in $\boldsymbol{R}^{N}$. Let $\hat{\Gamma}_s = \frac{1}{T} \sum_{i=1}^{T} \hX_{i+s} \hX_i^{\prime}$ and $\Gamma_s = \mathbb{E}(\hat{\Gamma_s})$. for $s=1,\cdots, p$ we have 
$$\mathbb{E}\|\hat{\Gamma}_s - \Gamma_s\|_{s} \le C (\sqrt{\frac{N\log{N}}{T}} + \frac{N\log{N}}{T}) \|\Sigma\|$$
where $C$ is a constant.
\end{proposition}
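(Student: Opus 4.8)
\textbf{Proof proposal for Proposition~\ref{arp_sig}.} The plan is to reduce the AR($p$) case to the AR($1$) case already established in Proposition~\ref{sig_convergence} through the companion (state‑space) form. Fix $1\le s\le p$ and set $q:=\max(p,\,s+1)\le p+1$. Stack the observations into $\h{Z}_i:=(\h{X}_i^{\prime},\h{X}_{i-1}^{\prime},\ldots,\h{X}_{i-q+1}^{\prime})^{\prime}\in\mathbb{R}^{qN}$. Because $\h{X}_i=\sum_{l=1}^p\Phi^{(l)}\h{X}_{i-l}+\xi_i$ and the remaining blocks of $\h{Z}_i$ are shifts of blocks of $\h{Z}_{i-1}$, the process $\{\h{Z}_i\}$ obeys a VAR($1$) recursion $\h{Z}_i=\boldsymbol{F}\h{Z}_{i-1}+\h{\zeta}_i$, where $\boldsymbol{F}$ is the $qN\times qN$ block companion matrix of $\Phi^{(1)},\ldots,\Phi^{(p)}$ and $\h{\zeta}_i:=(\xi_i^{\prime},\h{0}^{\prime},\ldots,\h{0}^{\prime})^{\prime}$; the $\h{\zeta}_i$ are IID and sub‑Gaussian in $\mathbb{R}^{qN}$ because the $\xi_i$ are and zero‑padding preserves the sub‑Gaussian norm. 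Since $q\ge s+1$, both $\h{X}_{i+s}$ and $\h{X}_i$ occur as blocks of the single vector $\h{Z}_{i+s}$, so $\hat{\Gamma}_s=\frac1T\sum_i\h{X}_{i+s}\h{X}_i^{\prime}$ equals, up to negligible endpoint terms in the summation, an off‑diagonal block of $\frac1T\sum_i\h{Z}_i\h{Z}_i^{\prime}$, and $\Gamma_s$ is the corresponding block of $\tilde{\Sigma}:=\mathbb{E}(\h{Z}_i\h{Z}_i^{\prime})$. Passing to a sub‑block does not increase the spectral norm, so $\|\hat{\Gamma}_s-\Gamma_s\|_s\le\|\frac1T\sum_i\h{Z}_i\h{Z}_i^{\prime}-\tilde{\Sigma}\|_s$.

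Next I would apply Proposition~\ref{sig_convergence} to the AR($1$) process $\{\h{Z}_i\}$ of dimension $\tilde N:=qN$. Here lies the main obstacle: the companion matrix $\boldsymbol{F}$ is generally not a contraction in the spectral norm, and stationarity only gives $\rho(\boldsymbol{F})<1$, so Proposition~\ref{sig_convergence} cannot be invoked as a literal black box. The resolution is that the hypothesis $\|\Phi\|_s<1$ there is used only to obtain the geometric decay $\|\Phi^j\|_s\le r^j$ of the coefficients in the Wold expansion; for the companion matrix, Gelfand's formula (equivalently, the Jordan decomposition) gives, for any fixed $r$ with $\rho(\boldsymbol{F})<r<1$, a finite constant $C_r$ with $\|\boldsymbol{F}^j\|_s\le C_r r^j$ for all $j\ge0$. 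Replacing $r^j$ by $C_r r^j$ throughout the proof of Proposition~\ref{sig_convergence} changes only the absolute constants, so that proof yields
\[
\mathbb{E}\Big\|\tfrac1T\sum_i\h{Z}_i\h{Z}_i^{\prime}-\tilde{\Sigma}\Big\|_s\;\le\;C\Big(\sqrt{\tfrac{\tilde N\log\tilde N}{T}}+\tfrac{\tilde N\log\tilde N}{T}\Big)\,\|\tilde{\Sigma}\|_s .
\]
An equivalent route that bypasses the companion form is to re‑run the truncation argument of Proposition~\ref{sig_convergence} directly for the VAR($p$), using its causal MA($\infty$) representation $\h{X}_i=\sum_{j\ge0}\boldsymbol{\Psi}_j\xi_{i-j}$ with $\|\boldsymbol{\Psi}_j\|_s\le C_r r^j$ together with the same block‑independence device already used there.

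Finally I would revert to the original scaling. Since $p$, hence $q$, is fixed, $\tilde N=qN$ and $\tilde N\log\tilde N$ are of the same order as $N$ and $N\log N$ respectively, with constants depending only on $p$. For the prefactor, $\tilde{\Sigma}$ is a $q\times q$ block matrix whose $(a,b)$ block is $\Gamma_{|a-b|}$; for a stationary process the Cauchy--Schwarz inequality gives $\|\Gamma_k\|_s\le\|\Gamma_0\|_s=\|\Sigma\|$ for every $k$, so $\|\tilde{\Sigma}\|_s\le q\,\|\Sigma\|$ by a crude block bound. Absorbing all $p$‑dependent factors into $C$ yields $\mathbb{E}\|\hat{\Gamma}_s-\Gamma_s\|_s\le C(\sqrt{N\log N/T}+N\log N/T)\,\|\Sigma\|$, uniformly over $1\le s\le p$, which is the claim.
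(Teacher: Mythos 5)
Your proposal is correct, and it reaches the bound by a route that is a repackaging of, rather than identical to, the paper's argument. The paper's own proof is a one-liner: it invokes the causal MA($\infty$) representation $\vect(\mathcal{X}_t)=\sum_{j\ge 0}\boldsymbol{C}_j\vect(\mathcal{E}_{t-j})$ with absolutely summable coefficients and asserts that re-running the truncation-plus-blocking argument of Proposition~\ref{sig_convergence} (truncate at $m\gtrsim\log N$, split into $m$ groups of independent truncated vectors, apply Lemma~\ref{HDL1}) extends to AR($p$) and to the cross-covariances $\hat\Gamma_s$ "almost straightforwardly", with details omitted. Your primary route instead reduces to the already-proved AR(1) statement via the companion form: this buys a genuine black-box use of Proposition~\ref{sig_convergence}, and it handles the lagged covariances $s\ge 1$ cleanly, since $\hat\Gamma_s-\Gamma_s$ becomes a sub-block of the stacked empirical covariance error and sub-blocks do not increase the spectral norm — precisely the step the paper leaves implicit. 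Your second sketched route (truncating the MA($\infty$) expansion of the VAR($p$) directly with $\|\boldsymbol{\Psi}_j\|_s\le C_r r^j$) is essentially the paper's argument. Two further points in your favor: you correctly note that stationarity only gives $\rho<1$, not $\|\cdot\|_s<1$ (the paper's AR(1) proof conflates these), and the Gelfand-formula patch $\|\boldsymbol{F}^j\|_s\le C_r r^j$ fixes this at the cost of constants only; and your bookkeeping of the prefactor ($\|\Gamma_k\|_s\le\|\Gamma_0\|_s$ by Cauchy--Schwarz, hence $\|\tilde\Sigma\|_s\le q\|\Sigma\|$ with $q\le p+1$ fixed) legitimately absorbs all $p$-dependence into $C$. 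Both approaches ultimately rest on the same key ingredients (geometric decay of the causal coefficients, the $\log N$-truncation and block-independence device, and the sub-Gaussian covariance bound of Lemma~\ref{HDL1}), so the estimates obtained are the same.
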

\begin{proof}
By causality of the process, we have the representation $\vect(\mathcal{X}_{t}) = \sum_{j=0}^{\infty} \boldsymbol{C}_{j} \vect(\ten{E}_{t-j})$, where $\{\boldsymbol{C}_j\}$ is a sequence of matrices whose components are absolutely summable. Then the extension of the proof in Proposition~\ref{sig_convergence} to AR($p$) case and $\hat{\Gamma}_s = \frac{1}{T} \sum_{i=1}^{T} \hX_{i+s} \hX_i^{\prime}$ is almost straightforward so we omit the details.
\end{proof}

\begin{proof}[Proof of Proposition~\ref{phi_convergence}]
Under the conditions of Theorem~\ref{lse_arp_multi} and $d\log(d)/T \to 0$. Consider the VAR($p$) representation. Then for any sequence $\{C_{NT}\}$ such that $C_{NT} \to \infty$ as $N,T \to \infty$,
\begin{equation}\label{arp_lemma_0}
    P\bigg[ \inf_{\sqrt{T/N} \sum_{i=1}^{p} \|\bar{\Phi}^{(i)} - \Phi^{(i)} \|_{s} \ge C_{NT}} \sum_{t=p+1}^{T} \|\vect(\mathcal{X}_t) - \sum_{i=1}^{p}\bar{\Phi}^{(i)} \vect(\mathcal{X}_{t-i}) \|^2_{F} \le \sum_{t=p+1}^{T} \|\vect(\mathcal{E}_t)\|^2  \bigg] \to 0
\end{equation}

Let $\hat{\Gamma}_s = \frac{1}{T} \sum_{t=s+2}^{T} \vect(\mathcal{X}_{t-1}) \vect(\mathcal{X}_{t-s-1})^{\prime}$ and $\Gamma_s = \mathbb{E}(\hat{\Gamma}_s)$, $s=0,\cdots,p-1$. By Proposition~\ref{arp_sig} we know that $\|\hat{\Gamma}_s - \Gamma_s\|_{s} \rightarrow 0$ in probability. Then we can find subsequence $N_k, T_k$ such that $\|\hat{\Gamma}_s - \Gamma_s\|_{s} \rightarrow 0$ almost surely. It follows that for any constant $c>0$, $i,j=1,\cdots,p$, and $h=|i-j|$,
\begin{equation}\label{arp_lemma_1}
\begin{split}
       \sup_{\sqrt{\frac{T}{N}}  \sum_{i=1}^{p} \|\bar{\Phi}^{(i)} - \Phi^{(i)} \|_{s} \le c} \frac{1}{N_k T_k} \bigg|  &\sum_{t=p+1}^{T_k} \tr\left[(\bar{\Phi}^{(i)} - \Phi^{(i)}) \vect(\mathcal{X}_{t-i})  \vect(\mathcal{X}_{t-j})^{\prime} (\bar{\Phi}^{(j)} - \Phi^{(j)})^{\prime}\right] \\
      &- T_k\tr\left[(\bar{\Phi}^{(i)} - \Phi^{(i)}) \Gamma_h (\bar{\Phi}^{(j)} - \Phi^{(j)})^{\prime}\right]  \bigg|  \quad
      \to 0 \quad \as  
\end{split}
\end{equation}
The convergence follows from Proposition~\ref{arp_sig} and $d\log(d) /T \to 0$. As a consequence of (\ref{arp_lemma_1}), there exists a sequence $C_{N_k T_k}$ such that $C_{N_k T_k} \to \infty$, $C_{N_k T_k} \le C_{NT}$ and
\begin{equation}\label{arp_lemma_2}
\begin{split}
      \sup_{\sqrt{\frac{T_k}{N_k}}  \sum_{i=1}^{p} \|\bar{\Phi}^{(i)} - \Phi^{(i)} \|_{s} \le C_{N_k T_k}} \frac{1}{N_k T_k} \bigg| &\sum_{t=p+1}^{T_k} \tr\left[(\bar{\Phi}^{(i)} - \Phi^{(i)}) \vect(\mathcal{X}_{t-i}) \vect(\mathcal{X}_{t-j})^{\prime} (\bar{\Phi}^{(j)} - \Phi^{(j)})^{\prime}\right] \\
      &- T_k\tr\left[(\bar{\Phi}^{(i)} - \Phi^{(i)}) \Gamma_h (\bar{\Phi}^{(j)} - \Phi^{(j)})^{\prime}\right]  \bigg|  \quad
      \to 0 \quad \text{in probability.}
\end{split}
\end{equation}
Now we write
\begin{equation}\label{arp_lemma_3}
\begin{split}
      &\sum_{t=p+1}^{T} \|\vect(\mathcal{X}_t) - \sum_{i=1}^{p}\bar{\Phi}^{(i)} \vect(\mathcal{X}_{t-i}) \|^2_{F} - \sum_{t=p+1}^{T} \|\vect(\mathcal{E}_t)\|^2 \\
      =& \sum_{t=p+1}^{T} \sum_{i,j=1}^{p} \tr\left[(\bar{\Phi}^{(i)} - \Phi^{(i)}) \vect(\mathcal{X}_{t-i}) \vect(\mathcal{X}_{t-j})^{\prime} (\bar{\Phi}^{(j)} - \Phi^{(j)})^{\prime}\right] \\
      & - 2 \sum_{t=p+1}^{T} \sum_{i=1}^{p} \tr\left[(\bar{\Phi}^{(i)} - \Phi^{(i)}) \vect(\mathcal{X}_{t-i}) \vect(\mathcal{E}_{t})^{\prime}\right]
\end{split}
\end{equation}
On the boundary set $\sqrt{\frac{T_k}{N_k}} \sum_{i=1}^{p}\|\bar{\Phi}^{(i)} - \Phi^{(i)} \|_{s} = C_{N_k T_k}$, 
\begin{equation}\label{arp_lemma_4}
\begin{split}
      \sum_{t=p+1}^{T_k} \tr\left[(\bar{\Phi}^{(i)} - \Phi^{(i)}) \vect(\mathcal{X}_{t-i}) \vect(\mathcal{E}_{t})^{\prime}\right] 
      &\le  \|\bar{\Phi}^{(i)} - \Phi^{(i)}\|_{F} \|\sum_{t=p+1}^{T_k} \vect(\mathcal{X}_{t-i}) \vect(\mathcal{E}_t)^{\prime}\|_{F} \\
      &\le O_p( N_k^2 C_{N_k T_k})
\end{split}
\end{equation}
On the other hand,
\begin{equation}\label{arp_lemma_5}
\begin{split}
      T_k \sum_{t=p+1}^{T_k}  \tr\left[(\bar{\Phi}^{(i)} - \Phi^{(i)}) \Gamma_{h} (\bar{\Phi}^{(j)} - \Phi^{(j)})^{\prime}\right] \ge N_k^2  C_{N_k T_k}^2 \lambda_{\min}(\Gamma_h)
\end{split}
\end{equation}
where $\lambda_{\min}(\Gamma_h)$ is the minimum eigenvalue of $\Gamma_h$, which is strictly positive under our assumptions. Follows from (\ref{arp_lemma_2}) to (\ref{arp_lemma_5}), and the fact that order $p$ is fixed, $C_{N_k T_k} \to \infty$, we have
\begin{equation}\label{arp_lemma_6}
    P\left[ \inf_{\sqrt{T_k/N_k} \sum_{i=1}^{p} \|\bar{\Phi}^{(i)} - \Phi^{(i)} \|_{s} = C_{N_k T_k}} \sum_{t=p+1}^{T_k} \|\vect(\mathcal{X}_t) - \sum_{i=1}^{p}\bar{\Phi}^{(i)} \vect(\mathcal{X}_{t-i}) \|^2_{F} \le \sum_{t=p+1}^{T_k} \|\vect(\mathcal{E}_t)\|^2  \right] \to 0
\end{equation}
Since $\sum_{t=p+1}^{T_k} \|\vect(\mathcal{X}_t) - \sum_{i=1}^{p}\bar{\Phi}^{(i)} \vect(\mathcal{X}_{t-i}) \|^2_{F}$ is a convex function of $\bar{\Phi}^{(i)}$, so (\ref{arp_lemma_0}) is implied by (\ref{arp_lemma_6}).
And (\ref{arp_lemma_0}) implies Proposition~\ref{phi_convergence}, which completes the proof.

\end{proof}

\subsection{Proof of Theorem~\ref{select}}

We need one more Lemma to begin our proof.
\begin{lemma}\label{XE}
Under same condition of Theorem~\ref{select}, we have $\frac{1}{T}\|\sum_{t=2}^{T} \vect(\mathcal{X}_{t-1}) \vect(\mathcal{E}_t) \|_s = O_p (\sqrt{\frac{N}{T}})$.
\end{lemma}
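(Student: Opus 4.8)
The plan is to write $\boldsymbol{M} := \frac{1}{T}\sum_{t=2}^{T}\vect(\mathcal{X}_{t-1})\vect(\mathcal{E}_t)^{\prime}$ and to bound $\|\boldsymbol{M}\|_s$ directly by combining an $\epsilon$-net discretization of the sphere with a Bernstein-type inequality for martingales. The first observation is that $\{\vect(\mathcal{X}_{t-1})\vect(\mathcal{E}_t)^{\prime}\}_t$ is a matrix martingale difference sequence with respect to $\mathcal{F}_t=\sigma(\mathcal{E}_s:s\le t)$, since $\vect(\mathcal{X}_{t-1})$ is $\mathcal{F}_{t-1}$-measurable while $\mathcal{E}_t$ has mean zero and is independent of $\mathcal{F}_{t-1}$. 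Using the causal representation $\vect(\mathcal{X}_t)=\sum_{j\ge 0}\Phi^{j}\vect(\mathcal{E}_{t-j})$ with geometrically summable coefficients (exactly as in the proofs of Propositions~\ref{sig_convergence}--\ref{arp_sig}), $\vect(\mathcal{X}_{t-1})$ is a sub-Gaussian vector whose one-dimensional projections $u^{\prime}\vect(\mathcal{X}_{t-1})$ have $\psi_2$-norm $O(1)$ uniformly over unit $u$, and likewise for $v^{\prime}\vect(\mathcal{E}_t)$. Hence, for fixed unit vectors $u,v$, the scalar martingale $S_{uv}:=\sum_{t}\big(u^{\prime}\vect(\mathcal{X}_{t-1})\big)\big(v^{\prime}\vect(\mathcal{E}_t)\big)$ has sub-exponential increments with $\psi_1$-norm $O(1)$, and its predictable quadratic variation is $(v^{\prime}\Sigma v)\sum_t\big(u^{\prime}\vect(\mathcal{X}_{t-1})\big)^2\le \|\Sigma\|_s\,\|\hat{\Gamma}_0\|_s\,T$, where $\hat{\Gamma}_0=\frac1T\sum_t\vect(\mathcal{X}_{t-1})\vect(\mathcal{X}_{t-1})^{\prime}$.

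By Proposition~\ref{sig_convergence} (applied with $\hat{\Gamma}_0$ in the role of $\hat{\Sigma}$), $\|\hat{\Gamma}_0-\Gamma_0\|_s=o_p(1)$ under $N\log N/T\to 0$, so on an event $\Omega_T$ with $P(\Omega_T)\to 1$ the quadratic variation of $S_{uv}$ is at most $V_0 T$ with $V_0:=2\|\Sigma\|_s\|\Gamma_0\|_s=O(1)$, simultaneously for all unit $u,v$. On $\Omega_T$, a Freedman/Bernstein-type inequality for martingales with sub-exponential increments gives, for every $x>0$,
\[
P\big(|S_{uv}|\ge x,\ \Omega_T\big)\le 2\exp\!\Big(-c\,\min\big(x^2/(V_0 T),\ x/K\big)\Big),
\]
for constants $c,K>0$. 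Choosing $x=C\sqrt{NT}$ and using $N/T\to 0$, the minimum is realized by the quadratic term once $T$ is large, so the right side is $2\exp(-cC^2N/V_0)$. Finally, fix a $\tfrac14$-net $\mathcal{N}$ of $S^{N-1}$ with $|\mathcal{N}|\le 9^{N}$, use $\|\boldsymbol{M}\|_s\le 2\max_{u,v\in\mathcal{N}}u^{\prime}\boldsymbol{M} v=\tfrac{2}{T}\max_{u,v\in\mathcal{N}}|S_{uv}|$, and apply a union bound over the $\le 9^{2N}$ pairs: $P\big(\|\boldsymbol{M}\|_s\ge 2C\sqrt{N/T}\big)\le P(\Omega_T^{c})+2\exp\big((2\log 9-cC^2/V_0)N\big)$, which tends to $0$ for $C$ large enough. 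This yields $\|\boldsymbol{M}\|_s=O_p(\sqrt{N/T})$, which is the claim. For the TenAR$(p)$ model one runs the same argument with $\vect(\mathcal{X}_{t-1})$ replaced by the stacked vector of the $p$ relevant lags; since $p$ is fixed this does not affect the rate.

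The main obstacle is that the predictable quadratic variation controlling the martingale is itself data-dependent and must be bounded uniformly over the sphere before the union bound; this is precisely what forces us to first establish $\|\hat{\Gamma}_0\|_s=O_p(1)$ via Proposition~\ref{sig_convergence} and then condition on the resulting good event. A secondary point is justifying the sub-exponential tails of the increment products, which hinges on $\vect(\mathcal{X}_{t-1})$ being genuinely sub-Gaussian and is inherited from its geometrically convergent MA$(\infty)$ expansion under causality with sub-Gaussian innovations. I note that a cheaper route exists via the Yule--Walker identity $\frac1T\sum_t\vect(\mathcal{X}_{t-1})\vect(\mathcal{E}_t)^{\prime}=(\hat{\Gamma}_1-\Gamma_1)^{\prime}-(\hat{\Gamma}_0-\Gamma_0)\Phi^{\prime}$ together with Propositions~\ref{sig_convergence} and \ref{arp_sig}, but this only gives the weaker rate $O_p(\sqrt{N\log N/T})$; the martingale argument above is what removes the logarithmic factor and is what is actually used in the proof of Theorem~\ref{select}.
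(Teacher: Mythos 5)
Your proposal is correct in substance but takes a genuinely different route from the paper. The paper's proof of Lemma~\ref{XE} is much shorter: it expands $\vect(\mathcal{X}_{t-1})=\sum_{j\ge 0}\boldsymbol{C}_j\vect(\mathcal{E}_{t-1-j})$ by causality, interchanges the two sums, invokes cited random-matrix results to get $\bigl\|\tfrac1T\sum_t\vect(\mathcal{E}_{t-1-j})\vect(\mathcal{E}_t)^{\prime}\bigr\|_s=O(\sqrt{N/T})$ for each lag $j$, and then sums over $j$ using the absolute summability of the $\boldsymbol{C}_j$; no martingale structure, net argument, or sample-covariance control appears. Your route---martingale differences, an $\epsilon$-net of the sphere, and a Freedman/Bernstein bound with the predictable quadratic variation controlled through $\|\hat{\Gamma}_0\|_s$ via Proposition~\ref{sig_convergence}/\ref{arp_sig}---is self-contained and avoids leaning on external spectral-norm results for the noise cross-products, at the cost of more machinery. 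One step needs tightening: conditionally on $\mathcal{F}_{t-1}$ the increment $(u^{\prime}\vect(\mathcal{X}_{t-1}))(v^{\prime}\vect(\mathcal{E}_t))$ has sub-Gaussian scale proportional to $|u^{\prime}\vect(\mathcal{X}_{t-1})|$, not a constant, so a martingale Bernstein inequality with a fixed sub-exponential parameter $K$ is not directly available as you state it. The clean repair is to use conditional sub-Gaussianity with the predictable variance proxy $(v^{\prime}\Sigma v)(u^{\prime}\vect(\mathcal{X}_{t-1}))^2$ and the exponential-supermartingale (Freedman-type) bound $P\bigl(|S_{uv}|\ge x,\ \langle S_{uv}\rangle\le V_0T\bigr)\le 2e^{-x^2/(2V_0T)}$, which removes the linear term entirely and lets your union bound over the $9^{2N}$ net pairs go through unchanged with $x=C\sqrt{NT}$. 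Finally, your closing remark that the martingale argument ``is what is actually used'' in the paper is inaccurate: the paper proves the lemma by the MA$(\infty)$ decomposition described above, and obtains the same $O_p(\sqrt{N/T})$ rate that way.
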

\begin{proof}
By causality of the process, we have the representation $\vect(\mathcal{X}_{t-1}) = \sum_{j=0}^{\infty} \boldsymbol{C}_{j} \vect(\ten{E}_{t-1-j})$, where $\{\boldsymbol{C}_j\}$ is a sequence of matrices whose components are absolutely summable. Then we have,
$$\sum_{t=2}^{T} \vect(\mathcal{X}_{t-1}) \vect(\mathcal{E}_{t})^{\prime} = \sum_{t=2}^{T} \left[\sum_{j=0}^{\infty} \boldsymbol{C}_{j} \vect(\ten{E}_{t-1-j})\right] \vect(\mathcal{E}_{t})^{\prime}=  \sum_{j=0}^{\infty} \left[ \boldsymbol{C}_j \left(\sum_{t=2}^{T} \vect(\ten{E}_{t-1-j}) \vect(\mathcal{E}_{t})^{\prime}\right)\right].$$
By the results in \cite{L2015} and \cite{wang2015}, we have 
$$\|\sqrt{\frac{T}{N}} \sum_{t=2}^{T} \frac{\vect(\ten{E}_{t-1-j}) \vect(\mathcal{E}_{t})^{\prime})}{T}\|_s \to c \, \as$$
where $c$ is a universal constant. Thus,
$$\frac{1}{T}\|\sum_{t=2}^{T} \vect(\mathcal{X}_{t-1}) \vect(\mathcal{E}_t) \|_s  \le \sum_{j=0}^{\infty} \boldsymbol{C}_j \|\frac{\sum_{t=2}^{T} \vect(\mathcal{E}_{t-1-j}) \vect(\mathcal{E}_t)^{\prime}}{T} \|_s \le O_p(\sqrt{\frac{N}{T}})$$

\end{proof}

\begin{proof}[Proof of Theorem~\ref{select}]

It is sufficient to show that if $\hat{\h{R}}_{\hat p} \neq \h{R}_{p}$, we have
\begin{equation}\label{4.1}
    \lim_{N,T \to \infty} P(\ic(\hat{\h{R}}_{\hat p}) < \ic(\h{R}_{p})) = 0,
\end{equation}
Denote ${\h{X}}_{i} \in \mathbb{R}^{(T-p-1) \times d}$ as ${\h{X}}_{i}^{\prime} = \left[ \vect(\mathcal{X}_{p-i+1}),\cdots,\vect(\mathcal{X}_{T-i}) \right]$ for $1 \le i \le p$, ${\h{Y}} \in \mathbb{R}^{(T-p-1) \times d}$ as ${\h{Y}}^{\prime} = \left[ \vect(\mathcal{X}_{p+1}), \cdots, \vect(\mathcal{X}_{T+1}) \right]$, and ${\h{E}} \in \mathbb{R}^{(T-p-1) \times d}$ as ${\h{E}}^{\prime} = \left[ \vect(\mathcal{E}_{p+1}), \cdots, \vect(\mathcal{E}_{T+1}) \right]$. Let $V_0 = \frac{1}{dT} \|\h{Y}^{\prime} - \Phi^{(1)} \h{X}_{1}^{\prime} - \cdots - \Phi^{(p)} \h{X}_{ p}^{\prime}\|_{F}^2$ and $V(\hat{\h{R}}_{\hat p}) = \frac{1}{dT} \|\h{Y}^{\prime} - \hat{\Phi}^{(1)} \h{X}_{1}^{\prime} - \cdots - \hat{\Phi}^{(\hat p)} \h{X}_{\hat p}^{\prime}\|_{F}^2$, where $\hat{\Phi}^{(i)} = \sum_{j=1}^{\hat{R}_i} \hat{\hA}_{K}^{(ij)} \otimes \cdots \otimes \hat{\hA}_1^{(ij)}$ are estimated $\Phi^{(i)}$ under given K-rank $\hat{R}_i$, $1 \le i \le \hat{p}$. (\ref{4.1}) is equivalent to
\begin{equation}\label{4.2}
    P\left( \log\left( V(\hat{\h{R}}_{\hat p}) / V(\h{R}_{p}) \right) < g(d,T) (\sum_{i=1}^{p} R_i - \sum_{i=1}^{\hat{p}} \hat{R}_{i})  \right) \to 0.
\end{equation}
Since we allow $R_p = 0$, we can assume $p = P_{\max} \ge \hat{p}$. Replace $\h{Y}^{\prime} = \Phi^{(1)} \h{X}_{1}^{\prime} + \cdots + \Phi^{(p)} \h{X}_{p}^{\prime} + \h{E}^{\prime}$. Denote $\bar{\h{X}} = \{ \hX_1, \cdots, \hX_p \} \in \mathbb{R}^{(T-p-1) \times dp}$, $\bar{\h{E}} = \left[ \h{E}, \h{0} \right] \in \mathbb{R}^{(T-p-1) \times dp}$ and
\begin{equation*}
\Phi - \hat{\Phi} = \begin{pmatrix}
\Phi^{(1)} - \hat{\Phi}^{(1)} & & \\
& \cdots & \\
& & \Phi^{(p)} - \hat{\Phi}^{(p)} 
\end{pmatrix}.
\end{equation*}
For given $\hat{\h{R}}_{\hat p}$, we have 
\begin{align}\label{4.3}
\begin{split}
V(\hat{\h{R}}_{\hat p}) 
=& \frac{1}{dT} \|(\Phi^{(1)} - \hat{\Phi}^{(1)})\h{X}_{1}^{\prime} + \cdots + (\Phi^{(p)} - \hat{\Phi}^{(p)})\h{X}_{p}^{\prime} + \h{E}^{\prime}\|_F^2\\
=& \frac{1}{dT} \| (\Phi - \hat{\Phi}) \bar{\hX}^{\prime} + \bar{\h{E}}^{\prime} \|_F^2\\
=& \frac{1}{dT} \left(\| (\Phi - \hat{\Phi}) \bar{\hX}^{\prime}\|_F^2 + 2 \tr\left( (\Phi - \hat{\Phi}) \bar{\hX}^{\prime} \bar{\h{E}}\right) + \|\bar{\h{E}} \|_F^2 \right)
\end{split}
\end{align}

\textbf{Case 1}: If $\hat{R}_i \ge R_i$ for all $1 \le i \le p$, since
\begin{equation}\label{4.7}
    |V(\hat{\h{R}}_{\hat p}) - V(\h{R}_{p})| \le |V(\hat{\h{R}}_{\hat p}) - V_0| + |V_0 - V(\h{R}_{p})|
\end{equation}
By Theorem~\ref{phi_convergence}, we know that when $\hat{R}_i \ge R_i$, we have $\|\Phi^{(i)}- \hat{\Phi}^{(i)}\|_F \le  = O_p(\frac{N}{\sqrt{T}})$. Let $\Sigma = \mathbb{E}(\frac{1}{T} \bar{\hX}^{\prime} \bar{\hX}) = \mathbb{E}(\frac{1}{T} \sum_{i,j=1}^{p}\hX_i^{\prime} \hX_j)$. We have,
\begin{align}\label{4.8}
    \begin{split}
        |V(\hat{\h{R}}_{\hat p}) - V_0| &= \frac{1}{dT} \|(\Phi - \hat{\Phi})\bar{\h{X}}^{\prime}\|_F^2  +  \frac{2}{dT} \text{tr}((\Phi - \hat{\Phi})\bar{\h{X}}^{\prime} \bar{\h{E})} \\
        &\le \frac{1}{d} \lambda_{max}( \frac{1}{T} \bar{\hX}^{\prime} \bar{\hX})  \|\Phi - \hat{\Phi}\|_F^2 + \frac{2}{dT} \|\Phi - \hat{\Phi}\|_{F} \|\bar{\h{X}}^{\prime} \bar{\h{E}}\|_{F}  \\
        &\le \lambda_{max}(\Sigma) O_p(\frac{d}{T}) + O_p (\frac{d}{T})
    \end{split}
\end{align}
The last inequality follows from Proposition~\ref{arp_sig}, Theorem~\ref{phi_convergence} and Lemma~\ref{XE}. Similarly, we have
$|V(\h{R}_{p}) - V_0| \le O_p (\frac{d}{T})$. Thus, taking (\ref{4.8}) into (\ref{4.7}), it follows that,
\begin{equation}\label{4.10}
    |V(\hat{\h{R}}_{\hat p}) - V(\h{R}_{p})| \le O_p (\frac{d}{T})
\end{equation}
This implies that $V(\hat{\h{R}}_{\hat p}) / V(\h{R}_{p}) \le 1 + O_p(\frac{d}{T})$. Thus $\log\left(V(\hat{\h{R}}_{\hat p}) / V(\h{R}_{p}) \right) \le O_p(\frac{d}{T})$. However, we have that $g(N,T) > O_p(\frac{d}{T})$. Thus, the (\ref{4.2}) holds, which implies (\ref{4.1}) holds.

\textbf{Case 2}: If exists $\hat{R}_i < R_i$ for some $1 \le i \le p$. Under the Assumption \ref{fullrank}, we can lower bound the gap between wrong $\hat{\Phi}^{(i)}$ and true $\Phi^{i}$. $\|\Phi^{(i)} - \hat{\Phi}^{(i)}\|_{F}^2 = \|\sum_{j=\hat{R}_{i}+1}^{R_i} (\hA_{K}^{(ij)} \otimes \cdots \otimes \hA_{1}^{(ij)})\|_{F}^2 \ge (R_i-\hat{R}_i)\eta^2 d$. By Proposition~\ref{arp_sig},
\begin{align}
\begin{split}
    \frac{1}{dT} \|(\Phi^{(i)} - \hat{\Phi}^{(i)}) \h{X}_i^{\prime}\|_F^2 &\overset{\text{P}}{\longrightarrow} \frac{1}{d} \text{tr}\left((\Phi^{(i)} - \hat{\Phi}^{(i)}) \mathbb{E}(\frac{1}{T} \hX_i^{\prime} \hX_i) (\Phi^{(i)} - \hat{\Phi}^{(i)})^{\prime} \right) \\
    &\ge \frac{1}{d} \lambda_{\min}\left(\mathbb{E}(\frac{1}{T} \hX_i^{\prime} \hX_i)\right) \|\Phi^{(i)} - \hat{\Phi}^{(i)}\|_{F}^2 \\
    &\ge \lambda_{\min}\left(\mathbb{E}(\frac{1}{T} \hX_i^{\prime} \hX_i)\right) (R_i-\hat{R}_i) \eta^2 > 0.
\end{split}
\end{align}
Note for $i$ such that $\hat{R}_i < R_i$, we have $\|\Phi^{(i)} - \hat{\Phi}^{(i)}\|_{F}^2 \ge (R_i-\hat{R}_i) \eta^2 d$; for $j$ such that $\hat{R}_j \ge R_j$, by Theorem~\ref{phi_convergence} we have $\|\Phi^{(j)} - \hat{\Phi}^{(j)}\|_{F}^2 \le O_p (\frac{d^2}{T})$. Thus,
\begin{align}\label{4.4}
\begin{split}
    \frac{1}{dT} \|(\Phi - \hat{\Phi}) \bar{\h{X}}^{\prime}\|_F^2 &\overset{\text{P}}{\longrightarrow} \frac{1}{d} \text{tr}\left((\Phi - \hat{\Phi}) \Sigma (\Phi - \hat{\Phi})^{\prime}\right) \\
    &\ge \frac{1}{d} \lambda_{\min}(\Sigma) \|\Phi - \hat{\Phi}\|_{F}^2 \\
    &\ge \lambda_{\min}(\Sigma) (R_i-\hat{R}_i) \eta^2 + O_p(\frac{d}{T})> 0.
\end{split}
\end{align}
Again by Lemma~{\ref{XE}}, we have
\begin{equation}\label{4.5}
    \frac{1}{dT} \text{tr}\left( (\Phi - \hat{\Phi})\bar{\h{X}}^{\prime} \bar{\h{E}} \right) \le \frac{1}{dT} \|\Phi - \hat{\Phi}\|_{F} \|\bar{\h{X}}^{\prime} \bar{\h{E}} \|_{F} \le  O_p (\sqrt{\frac{d}{T}}) \to 0.
\end{equation}
The last inequality comes from Lemma~\ref{XE}. Also for true K-ranks $\h{R}_{p}$, we have $V(\h{R}_{p}) \le  \frac{1}{dT} \| E\|_{F}^2$ by the definition of estimators. Thus, taking (\ref{4.4}), (\ref{4.5}) in (\ref{4.3}) we have
\begin{equation}\label{4.6}
    V(\hat{\h{R}}_{\hat p}) - V(\h{R}_{p}) \ge c_1 > 0,
\end{equation}
where $c_1$ is a constant determined by $\lambda_{\min}(\Sigma)$, $R_i-\hat{R}_i$, $p$ and $\eta$. This implies $\log\left(V(\hat{\h{R}}_{\hat{p}}) / V(\h{R}_{p}) \right) > c_2 > 0$, where $c_2$ is a constant determined by $c_1$ and $\frac{1}{dT} \| E\|_{F}^2$. However, we have that $g(d,T) (\sum_{i=1}^{p} R_i - \sum_{i=1}^{\hat{p}} \hat{R}_{i})  \to 0$. So it follows that (\ref{4.2}) holds, which implies (\ref{4.1}) holds.

\end{proof}

\subsection{Proof of Theorem~\ref{proj_arp_multi}}
\begin{proof}[Proof of Theorem~\ref{proj_arp_multi}] Lemma~\ref{hessian} reveals that if $\|\bar{\Phi} -\Phi\|_F = O_p(\frac{1}{\sqrt{T}})$ then we have $\|\bar{\ha}_k - \ha_k\|_F = O_p(\frac{1}{\sqrt{T}})$ in one term TenAR(1) model, where $\ha_k = \vect(\hA_k)$, $k=1,\cdots,K$. Now we are ready to prove the Theorem~\ref{proj_arp_multi}. We let $\Psi = \mathcal{R}(\Phi)= \ha_1 \circ \cdots \circ \ha_K$ and $\mat{\beta}_k = \vect(\ha_1 \circ \cdots \circ \ha_{k-1} \circ \ha_{k+1} \circ \cdots \circ \ha_K )$. Note that $\|\hA_k\|_F = \|\ha_k\|_F = 1$ and recall we also require $\|\bar{\hA}_k\|_F = 1$, for $k < K$. For any $k=1,\cdots,K$, the gradient condition of (\ref{multiprovec}) is given by
\begin{equation}{\label{th1.1}}
    \bar{\mat{a}}_k \bar{\mat{\beta}}_k^{\prime} \bar{\mat{\beta}}_k - \bar{\Psi}_{(k)}\bar{\mat{\beta}}_k = 0.
\end{equation}
Replacing $\bar{\Psi}_{(k)}$ by $\ha_k \mat{\beta}_k^{\prime} + \bar{\Psi}_{(k)} - \ha_k \mat{\beta}_k^{\prime}$ in (\ref{th1.1}), we have
\begin{equation}\label{P1.2}
    (\bar{\ha}_k - \ha_k) \mat{\beta}_k^{\prime}\mat{\beta}_k + \ha_k (\bar{\mat{\beta}}_k - \mat{\beta}_k)^{\prime} \mat{\beta}_k = (\bar{\Psi}_{(k)} - \ha_{k} \mat{\beta}_{k}^{\prime})\mat{\beta}_{k} + o_p (T^{-1/2}).
\end{equation}
Let $\underline{\Psi}_k = \ha_1 \circ \cdots \circ (\bar{\ha}_k - \ha_k) \circ \cdots \circ \ha_K$ and $\bar{\Psi}_{k} = \bar{\ha}_1 \circ \cdots \circ \ha_k \circ \cdots \circ \bar{\ha}_K$, then by Proposition~\ref{basic_properties} (ii) we can rewrite (\ref{P1.2}) as,
\begin{equation*}
    (\underline{\Psi}_k)_{(k)} \beta_k + (\bar{\Psi}_k - \Psi)_{(k)} \mat{\beta}_{k}  = (\bar{\Psi} - \Psi)_{(k)}\mat{\beta}_{k} + o_p (T^{-1/2}) 
\end{equation*}
Since $\|\ha_k\|_F = \|\bar{\ha}_k\|_F = 1$, it follows that $(\bar{\ha}_k - \ha_k)^{\prime} \ha_k = o_p(T^{-1/2})$ and $(\bar{\boldsymbol{\beta}}_K - \boldsymbol{\beta}_K)^{\prime} \boldsymbol{\beta}_K = o_p(T^{-1/2})$, $1 \le k \le K-1$. Then above equations can be  further simplified to,
\begin{equation}\label{P1.3}
\begin{split}
    (\underline{\Psi}_k)_{(k)} \mat{\beta}_k + (\underline{\Psi}_K)_{(k)} \mat{\beta}_{k}  &= (\bar{\Psi} - \Psi)_{(k)}\mat{\beta}_{k} + o_p (T^{-1/2}), \, 1 \le k \le K-1\\
    \bar{\ha}_K - \ha_K &= (\bar{\Psi} - \Psi)_{(K)}\mat{\beta}_{K} + o_p (T^{-1/2})
\end{split}
\end{equation}
Taking the last formula that $k=K$ into the other equation that $k<K$ in (\ref{P1.3}), we have
\begin{equation}\label{P1.4}
    (\bar{\ha}_k - \ha_k) \mat{\beta}_k^{\prime} \mat{\beta}_k = (\bar{\Psi} - \Psi)_{(k)}\mat{\beta}_{k} - \left(\ha_1\circ \cdots \circ \ha_{K-1} \circ \left((\bar{\Psi} - \Psi)_{(K)}\mat{\beta}_{K}\right) \right)_{(k)} \mat{\beta}_{k} + o_p (T^{-1/2}).
\end{equation}
Let $I =  \vect\left((\bar{\Psi} - \Psi)_{(k)}\mat{\beta}_{k}\right)$ and $II = (\mat{\beta}_k^{\prime} \otimes \boldsymbol{I}_{d_k}) \vect\left(\left(\ha_1\circ \cdots \circ \ha_{K-1} \circ ((\bar{\Psi} - \Psi)_{(K)}\mat{\beta}_{K}) \right)_{(k)} \right)$. Taking vectorization on both sides of (\ref{P1.4}), we have
\begin{align}\label{P1.5}
\vect(\hat{\ha}_k - \ha_k) \|\mat{\beta}_k\|^2_{F} 
= I - II + o_p (T^{-1/2}).
\end{align}
Since 
\begin{equation*}
\sqrt{T} \vect( (\bar{\Psi} - \Psi)_{(k)} ) \overset{\text{d}}{\longrightarrow}  N(0, \h{Q}_{k} \Xi_{1} \h{Q}_{k}^{\prime}), 1 \le k \le K,
\end{equation*}
where $\h{Q}_k$ are vector permutation matrices defined in Appendix~{\ref{appendix:basic}} with property in Proposition~{\ref{permutation_properties}}~(i). In (\ref{P1.5}), the first term $I$ is asymptotic normal distributed,
\begin{equation}\label{P1.6}
    \sqrt{T} \cdot I \overset{\text{d}}{\longrightarrow}  (\h{\beta}_k^{\prime} \otimes \boldsymbol{I}_{d_k}) N(0, \h{Q}_{k} \Xi_{1} \h{Q}_{k}^{\prime}).
\end{equation}
Using properties in Proposition~\ref{basic_properties}, after some algebra, we can further simplify II as, 
\begin{equation*}
    II = (\boldsymbol{\beta}_{k}^{\prime} \otimes \ha_k) (\boldsymbol{I}_{d_K} \otimes \h{\beta}_k \boldsymbol{\beta}_K^{\prime})  \boldsymbol{P}_{d_K, d_1\cdots d_{K-1}} \vect\left((\bar{\Psi} - \Psi)_{(K)}\right)
\end{equation*}
Thus, the second term $II$ is also asymptotic normal,
\begin{equation}\label{P1.7}
    \sqrt{T}\cdot II \overset{\text{d}}{\longrightarrow} (\boldsymbol{\beta}_{k}^{\prime} \otimes \ha_k) (\boldsymbol{I}_{d_K} \otimes \boldsymbol{\beta}_k \boldsymbol{\beta}_K^{\prime})  \boldsymbol{P}_{d_K,d_1\cdots d_{K-1}} N(0, \h{Q}_{K} \Xi_{1} \h{Q}_{K}^{\prime})
\end{equation}
By (\ref{P1.4}) to (\ref{P1.7}), we have for $k < K$,
\begin{equation}\label{P1.8}
    \sqrt{T} \vect(\bar{\ha}_k - \ha_k) \overset{\text{d}}{\longrightarrow} \|\boldsymbol{\beta}_k\|_{F}^{-2} \left( (\boldsymbol{\beta}_k^{\prime} \otimes \boldsymbol{I}_{d_k}) \boldsymbol{Q}_k - (\boldsymbol{\beta}_{k}^{\prime} \otimes \ha_k) (I_{d_K} \otimes \boldsymbol{\beta}_k \boldsymbol{\beta}_K^{\prime})  \boldsymbol{P}_{d_K,d_1\cdots d_{K-1}} \boldsymbol{Q}_K \right) N(0,\Xi_1) 
\end{equation}
For $k=K$, by the last formula in (\ref{P1.3}), we have
\begin{equation}\label{P1.9}
    \sqrt{T} \vect(\bar{\ha}_K - \ha_K) \overset{\text{d}}{\longrightarrow} (\boldsymbol{\beta}_K^{\prime} \otimes \boldsymbol{I}_{d_K}) \boldsymbol{Q}_K N(0,\Xi_1) 
\end{equation}
The theorem follows from (\ref{P1.8}) and (\ref{P1.9}).

\end{proof}
\section{Additional Simulations}\label{Appendix:sim}

Following are additional simulation results of Simulation I mentioned in Section~\ref{sec:simulations}. From Figure~\ref{R1_TenAR(1)_IID} to \ref{R1_TenAR(1)_MLE}, the true model is one-term TenAR(1) under different settings and we compare four estimators, one-term PROJ, one-term LSE, one-term MLE and VAR. From Figure \ref{R1_TenAR(2)_IID} to \ref{R1_TenAR(2)_MLE}, the true model is one-term TenAR(2) under different settings. For each figure, the order of tensor we simulate increases from top to bottom, taking values in $(d_1, d_2, d_3) = (3,3,3)$, $(4,4,4)$, and $(5,5,5)$. The length of time series increases from left to right as $T=300$, $500$, $800$, $5000$. We can easily see that the PROJ, LSE, MLE estimators outperform the VAR estimators in all figures and the have similar patterns discussed in Section~\ref{sec:simulations}.

\begin{figure}[!ht]
    \centering
    \includegraphics[width = 16cm]{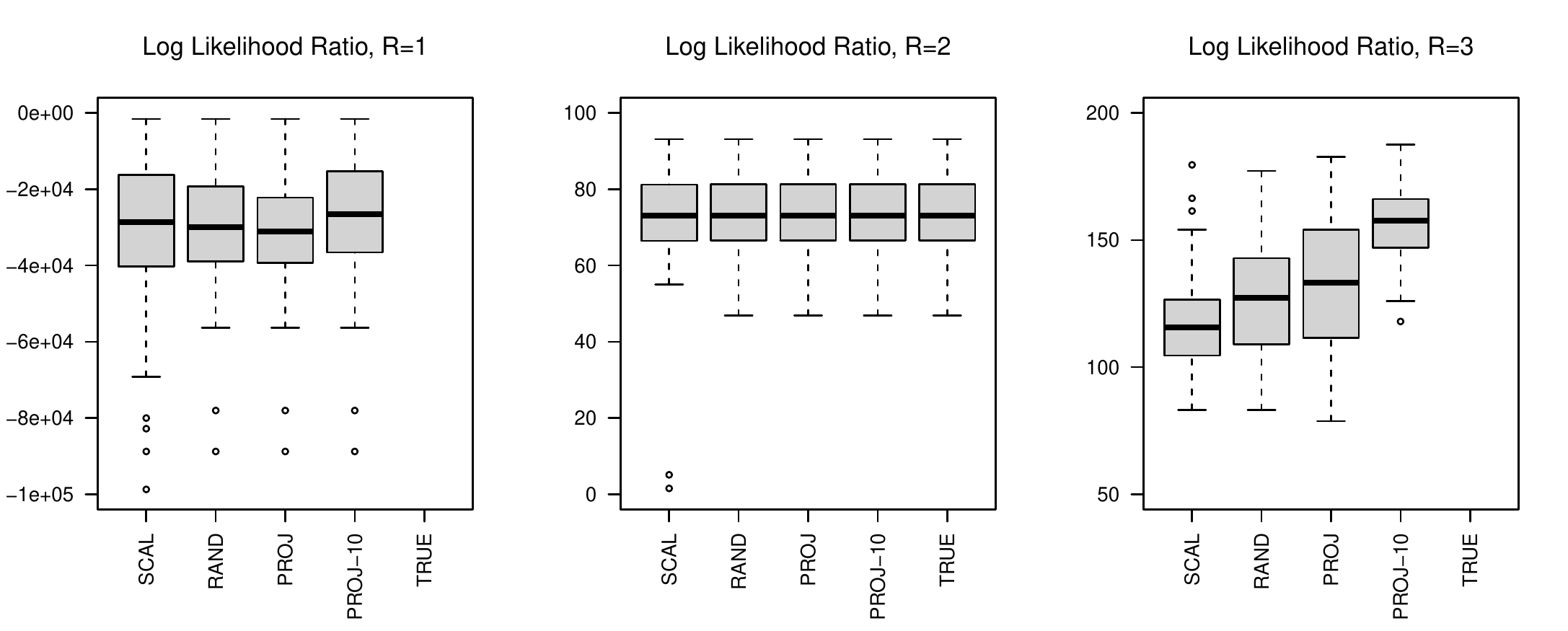}
    \caption{Log likelihood of LSE iterative estimation with different initial values, scaler matrices (SCAL), projection estimator (PROJ), best of $10$ initial values in a neighborhood of PROJ (PROJ-10) and true values (TRUE). True model is two-term ($R=2$) TenAR(1) model, $(d_1,d_2,d_3) = (5,5,5)$, $T=1000$, under Setting I.}\label{fig: likelihood_mle_iid}
\end{figure}

\begin{figure}[!ht]
    \centering
    \includegraphics[width = 16cm]{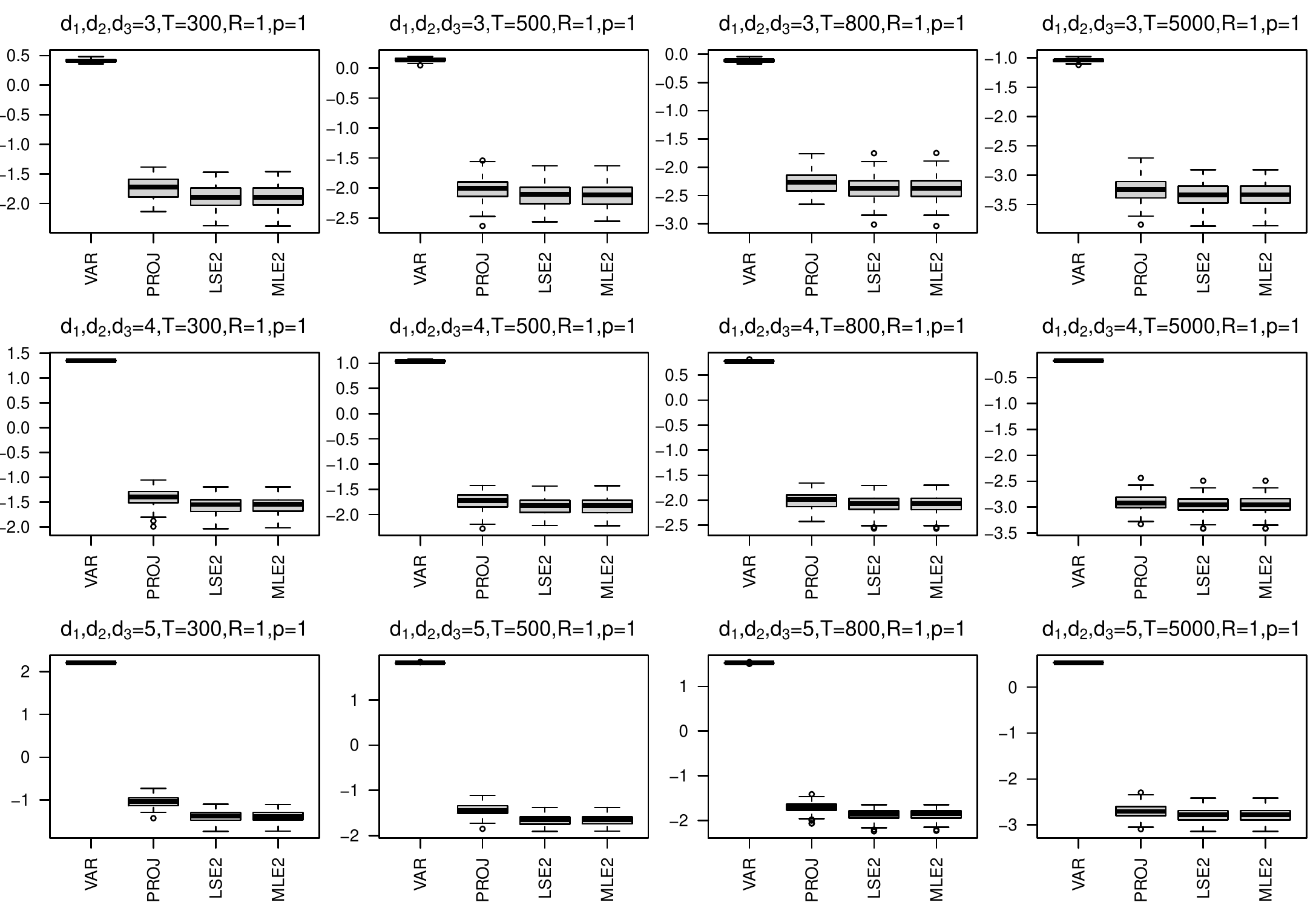}
    \caption{Estimation errors in the log scale. True model is one-term TenAR(1) under setting~I. Comparison of VAR, PROJ, LSE, MLE. We repeat the simulation 100 times. For each row, we fixed the dimension while let $T=300, 500, 800, 5000$. For each column, $T$ is fixed while $(d_1, d_2, d_3) = (3,3,3), (4,4,4), (5,5,5)$.}
    \label{R1_TenAR(1)_IID}
\end{figure}

\begin{figure}[!ht]
    \centering
    \includegraphics[width = 16cm]{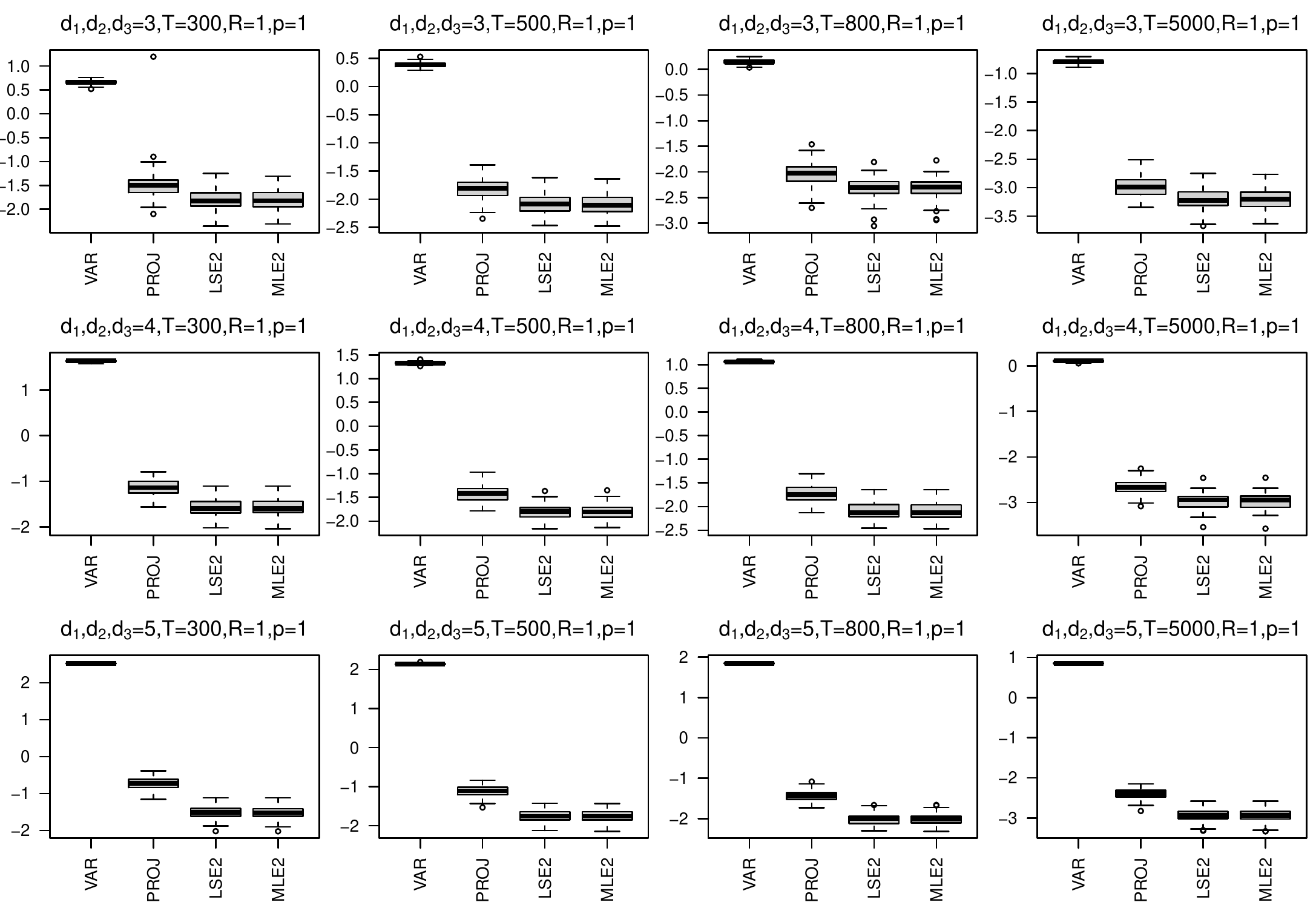}
    \caption{Estimation errors in the log scale. True model is one-term TenAR(1) under setting~II. Comparison of VAR, PROJ, LSE, MLE. We repeat the simulation 100 times. For each row, we fixed the dimension while let $T=300, 500, 800, 5000$. For each column, $T$ is fixed while $(d_1, d_2, d_3) = (3,3,3), (4,4,4), (5,5,5)$.}
    \label{R1_TenAR(1)_SVD}
\end{figure}

\begin{figure}[!ht]
    \centering
    \includegraphics[width = 16cm]{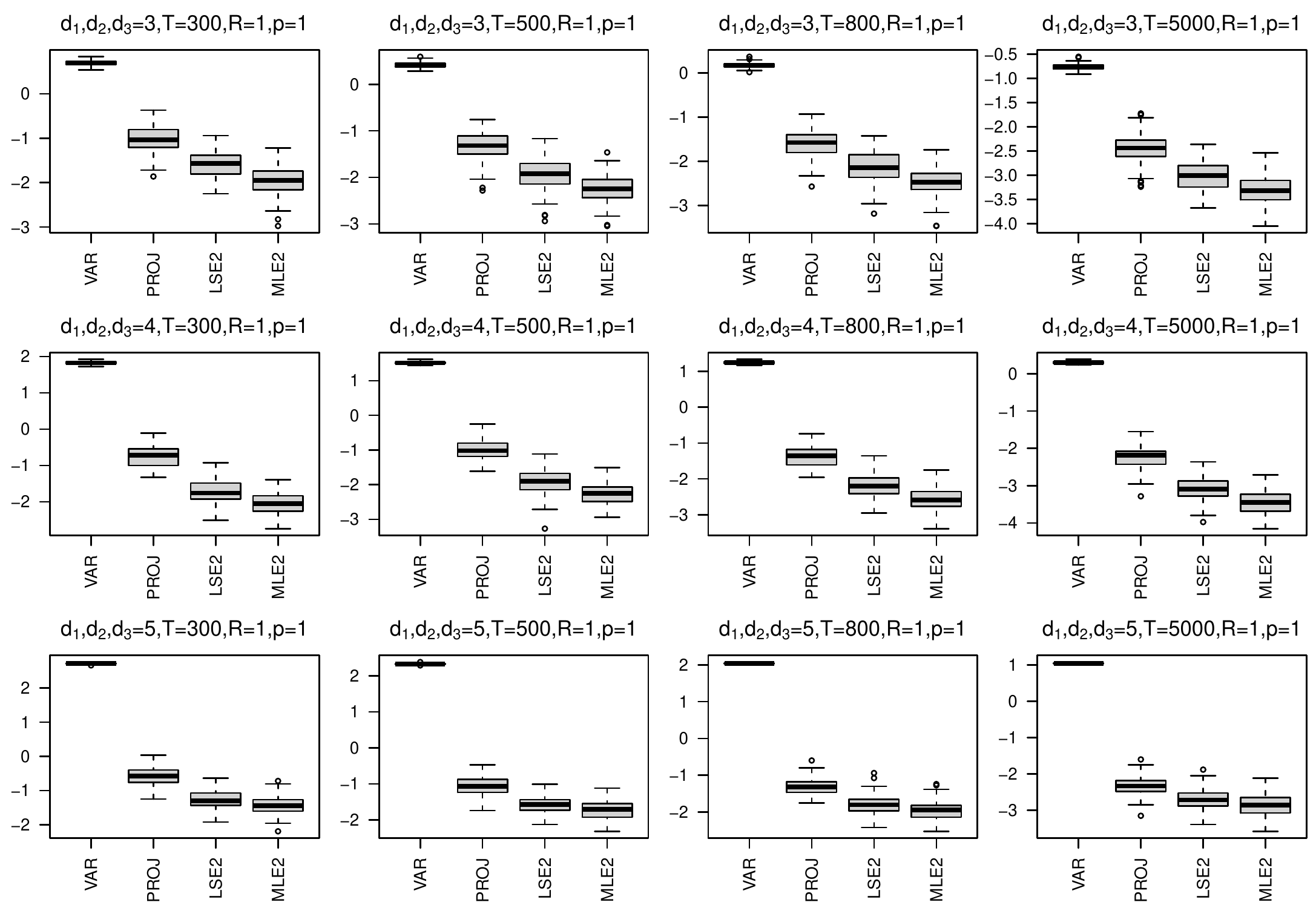}
    \caption{Estimation errors in the log scale. True model is one-term TenAR(1) under setting~III. Comparison of VAR, PROJ, LSE, MLE. We repeat the simulation 100 times. For each row, we fixed the dimension while let $T=300, 500, 800, 5000$. For each column, $T$ is fixed while $(d_1, d_2, d_3) = (3,3,3), (4,4,4), (5,5,5)$.}
    \label{R1_TenAR(1)_MLE}
\end{figure}

\begin{figure}[!ht]
    \centering
    \includegraphics[width = 16cm]{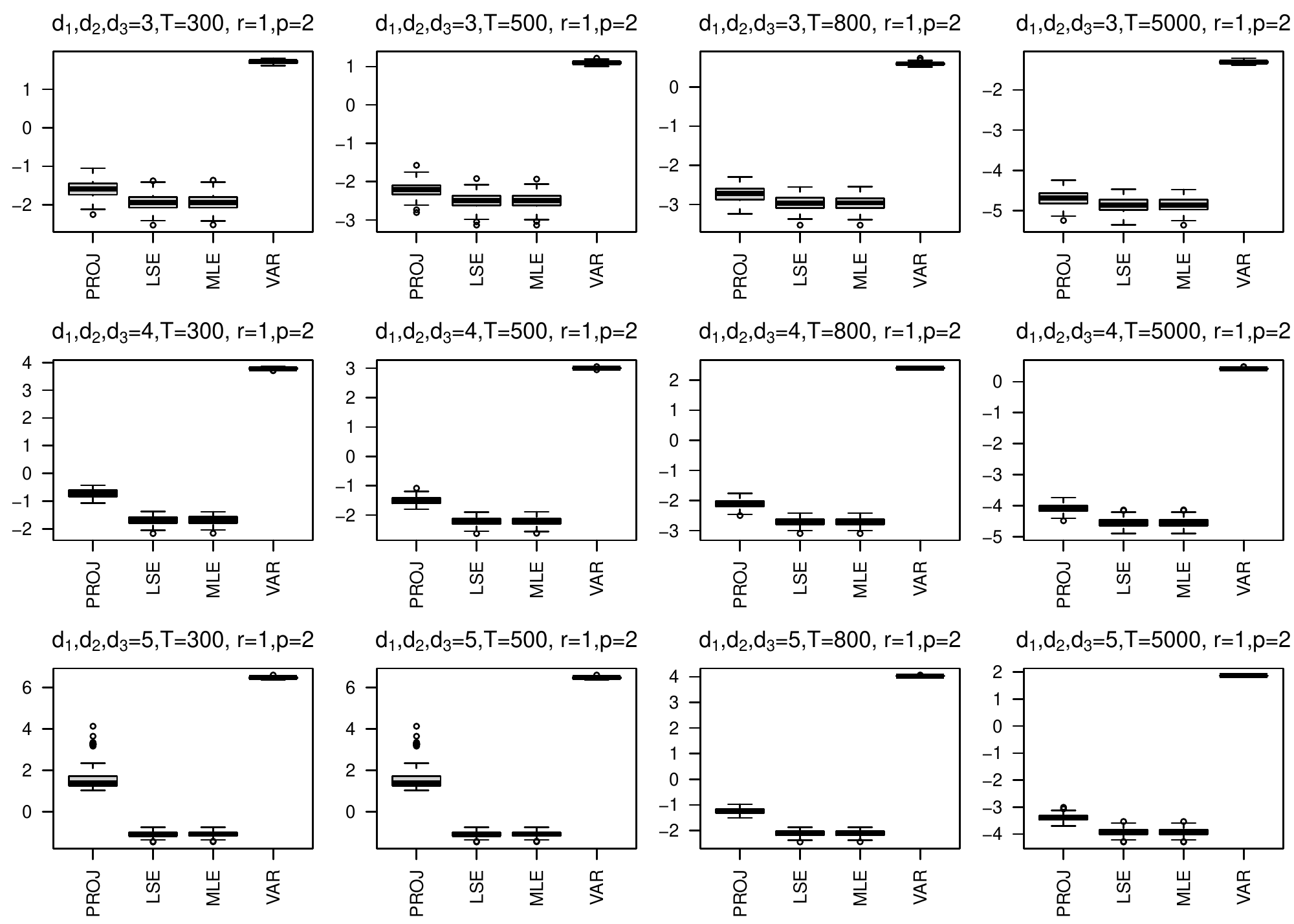}
    \caption{Estimation errors in the log scale. True model is TenAR(2) with $\h{R} = (1,1)$ under setting~I. Comparison of VAR, PROJ, LSE, MLE. We repeat the simulation 100 times. For each row, we fixed the dimension while let $T=300, 500, 800, 5000$. For each column, $T$ is fixed while $(d_1, d_2, d_3) = (3,3,3), (4,4,4), (5,5,5)$.}
    \label{R1_TenAR(2)_IID}
\end{figure}

\begin{figure}[!ht]
    \centering
    \includegraphics[width = 16cm]{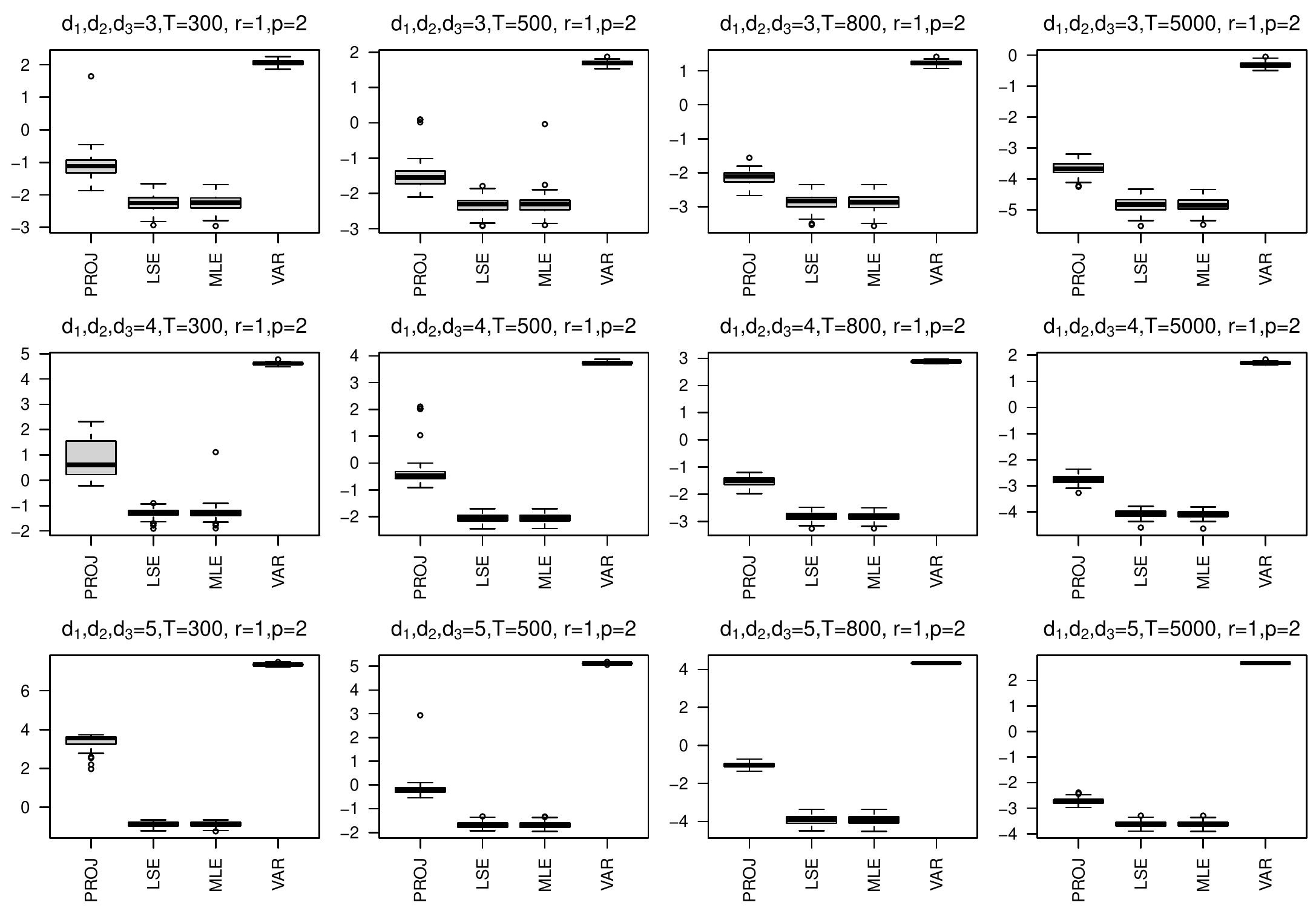}
    \caption{Estimation errors in the log scale. True model is TenAR(2) with $\h{R} = (1,1)$ under setting~II. Comparison of VAR, PROJ, LSE, MLE. We repeat the simulation 100 times. For each row, we fixed the dimension while let $T=300, 500, 800, 5000$. For each column, $T$ is fixed while $(d_1, d_2, d_3) = (3,3,3), (4,4,4), (5,5,5)$.}
    \label{R1_TenAR(2)_SVD}
\end{figure}

\begin{figure}[!ht]
    \centering
    \includegraphics[width = 16cm]{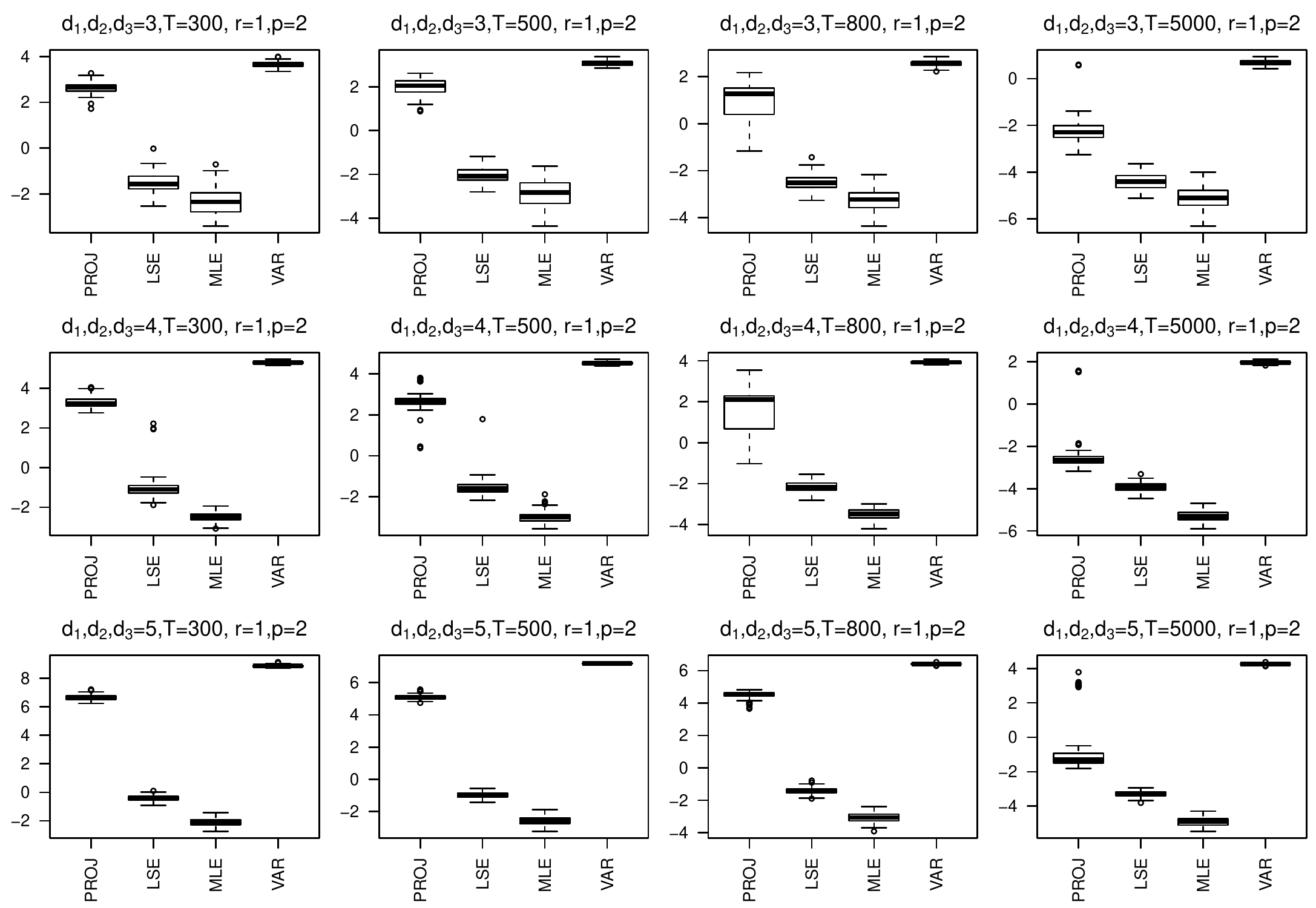}
    \caption{Estimation errors in the log scale. True model is TenAR(2) with $\h{R} = (1,1)$ under setting~III. Comparison of VAR, PROJ, LSE, MLE. We repeat the simulation 100 times. For each row, we fixed the dimension while let $T=300, 500, 800, 5000$. For each column, $T$ is fixed while $(d_1, d_2, d_3) = (3,3,3), (4,4,4), (5,5,5)$.} 
    \label{R1_TenAR(2)_MLE}
\end{figure}


\begin{table}[ht]
\centering
\begin{tabular}{@{}rrrrrcrrr@{}}
\toprule
                       &       & \multicolumn{3}{c}{R=2} & \phantom{abc}& \multicolumn{3}{c}{R=3} \\ \cmidrule(l){3-5} \cmidrule(l){7-9}
                       &       &  T=200  &  T=500  & T=1000 &&  T=200  &  T=500  & T=1000 \\ \midrule
\multirow{3}{*}{II-1}   &(3,3,3)&0.89  &0.94   &0.99  &&0.63   &0.87   &0.92  \\
                       &(4,4,4)&0.97   &1      &1     &&0.84   &0.96   &1     \\
                       &(5,5,5)&0.98   &1      &1     &&0.92   &0.99   &1     \\ \midrule
\multirow{3}{*}{II-2}   &(3,3,3)&0.89  &0.92   &0.99  &&0.61   &0.87   &0.92  \\
                       &(4,4,4)&0.98   &1      &1     &&0.91   &0.98   &1     \\
                       &(5,5,5)&0.98   &1      &1     &&0.99   &0.99   &1     \\ \midrule
\multirow{3}{*}{III-1} &(3,3,3)&0.75   &0.90   &0.96  &&0.56   &0.82   &0.91  \\
                       &(4,4,4)&0.75   &0.88   &0.95  &&0.62   &0.86   &0.99  \\
                       &(5,5,5)&0.79   &0.97   &0.99  &&0.64   &0.93   &0.97  \\ \midrule
\multirow{3}{*}{III-2} &(3,3,3)&0.79   &0.90   &0.96  &&0.50   &0.80   &0.91  \\
                       &(4,4,4)&0.51   &0.76   &0.88  &&0.48   &0.79   &0.93  \\
                       &(5,5,5)&0.28   &0.58   &0.89  &&0.24   &0.61   &0.78  \\ \bottomrule
\end{tabular}
\caption{The empirical frequencies that the true number of terms is selected by the information criteria ${\ic}_1$ in (\ref{ic1}), and ${\ic}_2$ in (\ref{ic2}), out of 100 repetitions. The signal strength $\rho = 0.5$. II-1 and II-2 represent results of ${\ic}_1$ and ${\ic}_2$ under setting II. III-1 and III-2 represent results of ${\ic}_1$ and ${\ic}_2$ under setting III. }\label{choose_2}
\end{table}

\end{appendices}

\end{document}